\documentclass[11pt]{amsart}

\usepackage{graphicx}
\usepackage{framed}
\usepackage{ulem}
\usepackage{pgf,tikz,pgfplots}
\usepackage{float}
\usepackage{mathrsfs}
\usepackage{mathtools}
\usepackage{mathpple}
\usepackage[all,cmtip]{xy}
\usepackage{amsmath}
\usepackage{tikz-cd}
\usepackage{mathdots}
\usepackage{yhmath}
\usepackage{cancel}
\usepackage{color}
\definecolor{deepgreen}{rgb}{0.0,0.35,0.0}
\usepackage{siunitx}
\usepackage{array}
\usepackage{multirow}
\usepackage{amssymb}
\usepackage{gensymb}
\usepackage{tabularx}
\usepackage{booktabs}
\usepackage{makecell}
\usetikzlibrary{fadings}
\usetikzlibrary{patterns}
\usetikzlibrary{shapes}
\usetikzlibrary{arrows.meta,calc,decorations.markings}

\usetikzlibrary{matrix,arrows,decorations.pathmorphing}
\usepackage{enumerate}
\usepackage{enumitem}
\usepackage{extarrows}
\usepackage{appendix}

\usepackage{circledsteps}
\usepackage[bookmarks=true, bookmarksopen=true]{hyperref}

\DeclareMathOperator{\Id}{Id}

\DeclareMathOperator{\cont}{cont}

\theoremstyle{plain}
\newtheorem{thm}{Theorem}[section]
\newtheorem{thm-defn}[thm]{Theorem/Definition}
\newtheorem{lem}[thm]{Lemma}
\newtheorem{lem-defn}[thm]{Lemma/Definition}
\newtheorem{prop}[thm]{Proposition}

\newtheorem{defn}[thm]{Definition}

\newtheorem{rmk}[thm]{Remark}

\hypersetup{colorlinks=true,linkcolor=blue,citecolor=blue,urlcolor=blue}

\newcommand{\R}{\mathbb R}
\newcommand{\C}{\mathbb C}
\newcommand{\Z}{\mathbb Z}
\newcommand{\dd}{\mathrm d}
\newcommand{\re}{\operatorname{Re}}
\newcommand{\im}{\operatorname{Im}}

\newcommand{\Crit}{\operatorname{Crit}}

\newcommand{\J}{\mathcal J}
\newcommand{\Kk}{\mathcal K}

\title[Lefschetz Thimbles and Resurgence]{Picard-Lefschetz theory and alien calculus: a case study}

\author[Si Li]{Si Li}
\author[Yong Li]{Yong Li}
\author[Xinxing Tang]{Xinxing Tang}

\thanks{S.~Li: Yau Mathematical Sciences Center, Tsinghua University, Beijing, China.
Email: \texttt{sili@mail.tsinghua.edu.cn}.}

\thanks{Y.~Li: Beijing Institute of Mathematical Sciences and Applications,
Beijing, China.
Email: \texttt{liyong@bimsa.cn}.}

\thanks{X.~Tang: Beijing Institute of Mathematical Sciences and Applications,
Beijing, China.
Email: \texttt{tangxinxing@bimsa.cn}.}

\date{}

\keywords{exponential integrals, Lefschetz thimbles, Picard--Lefschetz theory,
Stokes phenomenon, resurgence, alien calculus, Airy integral, Bessel model,
Gamma function}

\begin{document}

\begin{abstract}
We compare Picard--Lefschetz theory and resurgence in three basic
one-dimensional exponential integrals: the Airy model, the Bessel
model, and the Gamma model.  On the Picard--Lefschetz side,
we describe the Lefschetz thimbles and compute the  connecting
trajectories between critical points appearing at Stokes phases.  On the resurgent side, we analyze
the Borel singularities of the saddle expansions and use alien operators to
recover the same Stokes coefficients.  These examples serve as explicit
finite-dimensional test cases for the dictionary between thimble
wall-crossing and alien calculus.
\end{abstract}
\maketitle
\tableofcontents

\section{Introduction}
Complex exponential integrals form a central object of study in asymptotic analysis, complex geometry, and mathematical physics. A prototypical example is an integral of the form
\begin{equation*}
I_\Gamma(\hbar)=\int_\Gamma e^{-S(z)/\hbar}\,\mu
\end{equation*}
where $S(z)$ is a holomorphic function on a complex manifold, $\mu$ is a holomorphic form, and $\Gamma$ is a contour (or cycle) in a suitable relative homology class $[\Gamma]$. We take $\hbar$ to be valued in $\C^\times$.  Such integrals naturally arise in semiclassical analysis, in the study of differential equations via Laplace-type representations, and in complexified path integrals in quantum field theory.

The asymptotic behavior of $I_\Gamma(\hbar)$ for small $\hbar$ is controlled by the critical points of the phase function
$$
\dd S=0.
$$
Near non-degenerate critical points $\{p_\sigma\}$, one obtains a formal asymptotic expansion of the form
$$
I_\Gamma(\hbar)\sim \sum_{\sigma}e^{-S(p_\sigma)/\hbar}\hbar^{n/2}\left(c_{\sigma,0}+c_{\sigma,1}\hbar+c_{\sigma,2}\hbar^2+\cdots\right)
$$
where the coefficients are determined by the local Taylor expansion of $S$ (see for example \cite{AGV2,Pham1983}). However, this expansion is typically divergent and only asymptotic in nature, reflecting the global complexity of the integral.

A powerful framework to capture the full analytic structure is provided by Picard–Lefschetz theory, which describes how integration cycles can be decomposed into a basis of Lefschetz thimbles. Each thimble $\J_\sigma$ is obtained by downward gradient flow of $\re(S/\hbar)$ associated to a critical point $p_\sigma$. If we decompose $\Gamma$ into Lefschetz thimbles in a suitable relative homology class
$$
[\Gamma]=\sum_\sigma n_\sigma [\J_\sigma],
$$
then one obtains a representation of the integral by
$$
I_\Gamma(\hbar)=\sum_{\sigma}n_\sigma\int_{\J_\sigma} e^{-S(z)/\hbar}\,\mu.
$$
We collect basics of this construction relevant to our discussion in Section \ref{sec:general}.

A crucial feature of this decomposition is that it depends on the phase of the parameter $\hbar$. As $\arg \hbar$ varies, the gradient flow structure changes, and the thimble decomposition undergoes discontinuous jumps across certain rays, known as Stokes lines. These jumps of $\{n_\sigma\}$ are governed by the Picard–Lefschetz formula: the discontinuous change in asymptotic expansions corresponds to a change in the basis of integration cycles. Similar phenomenon appears if we vary parameters in defining the phase function $S$.

On the other hand, the theory of resurgence developed by \'Ecalle (\cite{Ecalle81a},\cite{Ecalle81b},\cite{Ecalle85}) provides a general framework to study global analytic natures of divergent asymptotic expansions.  The key idea is that the Borel transform of a divergent formal series often defines a function with controlled singularities in the complex Borel plane, and these singularities encode the global analytic structure of the original problem. The process of Borel–Laplace resummation provides a systematic way to reconstruct actual functions from their asymptotic expansions, at least within suitable sectors of the complex plane. More concretely, given a formal power series with $\re \alpha>0$,
$$
\tilde\phi(\hbar)=\sum_{n\geq 0} a_n \hbar^{n+\alpha}
$$
its Borel transform is defined by 
$$
\hat \phi(\xi)=\sum_{n=0}^\infty \frac{a_n}{\Gamma(n+\alpha)}\xi^{n+\alpha-1}
$$
which may be holomorphic on the local logarithmic Riemann surface at $\xi=0$ and admit analytic continuation beyond it. If the resulting function has sufficiently mild growth, one can define the Laplace transform
$$
\mathcal L_\theta\hat\phi(\hbar)=\int_0^{e^{i\theta}\infty}e^{-\xi/\hbar} \hat\phi(\xi)\dd\xi
$$
yielding a resummed function depending on a direction $\theta$. 

However, when singularities of the Borel transform lie on certain directions,
one has to pass around them from one side or the other, leading to possible
ambiguities in the resummation.
These ambiguities are precisely governed by the Stokes phenomenon. A central insight of resurgence theory is that the discontinuities across Stokes lines are not arbitrary but are organized by a rich algebraic structure encoded in alien operators. These alien operators measure the singular behavior of the Borel transform near its singular points. See Section \ref{sec:resurgence} for a review. 
$$
\text{Borel-Laplace transform}\Longrightarrow \text{Analytic continuation}\Longrightarrow \text{Alien operators} 
$$

The key geometric objects linking the above two viewpoints are vanishing cycles. Consider a Lefschetz thimble $\J$ from a non-degenerate critical point $p$ obtained by the gradient flow. Assume 
$$
\hbar=|\hbar|e^{i\theta}.
$$
Then the values of $e^{-i\theta}S$ on $\J$ form a segment 
$$
e^{-i\theta}(S(p)+\xi), \qquad \xi\in e^{i\theta}[0,\infty)
$$
\begin{figure}[!htbp]
\centering
\begin{tikzpicture}[scale=0.85,>=Latex]

\tikzset{
  thimblebd/.style={red!70!black,line width=1.1pt},
  fib/.style={green!70!black,line width=1.0pt},
  proj/.style={yellow!70!black,dashed,line width=0.9pt},
  ray/.style={red!70!black,line width=1.1pt},
  maparrow/.style={->,black,line width=1.0pt},
  pt/.style={circle,fill=green!70!black,inner sep=1.7pt}
}
\node at (4,3.5) {\(\mathcal J\)};
\draw[thimblebd]
  (-3.0,2.2)
  .. controls (-2.0,4.1) and (0.8,4.2) ..
  (3.2,4.3);
\draw[thimblebd]
  (-3.0,2.2)
  .. controls (-1.8,1.0) and (0.2,1.7) ..
  (3.2,2.3);
\node[pt,label={[below left]{$p$}}] (p) at (-3.0,2.2) {};
\node[pt,inner sep=1.2pt] (q) at (0.05,1.63) {};
\draw[fib]
  (0.05,1.63)
  .. controls (-0.5,1.8) and (-0.5,3.45) ..
  (0.05,4.08);
\draw[fib,dashed]
  (0.05,4.08)
  .. controls (0.68,3.42) and (0.75,2.72) ..
  (0.05,1.63);
\node[green!70!black] at (0.92,3.2) {\(\sum_\xi\)};
\draw[proj] (p) -- (-3.0,-0.95);
\draw[proj] (q) -- (0.05,-0.25);
\draw[maparrow] (-0.55,1.45) -- (-0.55,-0.25);
\node[right,black] at (-0.35,0.8) {\(S(z)-S(p)\)};
\draw[ray,->] (-2.9,-0.95) -- (3.3,0.45);
\node[below right,black] at (2.45,0.35) {\(e^{i\theta}\mathbb R_{\ge0}\ \subset\ \xi\) -plane};
\node[pt,label={[below left]{$0$}}] at (-2.9,-0.95) {};
\node[pt] at (0.05,-0.29) {};
\node[green!70!black, below right] at (0.05,-0.22) {\(\xi\)};
\end{tikzpicture}
\caption{}
\end{figure}

The fibers
$$
\Sigma_\xi= \J\cap \{S-S(p)=\xi\}
$$
form a family of cycles that shrink to zero size at the critical points, representing the vanishing cycles. The integration over $\J$ can be performed by first integrating over $\gamma_\xi$, and then over $\xi$, resulting in 
$$
\int_{\J}e^{-S/\hbar}\ \mu=e^{-S(p)/\hbar}\int_0^{e^{i\theta}\infty}e^{-\xi/\hbar} \int_{\Sigma_\xi} \frac{\mu}{\dd S}.
$$
Here $\beta=\frac{\mu}{dS}$ is the Gelfand–Leray form of $\mu$ characterized by
$$
\dd S\wedge \beta= \mu.
$$
This is possible since $dS\neq 0$ away from critical points. Thus the Borel transform is precisely related to the integration over vanishing cycles.

Comparing the above two viewpoints, we can build up a precise correspondence between Picard-Lefschetz theory and alien operators, both of which describe the equivalent Stokes phenomenon. This is summarized in the following table (see Section \ref{sec:general} for detailed explanations). 

\medskip
\begin{table}[htbp]
  \centering
 \caption{Picard-Lefschetz v.s. Resurgence} \label{table:correspondence}
\renewcommand{\arraystretch}{1.35}
\begin{tabular}{p{0.45\textwidth}p{0.45\textwidth}}
\toprule
Picard--Lefschetz side &  Resurgent side \\
\midrule
Critical point \(p\) of \(S\) & Formal object \(\widetilde I_p(\hbar)\) \\
Critical value difference \(S(q)-S(p)\) & Borel singularity at \(\omega=S(q)-S(p)\) \\
Stokes phase \(\im(e^{-i\theta}(S(q)-S(p)))=0\) & Singular direction \(\arg \omega\) in the Borel plane \\
Gradient trajectory from \(q\) to \(p\) at the Stokes phase & Alien derivative \(\Delta^+_{S(q)-S(p)}\widetilde I_p\) producing the sector at \(q\) \\
Signed count of connecting trajectories & Stokes/alien coefficient in the pointed alien operator \\
Jump of thimble basis \(\mathbf J^{<}=\mathbf J^{>}R_+\) & Stokes automorphism \(\mathfrak S^+\widetilde{\mathbf I}=\widetilde{\mathbf I}R_+\) \\
Dual intersection pairing \(\langle [\J_p],[\Kk_q]\rangle\) & Choice of normalization of formal objects and Stokes constants \\
Broken PL chain \(p_0\to p_1\to\cdots\to p_r\) &  Iterated alien \(\dot\Delta^+_{S(p_r)-S(p_{r-1})} \cdots \dot\Delta^+_{S(p_1)-S(p_0)}\) \\
\bottomrule
\end{tabular}
\end{table}

\vskip 1em

In this paper, we explain this correspondence by a case study of three concrete examples: the Airy, Bessel, and Gamma models. We perform explicit computations from both the Picard–Lefschetz theory and the resurgence theory, and illustrate how the alien/Stokes coefficients reproduce the
Picard–Lefschetz trajectory counts. These three chosen examples represent three typical situations on connecting trajectories between critical points when crosing the Stokes line. The airy model has a unique connecting trajectory; the Bessel model has multiple connecting trajectories; and the Gamma model has  infinite many critical points and connecting trajectories. The phenomenon of broken lines appears in the example of Gamma model, and we illustrate how this is captured by the iterated alien operators on the resurgence side.

All the above models are finite dimensional. They are used to make explicit
the dictionary between Picard-Lefschetz theory and alien calculus. In the infinite dimensional case, this dictionary gives a useful strategy for studying quantum field theory, where the Picard--Lefschetz geometry is difficult to compute directly. In the forth-coming work \cite{LLT2026-2}, we will generalize such analysis 
\[
\text{Picard--Lefschetz trajectory count} 
\quad \Longleftrightarrow \quad
\text{alien coefficient}
\]
to an infinite-dimensional Picard–Lefschetz problem of Morse–Floer type on the path space. There the resurgence analysis of the heat kernel leads to nontrivial predictions on solutions of the Morse-flow. 

The relation among Stokes phenomena, exponential integrals, and wall-crossing structures has been explored from several complementary perspectives. In particular, the work of Kontsevich and Soibelman \cite{KS-2022,KS-2024} investigate analyticity and resurgence phenomena in the context of Lefschetz thimbles and their wall-crossing behavior, and develop a Floer-theoretic framework for exponential integrals. One aim of the present work is to introduce the viewpoint of alien calculus into the geometric setting of Picard–Lefschetz theory and wall-crossing, and to show that the interplay with vanishing cycles admits a natural interpretation in terms of the algebraic structures underlying resurgence.

\smallskip
\subsubsection*{Acknowledgments}

The authors would like to thank Gerg\H{o} Nemes, Ryszard Nest, David Sauzin, Hongfei Shu, Shanzhong Sun,
and Yifan Wu for helpful discussions.  We are especially grateful to
Maxim Kontsevich for his inspiring talk in TSIMF, Sanya, China in 2022, which provided
both inspiration and impetus for the preparation of this note. {S.~L. was supported by NSFC No.12571068.  Y.~L. was partially supported by BNSFC No.JR25001.  X.~T. was supported by BNSFC No.JR25001, BNSFC Youth Project Youth Project No.1254041 and NSFC Youth Project No.12501079.}

\medskip
\section{Exponential integrals: Picard-Lefschetz v.s. alien calculus}\label{sec:general}

This section recalls the basic framework that will be used throughout the Airy, Bessel, and Gamma examples below: the Morse flow of exponential integrals, the associated Lefschetz thimbles and their wall-crossing, and the corresponding resurgent interpretation via Borel singularities and alien operators. We refer to classical references for further details, such as Milnor \cite{Milnor-1968} on the local geometry of vanishing cycles, Pham's vanishing-homology approach to the saddle-point method \cite{Pham1983}, Arnol'd--Gusein-Zade--Varchenko's exposition \cite{AGV2}, and Vassiliev's monograph on Picard--Lefschetz theory \cite{Vassiliev2002}. For a modern Morse-theoretic formulation in terms of Lefschetz thimbles, see for example Witten \cite{Witten2011CS}. When a more intrinsic homological framework for admissible integration cycles is required, one may use the rapid-decay homology in the sense of Hien \cite{Hien2009}. In the present note, we first recall the finite-dimensional framework in a general form, but the concrete examples treated below are all one-dimensional; in each of them the relevant hypotheses can be verified directly.

\smallskip
\subsection{The basic data}

Given
\begin{itemize}
    \item a complex manifold $X$ with $\dim_{\C}X=n$;
    \item a holomorphic Morse function $S:X\rightarrow\C$ with suitable growth at $\infty$;
    \item a holomorphic volume form $\mu$ on $X$, locally, $\mu=\mu(z)\dd^nz$;
    \item an oriented cycle $\Gamma$ with $\dim_{\R}\Gamma=n$ such that
    \[\re(S)\big|_{\Gamma}\rightarrow+\infty\text{ at } "\infty" \text{ of }\Gamma.\]
\end{itemize}
We shall study the exponential integrals of the form
\begin{equation}\label{eq:general-integral}
I_\Gamma(\hbar)=\int_\Gamma e^{-S(z)/\hbar}\,\mu.
\end{equation}
The main goals are:
\begin{itemize}
    \item[$\diamond$] to recall the finite-dimensional
    Picard--Lefschetz description of exponential integrals in terms of
    thimbles, dual thimbles, Stokes phases, and connecting trajectories;
    \item[$\diamond$] to compute explicitly, in the Airy, Bessel, and Gamma
    models, the same Stokes matrices from both the Picard--Lefschetz
    wall-crossing of thimbles and the resurgent Stokes automorphisms of
    saddle expansions;
    \item[$\diamond$] to compare the two computations and verify that the
    alien/Stokes coefficients reproduce the Picard--Lefschetz trajectory
    counts.
\end{itemize}

For these purposes, it is convenient to fix a positive parameter \(|\hbar|>0\) and to rotate the action instead of the small parameter. Thus we write
\begin{equation*}
\hbar=|\hbar|e^{i\theta},
\qquad
S_\theta:=e^{-i\theta}S,
\end{equation*}
so that
\begin{equation*}
e^{-S/\hbar}=e^{-S_\theta/|\hbar|}.
\end{equation*}
The steepest-descent geometry is therefore governed by the pair of real-valued functions
\begin{equation*}\label{eq:Ftheta-Gtheta-general}
F_\theta:=\re(S_\theta)=\re(e^{-i\theta}S),
\qquad
G_\theta:=\im(S_\theta)=\im(e^{-i\theta}S).
\end{equation*}

\smallskip
\subsection{Negative gradient flow}

Choose a Hermitian metric $h$ on \(X\) and consider the negative gradient flow of \(F_\theta\),
\begin{equation}\label{eq:negative-gradient-flow-general}
\frac{dz}{ds}=-\nabla_h F_\theta(z).
\end{equation}

Let $g_h:=\operatorname{Re}h$ be the underlying Riemannian metric, and let $\langle\cdot,\cdot\rangle_h$ and $\|\cdot\|_h$ denote the corresponding inner product and norm.  The gradient $\nabla_h f$ of a real-valued function $f$ is characterized by
\[
g_h(\nabla_h f,\xi)=\dd f(\xi),\qquad \forall\,\xi~\in TX.
\]
In a local holomorphic chart $z=(z^1,\dots,z^n)$, write
\[
h=\sum_{i,j=1}^n h_{i\bar j}(z)\,\dd z^i\otimes \dd\bar z^j,
\qquad
(h^{i\bar j})=(h_{i\bar j})^{-1}.
\]
Then the negative gradient flow of $F_\theta=\operatorname{Re}(S_\theta)$ takes the coordinate form
\begin{equation}\label{eq:complex-gradient-standard}
\frac{\dd z^i}{\dd s}
=-2\sum_{j=1}^n h^{i\bar j}(z)\,\partial_{\bar j}F_\theta(z)
=-\sum_{j=1}^n h^{i\bar j}(z)\,\overline{\partial_j S_\theta(z)}
=-\sum_{j=1}^n h^{i\bar j}(z)\,\overline{e^{-i\theta}\partial_j S(z)}.
\end{equation}
In one complex dimension, if $h=h_{z\bar z}(z)\,dz\,d\bar z$, this becomes
\[
\frac{\dd z}{\dd s}=-h^{z\bar z}(z)\,\overline{e^{-i\theta}S'(z)}.
\]
For the flat metric $h_{z\bar z}\equiv 1$, we arrive at  \[\dot z=-\overline{e^{-i\theta}S'(z)}.\]

The fundamental identities along \eqref{eq:negative-gradient-flow-general} are the following.

\begin{lem}\label{lem:F-decreases-G-conserved}
Let \(z(s)\) be a solution of the gradient flow \eqref{eq:negative-gradient-flow-general}. Then
\begin{equation}\label{eq:F-decreases}
\frac{\dd}{\dd s}F_\theta(z(s))=-\lVert \nabla_h F_\theta(z(s))\rVert_h^2\le 0,
\end{equation}
and
\begin{equation}\label{eq:G-conserved}
\frac{\dd}{\dd s}G_\theta(z(s))=0.
\end{equation}
Thus every flow line lies in a level set of \(G_\theta\), while \(F_\theta\) decreases strictly along every non-stationary flow line.
\end{lem}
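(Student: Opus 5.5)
The plan is to use the chain rule for the first identity, and for the second the fact that $S_\theta$ is holomorphic, so $G_\theta$ is the harmonic conjugate of $F_\theta$ and its gradient is $J\nabla_h F_\theta$, which is $g_h$-orthogonal to $\nabla_h F_\theta$.

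\emph{Monotonicity of $F_\theta$.} For a solution $z(s)$ of \eqref{eq:negative-gradient-flow-general}, the chain rule and the defining property of the gradient give
\[
\frac{\dd}{\dd s}F_\theta(z(s))=\dd F_\theta(\dot z)=g_h(\nabla_h F_\theta,\dot z)=-g_h(\nabla_h F_\theta,\nabla_h F_\theta)=-\lVert\nabla_h F_\theta\rVert_h^2\le 0 .
\]
This is \eqref{eq:F-decreases}.

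\emph{Conservation of $G_\theta$.} Let $J$ be the complex structure. Since $S_\theta=F_\theta+iG_\theta$ is holomorphic, $\dd S_\theta$ is $\C$-linear: $\dd S_\theta(J\xi)=i\,\dd S_\theta(\xi)$. Taking imaginary parts gives $\dd G_\theta(J\xi)=\dd F_\theta(\xi)$, and replacing $\xi$ by $J\xi$ gives
\[
\dd G_\theta(\xi)=-\dd F_\theta(J\xi).
\]
The metric $g_h=\re h$ is $J$-invariant because $h$ is Hermitian, and $J^2=-1$, so $J$ is skew-adjoint for $g_h$. Hence
\[
\dd G_\theta(\xi)=-g_h(\nabla_h F_\theta,J\xi)=g_h(J\nabla_h F_\theta,\xi),
\]
that is, $\nabla_h G_\theta=J\nabla_h F_\theta$. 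Along the flow,
\[
\frac{\dd}{\dd s}G_\theta(z(s))=g_h(J\nabla_h F_\theta,-\nabla_h F_\theta)=0,
\]
because $g_h(J v,v)=0$ by skew-adjointness. This is \eqref{eq:G-conserved}.

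As a cross-check in coordinates, \eqref{eq:complex-gradient-standard} gives
\[
\frac{\dd S_\theta}{\dd s}=\sum_i\partial_iS_\theta\,\dot z^i=-\sum_{i,j}h^{i\bar j}\,\partial_iS_\theta\,\overline{\partial_jS_\theta}.
\]
Since $(h^{i\bar j})$ is positive definite Hermitian, this quantity is real and nonpositive. Its real part recovers \eqref{eq:F-decreases} and its vanishing imaginary part gives \eqref{eq:G-conserved}.

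\emph{Strict decrease.} A non-stationary flow line never meets a zero of $\nabla_h F_\theta$, since stationary points are the equilibria of the ODE and uniqueness of solutions applies. So $\lVert\nabla_h F_\theta\rVert_h^2>0$ along it and $F_\theta$ is strictly decreasing. Also $\nabla_h F_\theta=0$ exactly where $\dd S=0$, since $\dd F_\theta=\re(e^{-i\theta}\dd S)$ determines the $\C$-linear form $\dd S$.

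The only step needing care is the sign convention relating $\nabla_h G_\theta$ and $J\nabla_h F_\theta$. The conclusion does not depend on that sign, because in either case $\nabla_h G_\theta\perp\nabla_h F_\theta$, and that orthogonality is all the conservation law uses.
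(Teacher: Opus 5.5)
Your proof is correct. The paper's own proof is exactly your ``cross-check'': it differentiates $S_\theta(z(s))$ along the flow using the coordinate formula \eqref{eq:complex-gradient-standard}. This gives $-\sum_{i,j}h^{i\bar j}\partial_iS_\theta\overline{\partial_jS_\theta}=-\lVert\nabla_hF_\theta\rVert_h^2$, and the paper reads off both identities at once by taking real and imaginary parts.

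Your main argument takes an intrinsic route instead. For $F_\theta$ you use the chain rule together with the defining property $g_h(\nabla_h f,\xi)=\dd f(\xi)$. For $G_\theta$ you use the metric form of the Cauchy--Riemann equations, $\nabla_hG_\theta=J\nabla_hF_\theta$, together with skew-adjointness of $J$ for the $J$-invariant metric $g_h=\re h$.

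The paper's version is shorter, since one computation yields both statements. However, it relies on the coordinate expression of the gradient, including its factor $2$ and the index convention for $h^{i\bar j}$. It also identifies the Hermitian quadratic form with $\lVert\nabla_hF_\theta\rVert_h^2$ without spelling this out. Your route needs none of this bookkeeping, and it shows the geometric reason for conservation: the flow direction is $g_h$-orthogonal to $\nabla_hG_\theta$.

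Your ODE-uniqueness remark also justifies the claim of strict decrease along non-stationary flow lines, which the paper leaves implicit.
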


\begin{proof} $S_\theta=F_\theta+iG_\theta$ is holomorphic, hence $\partial_j F_\theta=i \partial_j G_\theta$. By equation \eqref{eq:complex-gradient-standard} 
$$
\frac{\dd S_\theta(z(s))}{\dd s}=\sum_{i=1}^n\frac{\dd z^i}{\dd s}\frac{\partial S_\theta(z)}{\partial z^i}=-\sum_{i,j=1}^n h^{i\bar j}(z(s))\,\partial_i S_\theta(z(s))\,\overline{\partial_j S_\theta(z(s))}=-\lVert \nabla_h F_\theta(z(s))\rVert_h^2.
$$
The lemma follows by comparing the real and imaginary part of this equation. 
\end{proof}

\smallskip
\subsection{Lefschetz thimbles: local and global descriptions}

\begin{defn}\label{def:stable-unstable}
For \(p\in\Crit(S)\), define the (global) stable and unstable manifolds for the negative gradient flow \eqref{eq:negative-gradient-flow-general} by
\begin{equation*}
W^s_\theta(p;-\nabla_h F_{\theta}):=\{z_0\in X\mid z(s;z_0)\to p\text{ as }s\to +\infty\},
\end{equation*}
\begin{equation*}
W^u_\theta(p;-\nabla_h F_{\theta}):=\{z_0\in X\mid z(s;z_0)\to p\text{ as }s\to -\infty\}.
\end{equation*}
We also denote them by $\J_p^{\theta}$ and $\Kk_p^{\theta}$ respectively, and call $\J_p^{\theta}$ Lefschetz thimble and $\Kk_p^{\theta}$ dual thimble attached to a critical point $p$.
\end{defn}

Let us analyze the local structure first. Let $p\in\Crit(S)$ be a nondegenerate critical point. By the holomorphic Morse lemma, there exist holomorphic coordinates $w=(w_1,\dots,w_n)$ centered at $p$ such that
\begin{equation}\label{eq:morse-normal-form}
S_{\theta}(w)=S_{\theta}(p)+\frac12\sum_{j=1}^n w_j^2.
\end{equation}
Writing $w_j=x_j+iy_j$, one gets
\begin{equation*}\label{eq:local-FG}
F_{\theta}(w)-F_{\theta}(p)
=\frac12\sum_{j=1}^n(x_j^2-y_j^2),
\qquad
G_{\theta}(w)-G_{\theta}(p)
=\sum_{j=1}^n x_jy_j.
\end{equation*}

To first order at $p$, after choosing normal coordinates for the metric, the negative gradient flow is
\begin{equation}\label{eq:linear-local-flow}
\dot w_j=-\overline{w_j},
\qquad\text{equivalently}\qquad
\dot x_j=-x_j,
\quad
\dot y_j=y_j.
\end{equation}
Thus $F_{\theta}$ has real Morse index $n$ at every nondegenerate critical point of a holomorphic function on an $n$-dimensional complex manifold.

\begin{prop}[Local stable and unstable models]\label{prop:local-model}
Near a nondegenerate critical point $p$, the local stable and unstable manifolds for $-\nabla_h F_{\theta}$ are middle-dimensional real submanifolds.  In the leading holomorphic Morse coordinates \eqref{eq:morse-normal-form}, they have tangent spaces
\begin{equation*}\label{eq:local-tangent-model}
T_p\J_p^\theta=\R^n=\{y=0\},
\qquad
T_p\mathcal{K}_p^\theta=i\R^n=\{x=0\}.
\end{equation*}
More precisely, for a sufficiently small ball $B_\varepsilon(p)$, one has local stable and unstable disks
\begin{align}
\J_{p,\operatorname{loc}}^\theta
&=\{x+i\psi_s(x): |x|<\varepsilon\},
\qquad \psi_s(0)=0,
\quad D\psi_s(0)=0,\label{eq:local-stable-disk}\\
\mathcal{K}_{p,\operatorname{loc}}^\theta
&=\{\psi_u(y)+iy: |y|<\varepsilon\},
\qquad \psi_u(0)=0,
\quad D\psi_u(0)=0.\label{eq:local-unstable-disk}
\end{align}
Both are contained in the level set $G_{\theta}=G_{\theta}(p)$.
\end{prop}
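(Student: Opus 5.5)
We regard the negative gradient flow \eqref{eq:negative-gradient-flow-general} as a smooth real vector field on the $2n$-dimensional manifold $X$ near $p$ and apply the stable/unstable manifold theorem for hyperbolic fixed points. The first step is to compute the linearization at $p$. In the holomorphic Morse coordinates \eqref{eq:morse-normal-form} we have $\partial_j S_\theta=w_j$. The flow is $\dot w^i=-\sum_j h^{i\bar j}(w)\,\overline{w_j}$, so its linear part at $w=0$ is $\dot w=-H^{-1}\bar w$, where $H=(h_{i\bar j}(p))$. After a complex-linear change of coordinates we may take $H=\mathrm{Id}$. A linear change that preserves the quadratic form $\tfrac12\sum w_j^2$ can be chosen to be real orthogonal only when $H$ is real. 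So in general we first use the freedom in the holomorphic Morse lemma: compose with a complex orthogonal map in $O(n,\C)$, which preserves $\sum w_j^2$, to bring $H$ into a form compatible with the real structure. This is what the statement means by ``after choosing normal coordinates for the metric,'' and with it we arrive at the linear model \eqref{eq:linear-local-flow}.

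That step is the main obstacle. For a general Hermitian matrix $H$, one must check that a combined transformation normalizes both $H$ and the quadratic form simultaneously. Otherwise one argues without normalizing. Write $L(w)=-H^{-1}\bar w$ as a real-linear map. It is self-adjoint with respect to $g_h=\re H$, because it is the Hessian of $F_\theta$, namely $\re\tfrac12\sum w_j^2$, raised by the metric. So $L$ has real eigenvalues. On $\R^n=\{y=0\}$ the form $F_\theta-F_\theta(p)$ is positive definite, and on $i\R^n$ it is negative definite. By Sylvester's law of inertia, $L$ therefore has exactly $n$ negative and $n$ positive eigenvalues, so $p$ is a hyperbolic fixed point. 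The stable and unstable eigenspaces are the $g_h$-orthogonal negative and positive eigenspaces of the Hessian. In the normalized coordinates these are exactly $\{y=0\}$ and $\{x=0\}$, which gives the claimed tangent spaces. I would first try to justify the normalization via $O(n,\C)$. Failing that, I would state the tangent-space claim in the chart where $H=\mathrm{Id}$ and the Morse form still holds, which is the intended reading.

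With hyperbolicity established, the local stable manifold theorem gives $C^\infty$ embedded disks of dimension $n$ tangent to the stable and unstable eigenspaces. Because these disks are tangent to $\{y=0\}$ and $\{x=0\}$, they are graphs over those spaces, as in \eqref{eq:local-stable-disk} and \eqref{eq:local-unstable-disk}, with $\psi_s(0)=0$, $D\psi_s(0)=0$ and likewise for $\psi_u$. The local manifolds coincide with the germs at $p$ of $\J_p^\theta$ and $\Kk_p^\theta$ from Definition~\ref{def:stable-unstable}.

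Finally, by Lemma~\ref{lem:F-decreases-G-conserved}, $G_\theta$ is constant along flow lines. Any point of $\J_{p,\operatorname{loc}}^\theta$ flows to $p$ as $s\to+\infty$, so by continuity its value of $G_\theta$ equals $G_\theta(p)$. The same argument with $s\to-\infty$ applies to $\Kk_{p,\operatorname{loc}}^\theta$. This gives the level-set statement.
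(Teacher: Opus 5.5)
Your argument is the paper's: hyperbolicity of the linearization, the stable/unstable manifold theorem giving the graphs \eqref{eq:local-stable-disk}--\eqref{eq:local-unstable-disk}, and conservation of $G_\theta$ from Lemma~\ref{lem:F-decreases-G-conserved} for the level-set claim. Your self-adjointness/Sylvester argument also proves hyperbolicity in any chart, whereas the paper only asserts the model \eqref{eq:linear-local-flow} ``after choosing normal coordinates for the metric.'' The normalization you flag is settled by your first option: Takagi factorization gives $A\in O(n,\C)$ after which the metric at $p$ is real diagonal, $\operatorname{diag}(\sigma_j^{-1})$ with $\sigma_j>0$, so $\dot x_j=-\sigma_jx_j$, $\dot y_j=\sigma_jy_j$ and the tangent spaces are $\{y=0\}$ and $\{x=0\}$. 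Drop your fallback: a chart with both $H=\Id$ and the exact Morse form does not exist in general for $n\ge2$, because the $\sigma_j$ are invariants of the pair (metric, Hessian). For $n=1$, which covers all the examples in the paper, no normalization is needed.
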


\begin{proof}
The linearized flow is hyperbolic, with stable space $\{y=0\}$ and unstable space $\{x=0\}$.  The stable and unstable manifold theorem gives the smooth disks \eqref{eq:local-stable-disk} and \eqref{eq:local-unstable-disk}.  By Lemma \ref{lem:F-decreases-G-conserved}, $G_{\theta}$ is constant on every flow line.  Since every point of the local stable or unstable disk flows to $p$ in one time direction, the value of $G_{\theta}$ on these disks must equal $G_{\theta}(p)$.
\end{proof}

When $n=1$, the local model is especially transparent.  Write
\[
S_{\theta}=S_{\theta}(p)+\frac12w^2,
\qquad w=x+iy.
\]
Then
\[
\J_{p,\operatorname{loc}}^\theta=\{y=0\},
\qquad
\Kk_{p,\operatorname{loc}}^\theta=\{x=0\}.
\]
The stable thimble has two local half-branches
\[
A_p^\theta=\{x>0,\ y=0\},
\qquad
B_p^\theta=\{x<0,\ y=0\},
\]
and the unstable thimble has two local half-branches
\[
C_p^\theta=\{x=0,\ y>0\},
\qquad
D_p^\theta=\{x=0,\ y<0\}.
\]
With the usual orientation convention, one may write locally
\begin{equation}\label{orientationthimble}
    \J_p^\theta\sim A_p^\theta-B_p^\theta,
\qquad
\Kk_p^\theta\sim C_p^\theta-D_p^\theta.
\end{equation}
Here each half-branch \(A_p^\theta,B_p^\theta,C_p^\theta,D_p^\theta\) is
oriented outward from the critical point \(p\); hence the signs in
\(A_p^\theta-B_p^\theta\) and \(C_p^\theta-D_p^\theta\) determine the induced
orientations of the full local thimble and dual thimble.

Let $\phi_s$ denote the flow of $-\nabla_h F_{\theta}$. With the above analysis, the global stable and unstable manifolds are also given by
\begin{equation*}\label{eq:flowout-local-disks}
\J_p^\theta
=\bigcup_{s\ge 0}\phi_{-s}\bigl(\J_{p,\operatorname{loc}}^\theta\bigr),
\qquad
\Kk_p^\theta
=\bigcup_{s\ge 0}\phi_{s}\bigl(\Kk_{p,\operatorname{loc}}^\theta\bigr).
\end{equation*}

Again, by Lemma \ref{lem:F-decreases-G-conserved}, one has
\begin{equation*}\label{eq:G-level-global}
\J_p^\theta\subset \{G_{\theta}=G_{\theta}(p)\},
\qquad
\Kk_p^\theta\subset \{G_{\theta}=G_{\theta}(p)\}.
\end{equation*}
Moreover, along $\J_p^\theta$, the forward flow decreases $F_{\theta}$ to $F_{\theta}(p)$, while the backward flow increases $F_{\theta}$.  Therefore, under the usual tameness/no-critical-points-at-infinity hypotheses, the outgoing end of $\J_p^\theta$ lies in the region
\[
F_{\theta}\to +\infty.
\]
Similarly, the outgoing end of $\Kk_p^\theta$ lies in the region
\[
F_{\theta}\to -\infty.
\]

For $R\gg 0$, define the positive and negative ends
\begin{equation*}\label{eq:relative-ends}
X_{+R}^\theta=\{x\in X:F_{\theta}(x)\ge R\},
\qquad
X_{-R}^\theta=\{x\in X:F_{\theta}(x)\le -R\}.
\end{equation*}
The truncated thimble $\J_p^\theta\cap\{F_{\theta}\le R\}$ has boundary in $X_{+R}^\theta$, and the truncated dual thimble $\Kk_p^\theta\cap\{F_{\theta}\ge -R\}$ has boundary in $X_{-R}^\theta$.  Hence they define relative cycles
\begin{equation*}\label{eq:relative-classes}
[\J_p^\theta]\in H_n(X,X_{+R}^\theta;\Z),
\qquad
[\Kk_p^\theta]\in H_n(X,X_{-R}^\theta;\Z).
\end{equation*}
For different sufficiently large $R$, these relative groups are canonically identified, so we also write
\begin{equation*}\label{eq:infinity-relative-groups}
H_n(X,X_{+\infty}^\theta;\Z),
\qquad
H_n(X,X_{-\infty}^\theta;\Z).
\end{equation*}

The phase $\theta$ determines which trajectories are visible, because every trajectory is constrained to a level set of $G_{\theta}$.  For simplicity, assume that the critical values of \(S\) are pairwise distinct, then for two critical points $p,q$, a necessary condition for a gradient trajectory from $q$ to $p$ is
\begin{equation*}\label{eq:necessary-stokes-condition}
G_{\theta}(p)=G_{\theta}(q),
\end{equation*}
that is,
\begin{equation*}\label{eq:stokes-wall-condition}
\im\bigl(e^{-i\theta}(S(p)-S(q))\bigr)=0.
\end{equation*}
Thus the exceptional phases are precisely the Stokes phases determined by differences of critical values.

\begin{defn}\label{def:regular-phase}  
A phase $\theta$ is called regular, or non-Stokes, if
\begin{equation*}\label{eq:regular-phase}
G_{\theta}(p)\ne G_{\theta}(q)
\qquad\text{for all distinct }p,q\in\Crit(S).
\end{equation*}
Equivalently, no two distinct critical values of $S$ lie on a line of direction $e^{i\theta}\R$ in the $S$-plane.
\end{defn}

For a regular phase, $\J_p^\theta$ and $\Kk_q^\theta$ can intersect only when $p=q$.  Indeed, if $x\in \J_p^\theta\cap\Kk_q^\theta$, then $x$ lies on a trajectory with forward limit $p$ and backward limit $q$.  Conservation of $G_{\theta}$ forces
\[
G_{\theta}(p)=G_{\theta}(x)=G_{\theta}(q),
\]
so regularity implies $p=q$.

\begin{defn}\label{def:tame-data}
We say that $(X,S,\theta,h)$ is tame Lefschetz data if the following conditions hold.
\begin{enumerate}[label=\textup{(\arabic*)}, leftmargin=2.5em]
\item $S$ has finitely many critical points, and all of them are nondegenerate.
\item The phase $\theta$ is regular in the sense of Definition \ref{def:regular-phase}.
\item The flow of $-\nabla_hF_{\theta}$ is complete on the relevant stable and unstable manifolds.
\item There exists $R\gg 0$ such that outside the compact region $|F_{\theta}|\le R$, the function $F_{\theta}$ has no critical points and the gradient flow escapes to one of the two ends $F_{\theta}\to\pm\infty$.
\item The stable and unstable manifolds are transverse.\footnote{For regular $\theta$, this is automatic after a harmless generic choice of K\"ahler metric that does not change the local holomorphic Morse data.}
\end{enumerate}
\end{defn}

\begin{rmk}[Finite and locally finite versions]
Condition \textup{(1)} is the finite-critical-point version, and it is the one
used literally in the Airy and Bessel models below.  In the Gamma model the
critical set is countable rather than finite, so the definition should be
understood in a locally finite sense: every compact subset of \(X\) contains
only finitely many critical points, and the set of critical values has no
finite accumulation point.
Equivalently, after fixing a Stokes phase \(\theta_*\), for every
\(c\in\R\) and every bounded interval \([a,b]\subset\R\), the set
\[
\left\{
p\in\Crit(S):
\operatorname{Im}\!\left(e^{-i\theta_*}S(p)\right)=c,\quad
a\le
\operatorname{Re}\!\left(e^{-i\theta_*}S(p)\right)
\le b
\right\}
\]
is finite.  Thus the Picard--Lefschetz computation may first be performed
on finite subsets of critical values and then passed to the locally finite
limit.
In this setting, the thimble-basis statement below is interpreted in locally
finite relative homology, or equivalently in the corresponding completed
thimble module.  The resulting Stokes transformations are infinite triangular
transformations understood in this completed sense.  For systematic treatments
of locally finite Stokes data, see Hien \cite{Hien2009} and Sabbah
\cite{Sabbah2013}.
\end{rmk}

\begin{thm}[Thimble basis theorem, \cite{Pham1983,Vassiliev2002,Witten2011CS}]\label{thm:thimble-basis}
Assume $(X,S,\theta,h)$ is tame Lefschetz data and $\dim_\C X=n$.  Then every critical point $p\in\Crit(S)$ determines:
\begin{enumerate}[label={\textup{(\roman*)}}, leftmargin=2.5em]
\item a Lefschetz thimble
\[
\J_p^\theta=W^s(p;-\nabla_h F_{\theta}),
\qquad
[\J_p^\theta]\in H_n(X,X_{+\infty}^\theta;\Z);
\]
\item a dual thimble
\[
\Kk_p^\theta=W^u(p;-\nabla_h  F_{\theta}),
\qquad
[\Kk_p^\theta]\in H_n(X,X_{-\infty}^\theta;\Z).
\]
\end{enumerate}
Moreover,
\begin{equation*}\label{eq:basis-positive}
\bigl\{[\J_p^\theta]:p\in\Crit(S)\bigr\}
\end{equation*}
is a $\Z$-basis of the relative homology group
\[
H_n(X,X_{+\infty}^\theta;\Z),
\]
and
\begin{equation*}\label{eq:basis-negative}
\bigl\{[\Kk_p^\theta]:p\in\Crit(S)\bigr\}
\end{equation*}
is a $\Z$-basis of
\[
H_n(X,X_{-\infty}^\theta;\Z).
\]
With compatible orientations\footnote{Our convention for orienting the
thimble and dual thimble is the one described in
\eqref{orientationthimble}. }, the intersection pairing satisfies
\begin{equation*}\label{eq:dual-pairing}
\langle[\J_p^\theta],[\Kk_q^\theta]\rangle=\delta_{pq}.
\end{equation*}
In one complex dimension, this is the left-handed convention for the local
pair of thimble and dual thimble, chosen so that their oriented intersection
at the critical point is \(+1\). In particular,
\begin{equation*}\label{eq:rank-count}
\operatorname{rank}H_n(X,X_{+\infty}^\theta;\Z)
=\#\Crit(S).
\end{equation*}
\end{thm}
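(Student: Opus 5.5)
The plan is to use Morse theory for the real function \(F_\theta\), taken relative to its sublevel or superlevel sets, together with the local model of Proposition \ref{prop:local-model}. By tameness condition (4), outside \(|F_\theta|\le R\) the function has no critical points and the flow escapes to the ends. The negative gradient flow therefore gives a deformation retraction of \(X\) onto \(\{F_\theta\le R\}\). It also identifies \(X_{+R}^\theta\) with \(X_{+R'}^\theta\) for all \(R'\ge R\), which justifies the canonical identifications. To compute \(H_n(X,X^\theta_{+\infty})\) I will order the critical points by critical value \(F_\theta(p)\); regularity and finiteness allow a small perturbation making these values distinct. I then build \(X\) from the end \(X_{+R}^\theta\) downward, using the flow of \(+\nabla F_\theta\), which is the same as working with \(-F_\theta\). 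Passing a critical point \(p\) of \(F_\theta\), whose real Morse index is \(n\) by \eqref{eq:linear-local-flow}, attaches an \(n\)-cell. Viewed from the top end, this cell is exactly the stable disk \(\J_{p,\mathrm{loc}}^\theta\) flowed out to \(F_\theta\to+\infty\), that is, the truncated thimble. This is the standard handle decomposition argument.

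Since all the cells have the same dimension \(n\), the relative cellular chain complex is concentrated in degree \(n\) and all differentials vanish. Hence \(H_n(X,X_{+\infty}^\theta;\Z)\) is free, with basis the classes of the attached cells. Inductively this gives the exact sequences
\[
0\to H_n(X_{\le c},X_{+R})\to H_n(X_{\le c'},X_{+R})\to H_n(X_{\le c'},X_{\le c})\cong\Z\to 0 ,
\]
where \(X_{\le c}\) is taken relative to the top. Here I use that the relative homology of a space with a single \(n\)-handle attached is \(\Z\) in degree \(n\), and that the adjacent groups vanish in the other degrees. The thimble \([\J_p^\theta]\) maps to a generator of the new \(\Z\). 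The sequences split over \(\Z\), so the thimbles form a basis, and the rank statement follows. The argument with \(-F_\theta\) and the lower end \(X_{-\infty}^\theta\) gives the basis \(\{[\Kk_p^\theta]\}\).

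For the pairing, the relative groups \(H_n(X,X_{+\infty})\) and \(H_n(X,X_{-\infty})\) are paired by intersection. This is well defined because a cycle with boundary in \(\{F_\theta\ge R\}\) and one with boundary in \(\{F_\theta\le -R\}\) have disjoint boundaries, so the intersection number is a homological invariant. For \(p\neq q\), regularity gives \(\J_p^\theta\cap\Kk_q^\theta=\emptyset\), as shown just before Definition \ref{def:tame-data}, so the pairing is \(0\). For \(p=q\), the thimble \(\J_p^\theta\) and the dual thimble \(\Kk_p^\theta\) meet only at \(p\): a point of the intersection flows to \(p\) in both time directions, which forces \(F_\theta\) to be constant along its trajectory, so the trajectory is stationary. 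By Proposition \ref{prop:local-model} they meet transversally there, with tangent spaces \(\R^n\) and \(i\R^n\). With the orientations \eqref{orientationthimble} the local sign is \(\pm1\), and the stated convention makes it \(+1\). In one dimension this is a direct check: \(A-B\) is oriented along \(+x\), \(C-D\) along \(+y\), and the paper declares this pair to have intersection \(+1\). This yields \(\delta_{pq}\), and it also shows independently that the two bases are dual, so neither module has torsion.

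The main obstacle is the analytic step at infinity. One has to show that the flow-out of the local disks, truncated at level \(R\), really is a cell attachment. This requires that no trajectory escapes in finite time or approaches a critical point at infinity, which is what conditions (3)–(4) are meant to provide. I would take these hypotheses as given and, if needed, appeal to \cite{Pham1983,Hien2009} rather than reprove them. In the locally finite (Gamma) setting I would run the argument on exhaustions by finite subsets of critical values and pass to the limit.
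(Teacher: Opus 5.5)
The paper gives no proof here beyond citing Pham, Vassiliev and Witten. Your argument is the standard Morse-theoretic one from Witten's treatment, and it is correct, granting the tameness at infinity that you explicitly assume. It builds $X$ from the end $X^\theta_{+\infty}$, with each critical point contributing an $n$-handle whose core is the truncated thimble, so the relative cellular complex sits in degree $n$ with zero differential; it then computes the pairing from $\J_p^\theta\cap\Kk_q^\theta=\emptyset$ for $p\neq q$ and the transverse meeting at $p$.

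One small correction: the pairing does not show torsion-freeness independently. On its own it only shows that the span of the thimbles is a free direct summand, since torsion would sit in the common kernel of the pairings with the $[\Kk_q^\theta]$. Torsion-freeness therefore comes from the spanning provided by your handle argument.
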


\smallskip
\subsection{Connecting trajectories and Picard-Lefschetz formula}

At a Stokes phase $\theta_*$, where $G_{\theta_*}(p)=G_{\theta_*}(q)$ for some $p\ne q$, connecting trajectories may appear.  When $\theta$ crosses such a wall, the thimble basis changes by a Picard--Lefschetz transformation.  

More explicitly, fix a Stokes phase \(\theta_*\), and let \(p_0,\dots,p_r\) be the critical points whose critical values lie on the same Stokes line (i.e. along the line in the direction $e^{i\theta_*}$), ordered so that
\begin{equation*}
F_{\theta_*}(p_0)>\cdots>F_{\theta_*}(p_r).
\end{equation*}
Write the two one-sided thimble bases as
\begin{equation*}
\mathbf J^<:=([J_{p_0}^<],\dots,[\J_{p_r}^<]),
\qquad
\mathbf J^>:=([\J_{p_0}^>],\dots,[\J_{p_r}^>]).
\end{equation*}
Here $<$ (resp. $>$) means the direction $\theta_*-\varepsilon$ (resp. $\theta_*+\varepsilon$) with small positive $\varepsilon$.
Then there are unique unitriangular matrices \(R_+\) and \(R_-\) such that 
\begin{equation*}
\mathbf J^>=\mathbf J^< R_-,
\qquad
\mathbf J^<=\mathbf J^> R_+,
\end{equation*}
hence \(R_-=R_+^{-1}\). If either \(R_+\) or \(R_-\) has the elementary form
\begin{equation*}
\begin{pmatrix}
1 &        &        &        &        &        \\
  & \ddots &        &        &        &        \\
  &        & 1      & N      &        &        \\
  &        & 0      & 1      &        &        \\
  &        &        &        & \ddots &        \\
  &        &        &        &        & 1
\end{pmatrix},
\end{equation*}
with all omitted off-diagonal entries equal to \(0\), then the elementary Picard--Lefschetz jump formula implies that \(N\) is the signed count of direct connecting trajectories from \(p_i\) to \(p_{i+1}\), where \(N\) sits in the \((i,i+1)\)-position; in complex dimension one, \(|N|\) is the number of such trajectories. See \cite[Sec.~3.1.2]{Witten2011CS}; cf.\ also \cite{AGV2,Pham1983,Vassiliev2002}.

This is exactly the situation relevant for the examples below. 
\begin{itemize}
    \item For Airy and Bessel, one has \(r=1\), so the relevant matrix is
\begin{equation*}
\begin{pmatrix}
1 & N\\
0 & 1
\end{pmatrix},
\end{equation*}
with \(N=1\) in the Airy case and \(N=2\) in the Bessel case.
   \item  For the Gamma model, the same discussion applies with an infinite unitriangular matrix in place of the finite one; in that example, one of \(R_+\) and \(R_-\) is not elementary, while the other is of nearest-neighbor form, and it is this latter matrix from which the direct connecting trajectories are read off.
\end{itemize}

\begin{rmk}
If we do not assume that the critical values of \(S\) are pairwise distinct, then Stokes walls should be defined for distinct critical values rather 
than for individual critical points.  For two critical values \(A\neq B\), the 
phase \(\theta_*\) is Stokes if
\[
\im\bigl(e^{-i\theta_*}(A-B)\bigr)=0.
\]
At such a phase the Picard--Lefschetz jump is a block transformation between the 
critical clusters \(\operatorname{Crit}_A(S)\) and \(\operatorname{Crit}_B(S)\). 
If \(p_\alpha\in \operatorname{Crit}_A(S)\), then
\[
[\mathcal J_\alpha^{\theta_{*+}}]
=
[\mathcal J_\alpha^{\theta_{*-}}]
+
\sum_{\substack{B\neq A\\
\operatorname{Im}(e^{-i\theta_*}(A-B))=0}}
\sum_{p_\beta\in\operatorname{Crit}_B(S)}
n_{\alpha\beta}
[\mathcal J_\beta^{\theta_{*-}}],
\]
where \(n_{\alpha\beta}\) is the signed count of \(F_{\theta_*}\)-gradient
trajectories from \(p_\alpha\) to \(p_\beta\).  Critical points with the same
critical value \(A\) form a resonant block.  They do not produce an ordinary
Stokes jump among themselves, since any finite-energy trajectory between two
such critical points would have zero energy and hence would be constant.  
\end{rmk}

\smallskip
\subsection{Resurgent interpretation}\label{sec:resurgence}
For background on resurgence, alien calculus, and transseries, we refer the reader to \'Ecalle's foundational works \cite{Ecalle81a,Ecalle81b,Ecalle85}, to the pedagogical account in \cite{MitschiSauzin2016}, and to Dorigoni's introduction from the transseries point of view \cite{Dorigoni2014}.
From the resurgent point of view, the primary objects are the formal saddle
expansions of the exponential integrals on the Lefschetz thimbles.  Fix a
regular phase \(\theta\), and let \(\mathfrak s_\theta\subset\C^\times\) be
a sufficiently small angular sector centered at direction \(\theta\),
containing no Stokes ray in its interior.  For a critical point
\(p\in\Crit(S)\), let \(\J_p^\theta\) be the corresponding Lefschetz
thimble.  Then, as \(\hbar\to0\) with \(\hbar\in\mathfrak s_\theta\), the
thimble integral
\begin{equation*}
I_{\J_p^\theta}(\hbar)
:=
\int_{\J_p^\theta} e^{-S/\hbar}\,\mu
\end{equation*}
admits a formal asymptotic expansion produced by the saddle-point method:
\begin{equation*}
I_{\J_p^\theta}(\hbar)
\sim
\widetilde I_p(\hbar),
\qquad
\widetilde I_p(\hbar)
=
e^{-S(p)/\hbar}\,
\widetilde\Psi_p(\hbar),
\end{equation*}
where
\begin{equation*}
\widetilde\Psi_p(\hbar)
=
C_p\,\hbar^{n/2}
\left(1+\sum_{m\ge1}a_m\hbar^m\right)
\in
\hbar^{n/2}\C[[\hbar]].
\end{equation*}
Here \(C_p\in\C^\times\) is the Gaussian coefficient determined by the
Hessian of \(S\), the holomorphic volume form \(\mu\), and the chosen local
orientation of the thimble through \(p\).  In the one-dimensional examples
below, this coefficient takes the form \(C_p=A_p\sqrt{2\pi}\), with
\(|A_p|=1\) recording the local orientation phase.

Once this local orientation is fixed, the Gaussian coefficient and the
reduced formal power series depend only on the local saddle \(p\), not on
the regular phase \(\theta\).  Thus the role of \(\theta\) is to determine
which thimble basis is realized globally; when \(\theta\) crosses a Stokes
direction, the thimble basis may jump, while the local formal objects
\(\widetilde I_p(\hbar)\) remain the same and are recombined by the Stokes
automorphism.  This is precisely the role played by alien operators.

For \(\re\alpha>0\), we use the shifted Borel transform
\begin{equation}\label{Boreltransform}
\mathcal B
\left(
\sum_{m\ge0} a_m \hbar^{m+\alpha}
\right)(\xi)
:=
\sum_{m\ge0} a_m
\frac{\xi^{m+\alpha-1}}{\Gamma(m+\alpha)}
\end{equation}
and denote
$$
\widehat\Psi_p(\xi)=\mathcal B(\widetilde\Psi_p(\hbar)).
$$
We now introduce the alien operator for an integrable singularity. 
Within the general framework of resurgent functions, we shall mainly use a special class of resurgent formal series whose Borel transforms have only integrable singularities,
\footnote{
Let \(\omega\) be a singular point of an analytically continued Borel germ
\(\widehat\phi\).  We say that the singularity at \(\omega\) is integrable if,
after writing \(\zeta=\xi-\omega\), every local branch of
\(\widehat\phi(\omega+\zeta)\) on any fixed sector of the logarithmic Riemann
surface of \(\zeta\) is integrable at \(\zeta=0\).  Equivalently, in the sense
used here, on every such fixed sector one has 
\[
\widehat\phi(\omega+\zeta)=o(\zeta^{-1}),
\qquad \zeta\to0,\] uniformly on closed subsectors.  The sector is allowed to have opening larger
than \(2\pi\).
}
meaning that, after analytic continuation, the corresponding local germs are
locally integrable near every singular point; we denote by
\(\widetilde{\mathcal R}^{\mathrm{int}}\) the algebra of such integrable
resurgent formal power series.
Although alien operators can be defined in a much broader resurgent setting,
all three examples treated below lie in
\(\widetilde{\mathcal R}^{\mathrm{int}}\), so we restrict ourselves to this
integrable case and do not recall the general definition.

Fix a singular point \(\omega\) of the Borel germ \(\widehat\Psi_p\). The alien operator \(\Delta_\omega^{+}\) is defined by taking the local variation of \(\widehat\Psi_p\) at \(\omega\), translating the resulting germ back to the origin, and then applying the inverse Borel transform; schematically, one analytically continues the Borel germ to the singular point, takes its local monodromy there, and then re-centers the resulting germ at the origin, as indicated in Figure~\ref{fig:alien-operator-schematic}.
\begin{equation*}
\Delta_\omega^{+}:=\mathcal B^{-1}\circ \tau_{-\omega}\circ \mathrm{var}_\omega^{+}\circ \mathcal B,
\end{equation*}
where
\begin{equation*}
(\tau_{-\omega}f)(\xi):=f(\xi+\omega).
\end{equation*}
Here
\begin{equation*}
\mathrm{var}_\omega^{+}(f):=\cont_{\gamma_\omega^{+}}f-\cont_{\gamma_\omega^{-}}f,
\end{equation*}
and \(\cont_{\gamma}f\) denotes the analytic germ obtained by analytically
continuing the germ \(f\) from a neighborhood of the origin along the path
\(\gamma\) to the endpoint \(\gamma(1)\). See Figure \ref{fig:alien-operator-schematic} for the shape of $\gamma_\omega^{\pm}$. \(\mathrm{var}_\omega^+(f)\) measures the local
monodromy of the analytically continued Borel germ at \(\omega\).

\begin{figure}[H]
\centering
\begin{tikzpicture}[>=Latex, scale=0.62]

\coordinate (O) at (0,0);
\coordinate (W) at (5.2,3.5);
\coordinate (WW) at (7.8,5.27);
\coordinate (M1) at (2.6,1.75);
\coordinate (M2) at (1.71,1.13);

\fill[green!20] (5.6,3.7) circle (0.38);
\draw[green!50!black] (5.6,3.7) circle (0.38);
\node[green!40!black] at (6,4.5) {$D$};

\filldraw[black] (O) circle (1.3pt);
\node[below left] at (O) {$0$};

\draw[dashed, thick] (O) -- (WW);
\draw[->,thick]
    (0.38,0.00)
    .. controls (1.2,0.05) and (3.1,0.85) ..
    (4,1.95)
    .. controls (4.62,2.55) and (4.70,5.00) .. (5.4,3.9)
    ;
\draw[->,thick]
    (0.38,0.00)
    .. controls (1.2,0.05) and (3.2,0.75) ..
    (5.35,2.7)
    .. controls (5.62,3) and (5.8,3.2) .. (5.6,3.7)
    ;
\draw[red, line width=1.1pt]
    ($(W)+(-0.12,-0.12)$) -- ($(W)+(0.12,0.12)$);
\draw[red, line width=1.1pt]
    ($(W)+(-0.12,0.12)$) -- ($(W)+(0.12,-0.12)$);
\draw[red, line width=1.1pt]
    ($(M1)+(-0.12,-0.12)$) -- ($(M1)+(0.12,0.12)$);
\draw[red, line width=1.1pt]
    ($(M1)+(-0.12,0.12)$) -- ($(M1)+(0.12,-0.12)$);
\draw[red, line width=1.1pt]
    ($(M2)+(-0.12,-0.12)$) -- ($(M2)+(0.12,0.12)$);
\draw[red, line width=1.1pt]
    ($(M2)+(-0.12,0.12)$) -- ($(M2)+(0.12,-0.12)$);
\node[below,red] at (W) {$w$};
\node[above] at (6.05,1.58) {$\gamma_w^+$};
\node[above] at (4.65,4.58) {$\gamma_w^-$};
\end{tikzpicture}
\caption{The path \(\gamma_\omega^+\) analytically continues the Borel germ
from the origin to a small disc \(D\) behind the singular point \(\omega\),
passing to the right of the preceding singularities on the same ray.  The
path \(\gamma_\omega^-\) is obtained from \(\gamma_\omega^+\) by adjoining a
small clockwise loop around \(\omega\) inside \(D\).}
\label{fig:alien-operator-schematic}
\end{figure}

For example, if \(\widehat\Psi_p(\xi)\) has nearest singularity at \(\omega\) and locally is of the form
\begin{equation*}
\widehat\Psi_p(\xi)=H(\xi)+\frac{1}{2\pi i}\log(\xi-\omega)\,\widehat\Psi_q(\xi-\omega),
\end{equation*}
with \(H(\xi)\) holomorphic near \(\omega\), then
\begin{equation*}
\mathrm{var}_\omega^{+}\widehat\Psi_p(\xi)=\widehat\Psi_q(\xi-\omega),
\end{equation*}
and therefore
\begin{equation*}
\Delta_\omega^{+}\widehat\Psi_p=\widehat\Psi_q. 
\end{equation*}
Equivalently, we write $\Delta_\omega^{+}\widetilde\Psi_p=\widetilde\Psi_q$. 

The pointed alien operator is defined to be
\begin{equation*}
\dot\Delta_\omega^{+}:=e^{-\omega/\hbar}\Delta_\omega^{+},
\end{equation*}
and the Stokes automorphism along a fixed ray \(d\) is
\begin{equation}\label{eq:Stokesauto}
\mathfrak S^{+} :=\Id+\sum_{\omega\in d}\dot\Delta_\omega^{+}.
\end{equation}
The opposite alien operator \(\Delta_\omega^{-}\) is defined in the similar way, with \(\mathrm{var}_\omega^{-}\) taken from the negative side of the Stokes ray; correspondingly,
\begin{equation*}
\dot\Delta_\omega^{-}:=e^{-\omega/\hbar}\Delta_\omega^{-},
\qquad
\mathfrak S^{-}:=\Id+\sum_{\omega\in d}\dot\Delta_\omega^{-}.
\end{equation*}
The two Stokes automorphisms are inverse to each other,
\begin{equation*}
\mathfrak S^{-}=(\mathfrak S^{+})^{-1}.
\end{equation*}

We next define the lateral Borel--Laplace sums for the series $\widetilde I_p(\hbar)=e^{-S(p)/\hbar}\widetilde\Psi_p(\hbar)$. Fix a ray
\begin{equation*}
d=e^{i\theta}\R_{>0}
\end{equation*}
in the Borel plane, and let \(d^{>}\) and \(d^{<}\) denote the two standard lateral deformations of \(d\), passing respectively to the right and to the left of the singular points on \(d\).  
If a Borel germ \(\widehat\Psi(\xi)\) admits analytic continuation along \(d^{>}\) and \(d^{<}\) and has at most exponential growth there, we define
\begin{equation*}
\mathcal L^{>} \widehat\Psi(\hbar):=\int_{d^{>}} e^{-\xi/\hbar}\,\widehat\Psi(\xi)\,\dd\xi,
\qquad
\mathcal L^{<} \widehat\Psi(\hbar):=\int_{d^{<}} e^{-\xi/\hbar}\,\widehat\Psi(\xi)\,\dd\xi,
\end{equation*}
and set
\begin{equation*}
\mathcal S^{>} \widetilde I_p(\hbar):=e^{-S(p)/\hbar}\mathcal L^{>} \widehat\Psi_p(\hbar),
\qquad
\mathcal S^{<} \widetilde I_p(\hbar):=e^{-S(p)/\hbar}\mathcal L^{<} \widehat\Psi_p(\hbar).
\end{equation*}
The two lateral Borel-Laplace sums satisfy
\begin{equation}\label{Stokesauto}
\mathcal S^{<}=\mathcal S^{>}\circ \mathfrak S^{+},
\qquad
\mathcal S^{>}=\mathcal S^{<}\circ \mathfrak S^{-}.
\end{equation}
See Figure \ref{fig:stokes-automorphism-zeta-plane}.

\begin{figure}[!htbp]
\centering
\begin{tikzpicture}[scale=0.9,>=Latex]

\tikzset{
  stokesray/.style={
    gray!70,
    dashed,
    line width=1pt
  },
  lateralsum/.style={
    black,
    line width=1.1pt,
    -{Latex[length=2.2mm]}
  },
  stokesarcminus/.style={
    black,
    line width=1.05pt,
    -{Latex[length=2.2mm]}
  },
  stokesarcplus/.style={
    black,
    line width=1.05pt,
    -{Latex[length=2.2mm]}
  }
}
\coordinate (O) at (0,0);
\draw[stokesray] (O) -- (30:7.2);
\draw[lateralsum] (O) -- (20:7.0);
\draw[lateralsum] (O) -- (40:7.0);
\filldraw[black] (O) circle (1.3pt);
\node[below left] at (O) {$0$};
\node[below right] at (6.8,0.25) {$\zeta$-plane};
\node[gray!75!black] at (30:7.05) {$d$};
\foreach \r in {2.0,3.6,5.2,6.5}{
  \draw[red, line width=1pt]
    ($(30:\r)+(-0.11,-0.11)$) -- ($(30:\r)+(0.11,0.11)$);
  \draw[red, line width=1pt]
    ($(30:\r)+(-0.11,0.11)$) -- ($(30:\r)+(0.11,-0.11)$);
}
\node at (20:7.15) [below right] {$d^{<}$};
\node at (40:7.15) [above left] {$d^{>}$};
\node at (20:3.45) [below right] {$\mathcal S^{<}$};
\node at (40:3.55) [above left] {$\mathcal S^{>}$};
\draw[stokesarcminus]
  (25:8.35) arc[start angle=25,end angle=35,radius=8.35];
\node at (30:8.75) {$\mathfrak S^{-}$};
\draw[stokesarcplus]
  (35:8.2) arc[start angle=35,end angle=25,radius=7.65];
\node at (30:7.85) {$\mathfrak S^{+}$};
\end{tikzpicture}
\caption{Stokes ray \(d\) in the Borel \(\zeta\)-plane, its neighboring lateral directions \(d^{<}\) and \(d^{>}\), and the corresponding Stokes automorphisms \(\mathfrak S^\pm\).}
\label{fig:stokes-automorphism-zeta-plane}
\end{figure}

\begin{prop}\label{prop:thimble-lateral-sum}
Let \(p\in\Crit(S)\) be non-degenerate, and let \(\J_p^{>}\) and \(\J_p^{<}\) be the Lefschetz thimbles in the two generic phases adjacent to a Stokes phase, with the same local orientation chosen at \(p\). Then
\begin{equation}\label{eq:thimble-lateral-sum}
I_{\J_p^{>}}(\hbar)=\mathcal S^{>} \widetilde I_p(\hbar),
\qquad
I_{\J_p^{<}}(\hbar)=\mathcal S^{<} \widetilde I_p(\hbar).
\end{equation}
\end{prop}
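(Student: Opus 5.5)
The strategy is the vanishing-cycle (Gelfand--Leray) representation from the introduction. Each thimble integral will be written as a Laplace integral of a function that is, by construction, an analytic continuation of the Borel transform \(\widehat\Psi_p\). We then track which side of the singular points on the Stokes ray the continuation passes as \(\theta\to\theta_*^{\pm}\).

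\textbf{Step 1 (regular phase).} Fix a regular phase \(\theta\) near \(\theta_*\). On \(\J_p^\theta\) we have \(G_\theta=G_\theta(p)\), and \(F_\theta\) increases from \(F_\theta(p)\) to \(+\infty\). Hence \(S-S(p)\) maps \(\J_p^\theta\setminus\{p\}\) onto the ray \(e^{i\theta}(0,\infty)\). The fibres \(\Sigma_\xi=\J_p^\theta\cap\{S-S(p)=\xi\}\) are compact \((n-1)\)-spheres; they are compact by tameness (4) and the absence of critical points on \(\J_p^\theta\setminus\{p\}\). Fubini with the Gelfand--Leray form \(\mu/\dd S\) gives
\[
I_{\J_p^\theta}(\hbar)=e^{-S(p)/\hbar}\int_0^{e^{i\theta}\infty}e^{-\xi/\hbar}\,\widehat\Phi_\theta(\xi)\,\dd\xi,\qquad \widehat\Phi_\theta(\xi):=\int_{\Sigma_\xi}\frac{\mu}{\dd S}.
\]

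\textbf{Step 2 (identify \(\widehat\Phi_\theta\) with \(\widehat\Psi_p\)).} Near \(p\), use the Morse chart \eqref{eq:morse-normal-form}, in which \(\Sigma_\xi\) is a small vanishing sphere. Expanding \(\mu\) in \(w\) and integrating term by term shows that \(\widehat\Phi_\theta(\xi)=\xi^{n/2-1}\) times a convergent series in \(\xi\). Watson's lemma then gives \(\mathcal L_\theta\widehat\Phi_\theta\sim\widetilde\Psi_p\). Since the Borel transform \eqref{Boreltransform} is injective on such expansions, \(\widehat\Phi_\theta=\widehat\Psi_p\) as germs at \(0\), with the local orientation fixing \(C_p\). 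Along the ray, the vanishing cycle \(\Sigma_\xi\) moves continuously in the smooth fibres \(S^{-1}(S(p)+\xi)\). Integrating a closed holomorphic form over a continuously transported cycle is analytic in \(\xi\). Therefore \(\widehat\Phi_\theta\) is the analytic continuation of \(\widehat\Psi_p\) along \(e^{i\theta}(0,\infty)\), and this ray avoids all critical values \(S(q)-S(p)\) because \(\theta\) is regular.

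\textbf{Step 3 (lateral limits).} Let \(\theta=\theta_*\pm\varepsilon\). The continuation along \(e^{i(\theta_*\pm\varepsilon)}\R_{>0}\) is homotopic, in \(\C\setminus\{S(q)-S(p)\}\), to the continuation along \(d^{>}\), respectively \(d^{<}\). Both are deformations of \(d=e^{i\theta_*}\R_{>0}\) passing on one fixed side of every singular point on \(d\). The phase \(\theta\) can be varied within \((\theta_*,\theta_*+\varepsilon]\) without crossing a Stokes phase, and so without changing the class of \(\J_p^\theta\). The vanishing cycles are then transported continuously with \(\theta\), consistently with Step 2. Cauchy's theorem lets us rotate the contour, which gives \(I_{\J_p^{>}}=\mathcal S^{>}\widetilde I_p\); the other side is identical. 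The side assignment \(d^{>}\leftrightarrow\theta_*+\varepsilon\) is fixed by the paper's figure convention.

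\textbf{Main obstacle.} The hard part is justifying the contour rotation at infinity. It requires exponential growth bounds on \(\widehat\Phi_\theta(\xi)\) as \(|\xi|\to\infty\) in the sector, so that the rotated arcs contribute nothing for \(|\hbar|\) small in the lateral sector. I would obtain these from tameness: the volume of \(\Sigma_\xi\) times the size of \(|\mu/\dd S|\) grows at most exponentially, because \(|\dd S|\) is bounded below away from a compact set by hypothesis (4). In the one-dimensional examples this bound can be checked directly. A secondary subtlety is that \(\widehat\Phi_\theta\) is single-valued along each ray but the continuation is multivalued globally. This is handled by noting that the relevant homotopy classes of paths are exactly those of \(d^{>}\) and \(d^{<}\).
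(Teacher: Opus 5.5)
Your proposal is correct and follows essentially the same route as the paper: Gelfand--Leray fibre integration over the slices \(\Sigma_\xi=\J_p\cap\{S-S(p)=\xi\}\), identification of the fibre integral with \(\widehat\Psi_p\) via the Morse chart at \(p\) (with the fixed local orientation giving the same germ on both sides), and reading the thimble integral as a lateral Laplace integral. The extra steps you supply, namely Gauss--Manin analyticity along the ray and the Cauchy rotation with its growth bound, are left implicit in the paper, which simply takes \(d^{\gtrless}\) to be the image rays \(e^{i(\theta_*\pm\varepsilon)}\R_{>0}\) of \(\J_p^{\gtrless}\); with that reading no rotation is needed, since absolute convergence of the thimble integral already gives convergence of the Laplace integral by the coarea formula.
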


\begin{proof}
Choose a holomorphic volume form \(\mu\) near \(p\), and write the thimble integral in the form
\[
I_{\J_p^{\gtrless}}(\hbar)=\int_{\J_p^{\gtrless}} e^{-S/\hbar}\,\mu.
\]
Since \(p\) is a nondegenerate critical point, the holomorphic Morse lemma gives local coordinates \(u=(u_1,\dots,u_n)\) centered at \(p\) such that
\[
S(z)=S(p)+\frac12\sum_{j=1}^n u_j^2,
\qquad
\mu=a(u)\,\dd u_1\wedge\cdots\wedge \dd u_n,
\qquad a(0)\neq 0.
\]
On the complement of the critical locus, let \(\mu/\dd S\) denote the Gelfand--Leray \((n-1)\)-form characterized by
\[
\dd S\wedge \frac{\mu}{\dd S}=\mu.
\]
Set
\[
\xi:=S(z)-S(p).
\]

Along \(\J_p^{\gtrless}\), the map \(\xi\) takes values on the lateral ray \(d^{\gtrless}\).  For each regular value \(\xi\in d^{\gtrless}\), define the vanishing cycle slice
\[
\Sigma_{\xi}^{\gtrless}:=\J_p^{\gtrless}\cap\{S-S(p)=\xi\}.
\]
Then fiber integration gives
\begin{equation}\label{eq:thimble-change-variable}
I_{\J_p^{\gtrless}}(\hbar)
=
e^{-S(p)/\hbar}
\int_{d^{\gtrless}} e^{-\xi/\hbar}\,\rho_p^{\gtrless}(\xi)\,\dd\xi,
\qquad
\rho_p^{\gtrless}(\xi):=\int_{\Sigma_{\xi}^{\gtrless}} \frac{\mu}{\dd S}.
\end{equation}

\begin{figure}[!htbp]
\centering
\begin{tikzpicture}[scale=0.8,>=Latex]

\tikzset{
  thimblebd/.style={red!70!black,line width=1.1pt},
  fib/.style={green!70!black,line width=1.0pt},
  proj/.style={yellow!70!black,dashed,line width=0.9pt},
  ray/.style={red!70!black,line width=1.1pt},
  maparrow/.style={->,black,line width=1.0pt},
  pt/.style={circle,fill=green!70!black,inner sep=1.7pt}
}
\node at (4,3.5) {\(\mathcal J_p^{>}\subset z\text{-space}\)};
\draw[thimblebd]
  (-3.0,2.2)
  .. controls (-2.0,4.1) and (0.8,4.2) ..
  (3.2,4.3);
\draw[thimblebd]
  (-3.0,2.2)
  .. controls (-1.8,1.0) and (0.2,1.7) ..
  (3.2,2.3);
\node[pt,label={[below left]{$p$}}] (p) at (-3.0,2.2) {};
\node[pt,inner sep=1.2pt] (q) at (0.05,1.63) {};
\draw[fib]
  (0.05,1.63)
  .. controls (-0.5,1.8) and (-0.5,3.45) ..
  (0.05,4.08);
\draw[fib,dashed]
  (0.05,4.08)
  .. controls (0.68,3.42) and (0.75,2.72) ..
  (0.05,1.63);
\node[green!70!black] at (0.92,3.2) {\(\Sigma_\xi^>\)};
\draw[proj] (p) -- (-3.0,-0.95);
\draw[proj] (q) -- (0.05,-0.25);
\draw[maparrow] (-0.55,1.45) -- (-0.55,-0.25);
\node[right,black] at (-0.35,0.8) {\(S(z)-S(p)\)};
\node at (0,-1.65) {\(\xi\text{-plane}\)};
\draw[ray,->] (-2.9,-0.95) -- (3.3,0.45);
\node[below right,black] at (2.45,0.35) {\(e^{id^>}\mathbb R_{\ge0}\ \subset\ \xi\) -plane};
\node[pt,label={[below left]{$0$}}] at (-2.9,-0.95) {};
\node[pt] at (0.05,-0.29) {};
\node[green!70!black, below right] at (0.05,-0.22) {\(\xi\)};
\end{tikzpicture}
\caption{
A schematic picture of the map \(\xi=S(z)-S(p)\): the thimble \(\mathcal J_p^{>}\) is sent to a ray in the \(\xi\)-plane, and for a regular value \(\xi\) the fibre
\(\Sigma_\xi^>=\mathcal J_p^{>}\cap\{S(z)-S(p)=\xi\}\)
is drawn schematically as a vanishing cycle.
}
\label{fig:thimble-ray-fibre-schematic}
\end{figure}

In the Morse coordinates above, \(\Sigma_{\xi}^{\gtrless}\) is the vanishing \((n-1)\)-sphere cut out by \(\sum u_j^2=2\xi\) inside the chosen thimble, so one obtains the standard expansion
\[
\rho_p^{\gtrless}(\xi)
=
A_p\,\frac{(2\xi)^{\frac n2-1}}{\Gamma(\frac n2)}
\Bigl(1+b_1\xi+b_2\xi^2+\cdots\Bigr),
\qquad |A_p|=1.
\]
Because the local orientation at \(p\) is fixed once and for all, the same phase \(A_p\) appears on the two lateral sides.  Therefore \(\rho_p^{\gtrless}(\xi)\) is precisely the two-sided analytic continuation of the Borel transform \(\widehat\Psi_p=\mathcal B\widetilde\Psi_p\) of the local saddle series
\[
\widetilde I_p(\hbar)=e^{-S(p)/\hbar}\widetilde\Psi_p(\hbar),
\qquad
\widetilde\Psi_p(\hbar)\in \hbar^{n/2}\C[[\hbar]].
\]
Substituting this identification into \eqref{eq:thimble-change-variable} gives
\[
I_{\J_p^{\gtrless}}(\hbar)
=
e^{-S(p)/\hbar}
\int_{d^{\gtrless}} e^{-\xi/\hbar}\widehat\Psi_p(\xi)\,\dd\xi
=
\mathcal S^{\gtrless}\widetilde I_p(\hbar),
\]
which is exactly \eqref{eq:thimble-lateral-sum}.  

When \(n=1\), the fiber \(\Sigma_{\xi}^{\gtrless}\) consists of the two points \(z_A^{\gtrless}(\xi)\) and \(z_B^{\gtrless}(\xi)\), and \eqref{eq:thimble-change-variable} reduces to
\[
\rho_p^{\gtrless}(\xi)
=
\frac{\dd z_A^{\gtrless}}{\dd\xi}(\xi)-\frac{\dd z_B^{\gtrless}}{\dd\xi}(\xi),
\]
which is the original one-dimensional branch formula.
\end{proof}

Assume for the
moment that \(S\) has finitely many non-degenerate critical points.  For each
\(p\in \Crit(S)\), write the corresponding formal expansion in the form
\[
\widetilde I_p(\hbar)
=
e^{-S(p)/\hbar}\widetilde\Psi_p(\hbar),
\qquad
\widetilde\Psi_p\in \widetilde{\mathcal R}^{\mathrm{int}}.
\]
We define the critical-value graded vector space
\[
\mathcal M_S
:=
\bigoplus_{A\in S(\Crit(S))}\mathcal M_A,
\qquad
\mathcal M_A
:=
\operatorname{Span}_{\C}
\left\{
\widetilde I_p \mid p\in\Crit(S),\ S(p)=A
\right\}.
\]
Thus \(\widetilde I_p\) is homogeneous of degree \(S(p)\).  In the case where
the critical values are pairwise distinct, each graded piece
\(\mathcal M_{S(p)}\) is one-dimensional and is generated by
\(\widetilde I_p\).

Throughout this paper, we use the convention that the ordinary alien operator
\(\Delta_\omega^\pm\) acts only on the reduced resurgent series 
\(\widetilde\Psi_p\) and leaves the exponential factor fixed:
\[
\Delta_\omega^\pm
\left(
e^{-S(p)/\hbar}\widetilde\Psi_p
\right)
:=
e^{-S(p)/\hbar}\Delta_\omega^\pm\widetilde\Psi_p.
\]
Hence the pointed alien operator
\[
\dot\Delta_\omega^\pm:=e^{-\omega/\hbar}\Delta_\omega^\pm
\]
acts on a homogeneous element by
\[
\dot\Delta_\omega^\pm
\left(
e^{-S(p)/\hbar}\widetilde\Psi_p
\right)
=
e^{-(S(p)+\omega)/\hbar}
\Delta_\omega^\pm\widetilde\Psi_p.
\]

Write
\begin{equation*}
\widetilde{\mathbf I}:=\bigl(\widetilde I_{p_0},\dots,\widetilde I_{p_r}\bigr)
\end{equation*}
for the asymptotic series of 
\begin{equation*}
\mathbf I^{>}:=\mathbf I_{\mathbf J^{>}}
=\bigl(I_{\J_{p_0}^>},\dots,I_{\J_{p_r}^>}\bigr) 
\qquad
\text{and}
\qquad
\mathbf I^{<}:=\mathbf I_{\mathbf J^{<}}=
\bigl(I_{\J_{p_0}^<},\dots,I_{\J_{p_r}^<}\bigr) 
\end{equation*}
where
\begin{equation*}
\mathbf J^{>}:=\bigl(\J_{p_0}^>,\dots,\J_{p_r}^>\bigr)
\qquad \text{and}
\qquad
\mathbf J^{<}:=\bigl(\J_{p_0}^<,\dots,\J_{p_r}^<\bigr).
\end{equation*}

\begin{prop}
\label{prop:PL-origin-pointed-alien}
Assume that \(S\) has finitely many non-degenerate critical points and that
the critical values \(S(p)\), \(p\in\Crit(S)\), are pairwise distinct.  Fix a
Stokes ray \(d\) and the two adjacent generic phases defining
\(\mathbf J^{>}\) and \(\mathbf J^{<}\).  Let
\(R_+\in GL_{r+1}(\Z)\) be the Picard--Lefschetz wall-crossing matrix defined by
\begin{equation}\label{eq:PL-matrix-R-plus}
\mathbf J^{<}=\mathbf J^{>}R_+ .
\end{equation}
Then the Stokes automorphism preserves the critical-value graded vector space
\(\mathcal M_S\), and one has
\begin{equation}\label{eq:Stokes-action-PL-matrix}
\mathfrak S^+\widetilde{\mathbf I}
=
\widetilde{\mathbf I}R_+ .
\end{equation}
Consequently, for every \(p_j\in\Crit(S)\) and every singular point
\(\omega\in d\),
\begin{equation}\label{eq:pointed-alien-action-gap-formula}
\dot\Delta_\omega^+\widetilde I_{p_j}
=
\sum_{\substack{0\le i\le r\\ S(p_i)-S(p_j)=\omega}}
(R_+)_{ij}\,\widetilde I_{p_i}.
\end{equation}
In particular, \(\dot\Delta_\omega^+\widetilde I_{p_j}=0\) unless
\(\omega\) is an action gap from \(p_j\) to another critical value.  Thus each
\(\dot\Delta_\omega^+\) is homogeneous of degree \(\omega\) with respect to
the critical-value grading,
\[
\dot\Delta_\omega^+(\mathcal M_A)\subset \mathcal M_{A+\omega},
\]
where \(\mathcal M_{A+\omega}=0\) if \(A+\omega\notin S(\Crit(S))\).
\end{prop}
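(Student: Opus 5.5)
The plan is to compare the two lateral Borel--Laplace sums of the whole vector \(\widetilde{\mathbf I}\) with the two thimble bases, and then extract coefficients by uniqueness of exponential asymptotics.

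\textbf{Step 1: rewrite the wall-crossing analytically.} By Proposition~\ref{prop:thimble-lateral-sum}, applied componentwise, \(\mathbf I^{>}=\mathcal S^{>}\widetilde{\mathbf I}\) and \(\mathbf I^{<}=\mathcal S^{<}\widetilde{\mathbf I}\). Integrating \(e^{-S/\hbar}\mu\) over the homological relation \eqref{eq:PL-matrix-R-plus} is legitimate, since both bases represent classes in the same relative group for \(\hbar\) in a thin sector around the Stokes direction where both are admissible. This gives \(\mathbf I^{<}=\mathbf I^{>}R_+\), hence
\[
\mathcal S^{<}\widetilde{\mathbf I}=(\mathcal S^{>}\widetilde{\mathbf I})R_+ .
\]

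\textbf{Step 2: use the resurgent relation.} On the other side, \eqref{Stokesauto} gives \(\mathcal S^{<}\widetilde{\mathbf I}=\mathcal S^{>}(\mathfrak S^+\widetilde{\mathbf I})\). Since \(\mathcal S^{>}\) is linear, we obtain
\[
\mathcal S^{>}\bigl(\mathfrak S^+\widetilde{\mathbf I}-\widetilde{\mathbf I}R_+\bigr)=0 .
\]

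\textbf{Step 3: injectivity of \(\mathcal S^{>}\).} The main obstacle is to show that \(\mathcal S^{>}\) is injective on finite sums \(\sum_A e^{-A/\hbar}\widetilde\phi_A\), where the \(A\) are distinct and the \(\widetilde\phi_A\) lie in \(\hbar^{n/2}\C[[\hbar]]\). Here is how I would argue it.
\begin{itemize}
\item The exponents involved are \(A=S(p_i)\), which all lie on a single line of direction \(e^{i\theta_*}\) and are pairwise distinct.
\item Order them by \(\re(e^{-i\theta}A)\). For \(\hbar\) in the lateral sector, each \(\mathcal S^{>}\widetilde\phi_A\) is asymptotic to \(\widetilde\phi_A\).
\item The terms are therefore exponentially separated. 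Multiplying by \(e^{A_{\min}/\hbar}\) and letting \(\hbar\to0\) kills every term except the dominant one, so its asymptotic expansion, and hence \(\widetilde\phi_{A_{\min}}\), vanishes.
\item Induction on the number of exponents then shows that every \(\widetilde\phi_A\) vanishes.
\end{itemize}
To apply this we need \(\mathfrak S^+\widetilde I_{p_j}\) to be a finite sum of this form. This follows from the structure already established: \(\mathfrak S^+\widetilde I_{p_j}=\mathcal S^{>\,-1}\) of a finite combination of thimble integrals, and each of those is a lateral sum. More concretely, \(\mathfrak S^+\widetilde I_{p_j}\) is determined by the transseries whose lateral sum equals \(\sum_i (R_+)_{ij}I_{\J^{>}_{p_i}}\).

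The only delicate point is matching graded pieces: the exponential prefactors \(e^{-(S(p_j)+\omega)/\hbar}\) produced by \(\dot\Delta_\omega^+\) must be compared with the \(e^{-S(p_i)/\hbar}\). I would handle this by working in the graded space spanned by the transseries monomials \(e^{-c/\hbar}\hbar^{\alpha}\C[[\hbar]]\) and applying the injectivity argument to that space. Conclusion: \(\mathfrak S^+\widetilde{\mathbf I}=\widetilde{\mathbf I}R_+\), which is \eqref{eq:Stokes-action-PL-matrix}. In particular \(\mathfrak S^+\) preserves \(\mathcal M_S\).

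\textbf{Step 4: read off the alien operators.} Expand \(\mathfrak S^+=\Id+\sum_{\omega\in d}\dot\Delta_\omega^+\) as in \eqref{eq:Stokesauto}. Each \(\dot\Delta_\omega^+\widetilde I_{p_j}\) carries the exponential factor \(e^{-(S(p_j)+\omega)/\hbar}\), and distinct \(\omega\) give distinct exponents. Comparing the component of degree \(S(p_j)+\omega\) on both sides of \eqref{eq:Stokes-action-PL-matrix}, again using uniqueness of transseries coefficients, yields \eqref{eq:pointed-alien-action-gap-formula}. When \(S(p_j)+\omega\) is not a critical value, there is no matching term on the right-hand side, so \(\dot\Delta_\omega^+\widetilde I_{p_j}=0\). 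This proves the homogeneity statement \(\dot\Delta_\omega^+(\mathcal M_A)\subset\mathcal M_{A+\omega}\). The diagonal entries \((R_+)_{jj}=1\) match the \(\Id\) term, which is consistent with \(R_+\) being unitriangular.
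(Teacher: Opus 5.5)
Your argument is the paper's proof step for step. Linearity of the integral in the cycle gives $\mathbf I^{<}=\mathbf I^{>}R_+$, and Proposition~\ref{prop:thimble-lateral-sum} together with \eqref{Stokesauto} turns this into $\mathcal S^{>}\bigl(\mathfrak S^+\widetilde{\mathbf I}-\widetilde{\mathbf I}R_+\bigr)=0$. Injectivity of $\mathcal S^{>}$ then removes the summation, and taking homogeneous components gives \eqref{eq:pointed-alien-action-gap-formula}. Your injectivity step is in fact more careful than the paper's. The paper asserts injectivity only on $\mathcal M_S$, the span of the $\widetilde I_p$, although $\mathfrak S^+\widetilde I_{p_j}$ is not yet known to lie there. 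You instead work in finite transseries $\bigoplus_c e^{-c/\hbar}\hbar^{\alpha}\C[[\hbar]]$ with all $c$ on one line of direction $e^{i\theta_*}$. That is the right setting, and your dominant-exponential induction is correct there.

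The one step that does not hold up as written is the claim that $\mathfrak S^+\widetilde I_{p_j}=\widetilde I_{p_j}+\sum_{\omega\in d}\dot\Delta^+_\omega\widetilde I_{p_j}$ is a \emph{finite} sum. You write it as ``$\mathcal S^{>\,-1}$ of a finite combination of thimble integrals''. That presupposes $\mathcal S^{>}$ is injective on a space already known to contain $\mathfrak S^+\widetilde I_{p_j}$, which is exactly what is being established, so the argument is circular. The paper glosses over the same point, but you still need an independent input.

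The natural input is this. $\widehat\Psi_{p_j}(\xi)=\int_{\Sigma_\xi}\mu/\dd S$ is a Gelfand--Leray period of $S$. Under the tameness (no critical points at infinity) assumptions, $S$ is a locally trivial fibration over $\C$ minus the finitely many critical values. So the vanishing cycle $\Sigma_\xi$ can be transported along any path avoiding them, and every continuation of $\widehat\Psi_{p_j}$, on every sheet, is singular only at points of $\{S(p_i)-S(p_j)\}$. Hence $\Delta^+_\omega\widetilde\Psi_{p_j}=0$ unless $\omega$ is an action gap, only finitely many terms survive, and your finite-exponent injectivity argument applies verbatim. This also gives the vanishing statement directly.

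Alternatively, you could allow infinitely many $\omega$. In that case you would need a uniform tail estimate on $\sum_\omega e^{-\omega/\hbar}\mathcal S^{>}\bigl(\Delta^+_\omega\widetilde\Psi_{p_j}\bigr)$ to run the dominant-term argument.
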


\begin{proof}
By the thimble basis theorem and the Picard--Lefschetz formula, the two
one-sided thimble bases are related by the integral matrix
\eqref{eq:PL-matrix-R-plus}; see Theorem~\ref{thm:thimble-basis} and the
discussion of Picard--Lefschetz jumps above.  Since the exponential integral
is linear in the integration cycle, \eqref{eq:PL-matrix-R-plus} gives
\begin{equation}\label{eq:integral-PL-matrix-R-plus}
\mathbf I^{<}=\mathbf I^{>}R_+ .
\end{equation}

On the other hand, Proposition~\ref{prop:thimble-lateral-sum} identifies the
two one-sided thimble integrals with the corresponding lateral
Borel--Laplace sums:
\[
I_{\J_{p_i}^{>}}=\mathcal S^{>}\widetilde I_{p_i},
\qquad
I_{\J_{p_i}^{<}}=\mathcal S^{<}\widetilde I_{p_i}.
\]
Together with the Stokes relation \eqref{Stokesauto}, this implies, for each
\(j\),
\[
\mathcal S^{>}
\bigl(\mathfrak S^+\widetilde I_{p_j}\bigr)
=
\mathcal S^{<}\widetilde I_{p_j}
=
I_{\J_{p_j}^{<}}.
\]
Using \eqref{eq:integral-PL-matrix-R-plus}, the last term is
\[
I_{\J_{p_j}^{<}}
=
\sum_{i=0}^r (R_+)_{ij} I_{\J_{p_i}^{>}}
=
\mathcal S^{>}
\left(
\sum_{i=0}^r (R_+)_{ij}\widetilde I_{p_i}
\right).
\]
Hence
\[
\mathcal S^{>}
\left(
\mathfrak S^+\widetilde I_{p_j}
-
\sum_{i=0}^r (R_+)_{ij}\widetilde I_{p_i}
\right)=0.
\]
By the standard uniqueness of finite exponential asymptotic expansions in a
regular lateral sector, the lateral summation map is injective on the finite
critical-value graded space \(\mathcal M_S\).  Therefore
\[
\mathfrak S^+\widetilde I_{p_j}
=
\sum_{i=0}^r (R_+)_{ij}\widetilde I_{p_i},
\]
which proves \eqref{eq:Stokes-action-PL-matrix} and, in particular,
\(\mathfrak S^+(\mathcal M_S)\subset\mathcal M_S\). It remains only to take the homogeneous components with respect to the
critical-value grading.  By the convention fixed above,
\[
\dot\Delta_\omega^+
\left(
e^{-S(p_j)/\hbar}\widetilde\Psi_{p_j}
\right)
=
e^{-(S(p_j)+\omega)/\hbar}
\Delta_\omega^+\widetilde\Psi_{p_j},
\]
so \(\dot\Delta_\omega^+\) has degree \(\omega\).  Since
\[
\mathfrak S^+=\Id+\sum_{\omega\in d}\dot\Delta_\omega^+,
\]
the degree \(S(p_j)+\omega\) part of
\(\mathfrak S^+\widetilde I_{p_j}\) is precisely
\(\dot\Delta_\omega^+\widetilde I_{p_j}\) when \(\omega\neq0\).  Taking the
same homogeneous component in
\[
\mathfrak S^+\widetilde I_{p_j}
=
\sum_{i=0}^r (R_+)_{ij}\widetilde I_{p_i}
\]
gives exactly \eqref{eq:pointed-alien-action-gap-formula}.  If no critical
point \(p_i\) satisfies \(S(p_i)-S(p_j)=\omega\), the corresponding homogeneous
piece is zero.
\end{proof}

Thus the matrices obtained from alien calculus govern the same wall-crossing relations as those of the Lefschetz thimbles. 

\begin{rmk} The viewpoint used here is closely related to the analysis of remainders in the method of steepest descents developed by Berry and Howls \cite{BerryHowls1991Hyperasymptotics} and further studied by Boyd \cite{Boyd1993SteepestDescents}. After truncating the asymptotic expansion associated with one saddle point, the remainder can be represented in terms of integrals associated with adjacent saddle points. In this way, contributions from other critical points enter through the remainder term. In the present paper, the same interaction between critical points is encoded by the pointed alien action on the critical-value graded vector space. For the Airy function, explicit remainder representations of this type are given in \cite[Sec.~5, especially (53)--(57)]{Boyd1993SteepestDescents}; for the modified Bessel function, see \cite[Sec.~6, especially (67)--(70)]{Boyd1993SteepestDescents}. Related representations for the Gamma function were obtained by Boyd \cite{Boyd1994Gamma}; see also \cite{Nemes2015Gamma} for refinements leading to improved error bounds and exponentially improved asymptotic expansions. \end{rmk}

\smallskip
\subsection{The Hopf algebra of pointed alien operators}
\label{subsec:alien-Hopf-plus}

We now record the Hopf-algebraic structure carried by the pointed alien
operators \(\dot\Delta_w^+\), where \(w\in\Lambda_d\subset d\) and
\(\Lambda_d\) is the additive semigroup generated by the nonzero Borel
singularities on the Stokes ray \(d\). For more details on the algebraic structures behind alien operators and
mould calculus, we refer the reader to \cite{Ecalle81a} and
\cite{Sauzin2009SaddleNodeMould}. The Stokes automorphism in the
direction \(d\) is written as
\begin{equation}\label{eq:Stokes-as-dotDelta-plus-series}
\mathfrak S_d^+
=
\Id+\sum_{w\in\Lambda_d}\dot\Delta_w^+ .
\end{equation}

For a word
\[
\underline{w}=w_1\cdots w_r,
\qquad
w_i\in\Lambda_d,
\]
we write
\begin{equation}\label{eq:Sauzin-word-convention-dotDelta-plus}
\dot\Delta_{\underline{w}}^{+}
:=
\dot\Delta_{w_r}^{+}\cdots \dot\Delta_{w_1}^{+}.
\end{equation}
For the empty word \(\emptyset\), we set
\[
\dot\Delta_{\emptyset}^{+}:=\Id,
\qquad
|\emptyset|=0.
\]

Let \(\mathcal H_d\) be the completed noncommutative algebra formally generated
by the operators \(\dot\Delta_w^+\), with unit
\(\dot\Delta_\emptyset^+=\Id\), completed with respect to the total weight
\[
|\underline{w}|:=w_1+\cdots+w_r.
\]
The product is operator composition.  With the convention
\eqref{eq:Sauzin-word-convention-dotDelta-plus}, one has
\begin{equation*}\label{eq:dotDelta-plus-product-opposite-concat}
\dot\Delta_{\underline{w}}^{+}\,
\dot\Delta_{\underline{v}}^{+}
=
\dot\Delta_{\underline{v}\underline{w}}^{+}.
\end{equation*}

The coproduct is determined by the modified Leibniz rule for
\(\dot\Delta_w^+\).  Since the full Stokes automorphism is an algebra
automorphism, we have
\[
\mathfrak S_d^+(FG)=\mathfrak S_d^+(F)\mathfrak S_d^+(G).
\]
Using \eqref{eq:Stokes-as-dotDelta-plus-series} and comparing the homogeneous
part of total weight \(w\), we obtain
\begin{equation*}\label{eq:modified-Leibniz-dotDelta-plus}
\dot\Delta_w^+(FG)
=
\bigl(\dot\Delta_w^+F\bigr)G
+
F\bigl(\dot\Delta_w^+G\bigr)
+
\sum_{\substack{w_1+w_2=w\\ w_1,w_2\in\Lambda_d}}
\bigl(\dot\Delta_{w_1}^+F\bigr)
\bigl(\dot\Delta_{w_2}^+G\bigr).
\end{equation*}

This motivates the coproduct
\begin{equation}\label{eq:dotDelta-plus-coproduct}
\Delta_{\mathcal H}\bigl(\dot\Delta_w^+\bigr)
=
\dot\Delta_w^+\otimes1
+
1\otimes\dot\Delta_w^+
+
\sum_{\substack{w_1+w_2=w\\ w_1,w_2\in\Lambda_d}}
\dot\Delta_{w_1}^+\otimes \dot\Delta_{w_2}^+ .
\end{equation}
It is extended to \(\mathcal H_d\) as an algebra homomorphism:
\[
\Delta_{\mathcal H}(XY)
=
\Delta_{\mathcal H}(X)\Delta_{\mathcal H}(Y).
\]
Consequently the Stokes automorphism is group-like:
\begin{equation*}\label{eq:Stokes-group-like-dotDelta-plus}
\Delta_{\mathcal H}(\mathfrak S_d^+)
=
\mathfrak S_d^+\otimes \mathfrak S_d^+ .
\end{equation*}

The counit is defined by
\begin{equation*}\label{eq:dotDelta-plus-counit}
\varepsilon(\dot\Delta_{\emptyset}^{+})=1,
\qquad
\varepsilon(\dot\Delta_{\underline{w}}^{+})=0
\quad
\text{for every nonempty word } \underline{w}.
\end{equation*}
The antipode is the map
\[
\mathsf S_{\mathcal H}:\mathcal H_d\to\mathcal H_d
\]
defined on homogeneous generators by
\begin{equation*}\label{eq:dotDelta-plus-antipode-generator}
\mathsf S_{\mathcal H}(\dot\Delta_w^+)
=
\sum_{r\ge1}(-1)^r
\sum_{\substack{w_1+\cdots+w_r=w\\
w_i\in\Lambda_d}}
\dot\Delta_{w_1}^+\dot\Delta_{w_2}^+\cdots\dot\Delta_{w_r}^+ .
\end{equation*}
Note that $0\notin\Lambda_d$. The antipode is extended anti-multiplicatively:
\begin{equation*}\label{eq:dotDelta-plus-antipode-word}
\mathsf S_{\mathcal H}
\left(
\dot\Delta_{\underline{w}}^{+}
\right)
=
\mathsf S_{\mathcal H}
\left(
\dot\Delta_{w_r}^{+}\cdots\dot\Delta_{w_1}^{+}
\right)
=
\mathsf S_{\mathcal H}(\dot\Delta_{w_1}^{+})
\cdots
\mathsf S_{\mathcal H}(\dot\Delta_{w_r}^{+}).
\end{equation*}
Equivalently, \(\mathsf S_{\mathcal H}\) is characterized by the convolution
inverse identity
\[
m(\mathsf S_{\mathcal H}\otimes\Id)\Delta_{\mathcal H}(X)
=
\varepsilon(X)1
=
m(\Id\otimes\mathsf S_{\mathcal H})\Delta_{\mathcal H}(X).
\]
Since \(\mathfrak S_d^+\) is group-like, the antipode sends it to its inverse:
\begin{equation*}\label{eq:antipode-Stokes-inverse}
\mathsf S_{\mathcal H}(\mathfrak S_d^+)
=
(\mathfrak S_d^+)^{-1}
=
\mathfrak S_d^- .
\end{equation*}
Thus, on the level of Stokes matrices, the antipode corresponds to inverse
wall-crossing.

Finally, the logarithmic pointed alien operator is obtained by taking the
logarithm of the Stokes automorphism:
\begin{equation}\label{eq:dotDelta-log-Stokes}
\dot\Delta_d
:=
\log \mathfrak S_d^+
=
\log\left(\Id+\sum_{w\in\Lambda_d}\dot\Delta_w^+\right).
\end{equation}
We write
\[
\dot\Delta_d=\sum_{w\in\Lambda_d}\dot\Delta_w
\]
for its decomposition into homogeneous weights.  Comparing the homogeneous
part of weight \(w\), one obtains the so-called alien derivations
\begin{equation}\label{eq:dotDelta-from-dotDelta-plus}
\dot\Delta_w
=
\sum_{r\ge1}\frac{(-1)^{r-1}}{r}
\sum_{\substack{w_1+\cdots+w_r=w\\ w_i\in\Lambda_d}}
\dot\Delta_{w_1}^+\dot\Delta_{w_2}^+\cdots\dot\Delta_{w_r}^+ .
\end{equation}

\begin{rmk}[Gamma model as a guide to the Hopf dictionary]
In the Gamma model (see Section \ref{sec:gamma-model} for details), after the rotation used below, the Stokes ray is
\(d=\R_{>0}\), and the action semigroup is
\(\Lambda_d=2\pi\mathbb N^\ast\).  The pointed alien operators act by shifts,
\[
\dot\Delta_{2\pi k}^+\widetilde I_n=\widetilde I_{n-k}.
\]
Thus the Hopf operations admit the following Picard--Lefschetz interpretation
in this example.

\begin{table}[H]
\centering
\renewcommand{\arraystretch}{1.45}
\begin{tabular}{p{0.45\textwidth}|p{0.45\textwidth}}
\hline
\textbf{Picard--Lefschetz side} & \textbf{Alien Hopf side} \\
\hline
Broken chain
\(p_{n-2}\to p_{n-1}\to p_n\).
&
Product of pointed alien operators:
\(\displaystyle
(\dot\Delta_{2\pi}^+)^2\widetilde I_n
=
\dot\Delta_{4\pi}^+\widetilde I_n.
\)
\\
\hline
Product thimble \(\J_a\times\J_b\), where the total action difference splits
between the two factors.
&
Coproduct / modified Leibniz rule:
\(\displaystyle
\Delta_{\mathcal H}(\dot\Delta_w^+)
=
\dot\Delta_w^+\otimes1
+
1\otimes\dot\Delta_w^+
+
\sum_{\substack{w_1+w_2=w\\ w_1,w_2\in\Lambda_d}}
\dot\Delta_{w_1}^+\otimes\dot\Delta_{w_2}^+.
\)
\\
\hline
Inverse wall-crossing matrix.
&
Antipode:
\(\displaystyle
\mathsf S_{\mathcal H}(\mathfrak S^+)
=
(\mathfrak S^+)^{-1}
=
\mathfrak S^-.
\)
\\
\hline
\end{tabular}
\caption{The Gamma model interpretation of the Hopf operations.}
\label{table:gamma-hopf-dictionary}
\end{table}
\end{rmk}

\medskip
\section{The Airy model}

We specialize to the Airy action
\begin{equation*}
S(z)=\frac{z^3}{3}-z,
\qquad z\in\C,
\end{equation*}
and the corresponding exponential integrals
\begin{equation*}
I_\Gamma(\hbar)=\int_\Gamma e^{-S(z)/h}\,\dd z,
\end{equation*}
for some appropriate integral cycle $\Gamma$. 

\begin{rmk}
We call this the Airy model because it is the specialization at \(x=1\) of
the standard Airy phase
\[
S_x(z)=\frac{z^3}{3}-xz.
\]
If \(x\) is kept as a parameter, then for suitable rapidly decaying cycles, the integral
\[
I_\Gamma(x,\hbar)
=
\int_\Gamma
\exp\left(-\frac{1}{\hbar}\left(\frac{z^3}{3}-xz\right)\right)\,\dd z
\]
satisfies the semiclassical Airy equation
\[
\hbar^2\,\frac{\dd^2 I_\Gamma}{\dd x^2}=x\,I_\Gamma.
\]
\end{rmk}

The critical points
$p_+=1$ and $p_-=-1$ are non-degenerate and the critical values are
\begin{equation*}
S(p_+)=-\frac23,
\qquad
S(p_-)=\frac23.
\end{equation*}
Choosing the standard hermritian metric, the negative gradient flow is given by
\begin{equation}\label{eq:airy-flow-complex}
\dot{z}=-\overline{e^{-i\theta}(z^2-1)}.
\end{equation}

Since
\begin{equation*}
S(p_-)-S(p_+)=\frac43,
\end{equation*}
the Stokes  phases are
\begin{equation*}
\theta_*=0\pmod\pi.
\end{equation*}

\smallskip
\subsection{Lefschetz thimbles near the Stokes phase}
For a fixed regular phase \(\theta\), that is, $\theta \notin \pi\Z$, the shape of the stable and unstable manifolds is determined by two pieces of data: the fact that every flow line is contained in a level set of 
$$
G_\theta=\im\bigl(e^{-i\theta}S)
$$
and the local flow directions at each critical point. 
\begin{itemize}
    \item The stable and unstable manifolds through \(p_-\) and \(p_+\) lie in the level sets
\begin{equation*}\label{eq:airy-Gtheta-levels}
G_\theta(p_-)=-\frac23\sin\theta
\qquad\text{and}\qquad
G_\theta(p_+)=\frac23\sin\theta.
\end{equation*}
    \item Linearizing \eqref{eq:airy-flow-complex} at the two critical points gives
\begin{equation*}
\dot\xi=
2e^{i\theta}\overline{\xi}
\qquad\text{at }p_-=-1;
\qquad
\dot\xi=
-2e^{i\theta}\overline{\xi}
\qquad\text{at }p_+=1.
\end{equation*}
Hence locally the flow directions are 
\begin{equation*}\label{eq:airy-linearization-summary}
\begin{aligned}
&\text{at }p_-=-1: &&\arg\xi=\frac\theta2 \pmod\pi \text{ unstable},
&&\quad\arg\xi=\frac\theta2+\frac\pi2 \pmod\pi \text{ stable},
\\
&\text{at }p_+=1:&&\arg\xi=\frac\theta2 \pmod\pi \text{ stable},
&&\quad\arg\xi=\frac\theta2+\frac\pi2 \pmod\pi \text{ unstable}.
\end{aligned}
\end{equation*}
\end{itemize}

For example, consider sufficient small $\theta>0$, the configuration of 
\[
W^s_\theta(p_-)\cup W^u_\theta(p_-)
\qquad\text{and}\qquad
W^s_\theta(p_+)\cup W^u_\theta(p_+).
\]
is shown in the right panel of Figure~\ref{fig:airy-three-phases}.

\smallskip
\subsection{Connecting trajectories and the Picard--Lefschetz jump formula}
In real coordinates \(z=u+iv\), the flow equation reads as follows:
\begin{equation}\label{eq:airy-flow-real}
\begin{aligned}
u_s&=-(u^2-v^2-1)\cos\theta-2uv\sin\theta,
\\
v_s&=-(u^2-v^2-1)\sin\theta+2uv\cos\theta,
\end{aligned}
\end{equation}
and the imaginary part $G_{\theta}(z)$ in the real coordinates $(u,v)$ reads as 
\begin{equation*}
G_\theta(u,v)
=\Bigl(u^2v-\frac{v^3}{3}-v\Bigr)\cos\theta
-\Bigl(\frac{u^3}{3}-uv^2-u\Bigr)\sin\theta.
\end{equation*}
At the Stokes phase $\theta_*=0$, 
\begin{equation*}
G_0^{-1}(0)
=
\{v=0\}\cup\left\{u^2=1+\frac{v^2}{3}\right\}.
\end{equation*}

\noindent(i) First we consider the real axis \(v=0\). The flow equation \eqref{eq:airy-flow-real} reduces to
\[
u_s=1-u^2.
\]
Then we have 
\begin{itemize}
\item for \(-1<u<1\), one has \(u_s>0\), and the forward flow converges to \(u=1\);
\item for \(u>1\), one has \(u_s<0\), and the forward flow also converges to \(u=1\);
\item for \(u<-1\), one has \(u_s<0\), so the forward flow
escapes to \(-\infty\), while the backward flow converges to \(u=-1\). 
\end{itemize}
Therefore
\[
W_0^s(p_+)\cap \{v=0\}
=
\{(u,v):v=0,\ u>-1\},
\]
and
\[
W_0^u(p_-)\cap \{v=0\}
=
\{(u,v):v=0,\ u<1\}.
\]
In particular, the open interval
\[
\{(u,0):-1<u<1\}=W_0^u(p_-)\cap W_0^s(p_+)\cap \{v=0\}
\]
corresponds to the unique real connecting trajectory from \(z_-\) to \(z_+\), see Proposition \ref{prop:airy-exact-trajectory} below.

\smallskip
\noindent (ii) Second we consider the hyperbola
\[
u^2-\frac{v^2}{3}=1.
\]
On the right branch, write
\[
u=\cosh r,\qquad v=\pm \sqrt3\sinh r,\qquad r\ge 0.
\]
The flow equation is
\[
u_s=1-u^2+v^2
=
1-\cosh^2 r+3\sinh^2 r
=
2\sinh^2 r.
\]
We obtain the flow of $r$ by
\[
r_s=2\sinh r.
\]
Thus \(r\to 0\) as \(s\to-\infty\), while \(r\to+\infty\) as \(s\to+\infty\).
Therefore the right branch of the hyperbola is the unstable manifold of \(p_+\):
\[
W_0^u(p_+)
=
\left\{
(u,v):u^2-\frac{v^2}{3}=1,\ u\ge 1
\right\}.
\]
Similarly, the left branch of the hyperbola gives the stable
manifold of \(p_-\):
\[
W_0^s(p_-)
=
\left\{
(u,v):u^2-\frac{v^2}{3}=1,\ u\le -1
\right\}.
\]

Combining the preceding analysis, the configuration of $W_0^{u,s}(p_{\pm})$ is shown in the middle panel of Figure~\ref{fig:airy-three-phases}; the blue segment in that panel is precisely the connecting branch along the real axis.

\begin{prop}\label{prop:airy-exact-trajectory}
At \(\theta_*=0\), up to the $s$-translation, the connecting trajectory from \(p_-=-1\) to \(p_+=1\) is
\begin{equation*}
z(s)=\tanh(s-s_0),
\qquad s_0\in\R.
\end{equation*}
In particular,
\begin{equation*}
z(s)\to -1\quad (s\to-\infty),
\qquad
z(s)\to 1\quad (s\to+\infty).
\end{equation*}
\end{prop}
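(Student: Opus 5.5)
The plan has two parts. First I verify that $z(s)=\tanh(s-s_0)$ is a connecting trajectory. Then I show that every connecting trajectory from $p_-$ to $p_+$ at $\theta_*=0$ is of this form.

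\emph{Existence.} At $\theta_*=0$ the flow \eqref{eq:airy-flow-complex} reads $\dot z=-\overline{(z^2-1)}=1-\bar z^2$. For real $z=u$ this reduces to the scalar equation $u_s=1-u^2$, as in \eqref{eq:airy-flow-real} with $v=0$. Since $\frac{\dd}{\dd s}\tanh(s-s_0)=1-\tanh^2(s-s_0)$, the function $z(s)=\tanh(s-s_0)$ solves it. Its limits are $\mp1$ as $s\to\mp\infty$, so it is a trajectory from $p_-$ to $p_+$.

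\emph{Uniqueness up to translation.} Let $z(s)$ be any trajectory with $z\to-1$ as $s\to-\infty$ and $z\to1$ as $s\to+\infty$.
\begin{enumerate}
\item By Lemma~\ref{lem:F-decreases-G-conserved}, $G_0$ is constant along the trajectory. Its value must be $G_0(p_\pm)=0$, so the trajectory lies in $G_0^{-1}(0)=\{v=0\}\cup\{u^2=1+v^2/3\}$.
\item The trajectory is a connected curve, it is not stationary, and it never passes through the critical points $\pm1$. Hence it stays inside a single connected component of $G_0^{-1}(0)\setminus\{\pm1\}$. These components are $(-\infty,-1)$, $(-1,1)$, $(1,\infty)$ on the real axis, together with the four open half-branches of the hyperbola.
\item The earlier analysis already rules out every component except one. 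The half-branches of the right hyperbola branch are unstable for $p_+$ and escape to infinity ($r_s=2\sinh r$), so they never approach $p_-$. Symmetrically, the left branch is the stable manifold of $p_-$, whose backward flow escapes to infinity, so it never comes out of $p_-$. The real rays $u>1$ and $u<-1$ flow toward $+1$ and toward $-\infty$ respectively, so neither has backward limit $-1$. Only the open interval $(-1,1)$ remains.
\item On $(-1,1)$ the flow is the autonomous scalar ODE $u_s=1-u^2$. Every solution in this interval has the form $\tanh(s-s_0)$: choose $s_0$ so that $u(s_0)=0$, which is possible because the solution is monotone increasing from $-1$ to $1$, and then apply ODE uniqueness.
\end{enumerate}

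The main obstacle is the topological step 2. The level set is singular exactly at $\pm1$, where the line and the hyperbola cross. I must argue that a trajectory cannot switch between branches at these crossing points. This holds because such a switch could only occur at a fixed point, and a non-stationary trajectory never reaches a fixed point in finite time.
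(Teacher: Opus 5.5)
Your proposal is correct and follows the paper's route: confine the trajectory to $G_0^{-1}(0)$, exclude the hyperbola branches and the outer real rays, and integrate $u_s=1-u^2$ on $(-1,1)$. You make the component-by-component argument explicit, whereas the paper compresses it into the single assertion that every such trajectory lies in $[-1,1]$. One small slip in step 3 needs fixing: the ray $u<-1$ \emph{does} have backward limit $-1$, since it is the left unstable half-branch of $p_-$. It should therefore be excluded because its forward flow escapes to $-\infty$ rather than converging to $p_+$, not because of its backward limit.
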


\begin{proof}
Since
\begin{equation*}
F_0(p_-)=\frac23>-\frac23=F_0(p_+),\qquad G_0(p_-)=G_0(p_+)=0,
\end{equation*}
the possible trajectory is from $p_-$ to $p_+$, and every such trajectory is contained in the real axis $[-1,1]$. Restricting \eqref{eq:airy-flow-real} to \(v=0\), we obtain
\begin{equation*}\label{eq:airy-real-axis-flow}
\dot u=1-u^2,\quad \text{equivalently, }~~ \frac{\dd u}{1-u^2}=\dd s.
\end{equation*}
Integrating gives
\begin{equation*}
\operatorname{arctanh}(u)=s-s_0,
\end{equation*}
and therefore
\begin{equation*}
u(s)=\tanh(s-s_0).
\end{equation*}
This proves both existence and uniqueness up to translation in \(s\).
\end{proof}

For \(|\theta|\ll1\), the branch of \(W^u_\theta(p_-)\) (resp. $W_{\theta}^s(p_+)$) that deforms from the real interval \((-1,1)\) lies on
\begin{equation*}\label{eq:airy-level-minus}
G_\theta(u,v)=-\frac23\sin\theta, \quad (\text{resp. }\, G_\theta(u,v)=\frac23\sin\theta).
\end{equation*}
Writing these two branches as graphs
\begin{equation*}
v=v_-(u;\theta),
\qquad
v=v_+(u;\theta),
\qquad
u\in(-1,1),
\end{equation*}
we compute, to first order in \(\theta\),
\begin{equation*}\label{eq:airy-small-branch-minus}
v_-(u;\theta)
=
\theta\,\frac{S(u)-\frac23}{u^2-1}
+
O(\theta^2),
\end{equation*}
and
\begin{equation*}\label{eq:airy-small-branch-plus}
v_+(u;\theta)
=
\theta\,\frac{S(u)+\frac23}{u^2-1}
+
O(\theta^2).
\end{equation*}
Using
\begin{equation*}
S(u)-\frac23=\frac{(u-2)(u+1)^2}{3},
\qquad
S(u)+\frac23=\frac{(u-1)^2(u+2)}{3},
\end{equation*}
we can read off the signs on \((-1,1)\):
\begin{equation*}\label{eq:airy-sign-conclusion}
\begin{cases}
\theta<0:\quad v_-(u;\theta)<0,\quad v_+(u;\theta)>0,\\[4pt]
\theta>0:\quad v_-(u;\theta)>0,\quad v_+(u;\theta)<0.
\end{cases}
\end{equation*}
Thus for \(\theta\neq0\) sufficiently small, the connecting trajectory disappears, the unstable branch from \(p_-\) and the stable branch into \(p_+\) move to opposite sides of the real axis.  The complementary branches are smooth deformations of the outer branches $W_0^{u,s}(p_{\pm})$, and together they determine the full stable and unstable manifolds for fixed regular \(\theta\). See Figure \ref{fig:airy-three-phases}. 

\begin{figure}[H]
\centering

\tikzset{
  flowstable/.style={
    black,
    line width=1.1pt,
    -{Latex[length=2.2mm]}
  },
  flowunstable/.style={
    green!60!black,
    line width=1.3pt,
    -{Latex[length=2.2mm]}
  },
  flowtraj/.style={
    blue!75!black,
    line width=2.5pt,
    -{Latex[length=2.5mm]}
  }
}

\colorlet{sectorred}{red!18}
\begin{minipage}[t]{0.32\textwidth}
\centering
\[
\theta<0
\]
\resizebox{\linewidth}{!}{
\begin{tikzpicture}[scale=1.0, >=Latex]
\def\R{4.9}
\def\th{-18}
\fill[sectorred] (0,0) --
    ({\th/3-30}:\R) arc ({\th/3-30}:{\th/3+30}:\R) -- cycle;
  \fill[sectorred] (0,0) --
    ({\th/3+90}:\R) arc ({\th/3+90}:{\th/3+150}:\R) -- cycle;
  \fill[sectorred] (0,0) --
    ({\th/3+210}:\R) arc ({\th/3+210}:{\th/3+270}:\R) -- cycle;
\draw[->, thin] (-5.4,0) -- (5.4,0) node[right] {$u$};
\draw[->, thin] (0,-4.8) -- (0,4.8) node[above] {$v$};
\filldraw[black] (-1,0) circle (1.5pt);
\filldraw[black] ( 1,0) circle (1.5pt);
\node[below left=1pt] at (-1,0) {$p_{-}$};
\node[below right=1pt] at (1,0) {$p_{+}$};
\draw[flowstable]
    ({117}:\R) .. controls (-1.15,2.85) and (-1,0.95) .. (-1,0);
\draw[flowstable]
    ({245}:\R) .. controls (-1.25,-2.95) and (-1.05,-1.30) .. (-1,0);
\draw[flowstable]
    ({-6}:\R) .. controls (3.95,-0.38) and (2.05,-0.15) .. (1,0);
\draw[flowstable]
    ({114}:\R) .. controls (-0.7,3) and (-0.6,1) .. (-0.5,0.5) .. controls (-0.5,0.15) and (0.2,0.1) .. (1,0);

\draw[flowunstable]
    (-1,0) .. controls (-2.05,0.15) and (-3.95,0.38) .. ({174}:\R);
\draw[flowunstable]
    (-1,0) .. controls (-0.2,-0.1) and (0.5,-0.15)..(0.5,-0.5) .. controls (0.6,-1) and (0.7,-3) .. ({294}:\R);
\draw[flowunstable]
    (1,0) .. controls (0.95,0.95) and (1.15,2.85) .. ({65}:\R);
\draw[flowunstable]
    (1,0) .. controls (1,-0.95) and (1.15,-2.85) .. ({297}:\R);
\end{tikzpicture}
}
\end{minipage}
\hfill
\begin{minipage}[t]{0.32\textwidth}
\centering
\[
\theta=0
\]
\resizebox{\linewidth}{!}{
\begin{tikzpicture}[scale=1.15, >=Latex]
\def\R{4.8}
\fill[sectorred] (0,0) -- (-30:\R) arc (-30:30:\R) -- cycle;
\fill[sectorred] (0,0) -- (90:\R) arc (90:150:\R) -- cycle;
\fill[sectorred] (0,0) -- (210:\R) arc (210:270:\R) -- cycle;
\draw[->, thin] (-5.4,0) -- (5.4,0) node[right] {$u$};
\draw[->, thin] (0,-4.8) -- (0,4.8) node[above] {$v$};

\draw[flowstable, domain=4.05:0, samples=120, smooth]
    plot ({-sqrt(1 + (\x *\x)/3)}, {\x});
\draw[flowstable, domain=-4.05:0, samples=120, smooth]
    plot ({-sqrt(1 + (\x*\x)/3)}, {\x});
\draw[flowstable] (5.1,0) -- (1,0);

\draw[flowunstable] (-1,0) -- (-5.1,0);
\draw[flowunstable, domain=0:4.05, samples=120, smooth]
    plot ({sqrt(1 + (\x*\x)/3)}, {\x});
\draw[flowunstable, domain=0:-4.05, samples=120, smooth]
    plot ({sqrt(1 + (\x*\x)/3)}, {\x});

\draw[flowtraj] (-1,0) -- (1,0);
\filldraw[black] (-1,0) circle (1.5pt);
\filldraw[black] ( 1,0) circle (1.5pt);
\node[below left=1pt] at (-1,0) {$p_{-}$};
\node[below right=1pt] at (1,0) {$p_{+}$};
\end{tikzpicture}
}
\end{minipage}
\hfill
\begin{minipage}[t]{0.32\textwidth}
\centering
\[
\theta>0
\]
\resizebox{\linewidth}{!}{
\begin{tikzpicture}[scale=1.15, >=Latex]
\def\R{4.9}
\def\th{18}
\fill[sectorred] (0,0) --
    ({\th/3-30}:\R) arc ({\th/3-30}:{\th/3+30}:\R) -- cycle;
\fill[sectorred] (0,0) --
    ({\th/3+90}:\R) arc ({\th/3+90}:{\th/3+150}:\R) -- cycle;
\fill[sectorred] (0,0) --
    ({\th/3+210}:\R) arc ({\th/3+210}:{\th/3+270}:\R) -- cycle;

\draw[->, thin] (-5.4,0) -- (5.4,0) node[right] {$u$};
\draw[->, thin] (0,-4.8) -- (0,4.8) node[above] {$v$};

\filldraw[black] (-1,0) circle (1.5pt);
\filldraw[black] ( 1,0) circle (1.5pt);
\node[below left=1pt] at (-1,0) {$p_{-}$};
\node[below right=1pt] at (1,0) {$p_{+}$};

\draw[flowstable]
    ({117}:\R) .. controls (-1.15,2.85) and (-1,0.95) .. (-1,0);
\draw[flowstable]
    ({244}:\R) .. controls (-1.25,-2.95) and (-1.05,-1.30) .. (-1,0);
\draw[flowstable]
    ({6}:\R) .. controls (3.95,0.38) and (2.05,0.15) .. (1,0);
\draw[flowstable]
    ({-114}:\R) .. controls (-0.7,-3) and (-0.6,-1) .. (-0.5,-0.5) .. controls (-0.5,-0.15) and (0.2,-0.1) .. (1,0);

\draw[flowunstable]
    (-1,0) .. controls (-2.05,-0.15) and (-3.95,-0.38) .. ({-174}:\R);
\draw[flowunstable]
    (-1,0) .. controls (-0.2,0.1) and (0.5,0.15)..(0.5,0.5) .. controls (0.6,1) and (0.7,3) .. ({-294}:\R);
\draw[flowunstable]
    (1,0) .. controls (0.95,0.95) and (1.15,2.85) .. ({65}:\R);
\draw[flowunstable]
    (1,0) .. controls (1,-0.95) and (1.15,-2.85) .. ({297}:\R);
\end{tikzpicture}
}
\end{minipage}

\vspace{0.8em}

\begin{center}
\begin{tikzpicture}[>=Latex]
\draw[flowstable] (0,0) -- (1.6,0);
\node[right] at (1.8,0) {stable};

\draw[flowunstable] (4.2,0) -- (5.8,0);
\node[right] at (6.0,0) {unstable};
\draw[flowtraj] (8.9,0) -- (10.5,0);
\node[right] at (10.7,0) {connecting trajectory};
\end{tikzpicture}
\end{center}

\caption{The stable and unstable manifolds in Airy type: The shaded pink regions indicate the directions in which \(F_\theta\to +\infty\); accordingly, the stable manifolds (black) end in these regions, whereas the unstable manifolds (green) do not.}
\label{fig:airy-three-phases}
\end{figure}

At the Stokes phase \(\theta_*=0\), one may read off the Picard-Lefschetz jump
\begin{equation*}\label{equ:Airyjump}
\bigl([\J_-^>] \ \ [\J_+^>]\bigr)
=
\bigl([\J_-^<] \ \ [\J_+^<]\bigr)
\begin{pmatrix}
1&1\\
0&1
\end{pmatrix}.
\end{equation*}
In particular, the number of connecting trajectories from $p_-$ to $p_+$ is $1$. The corresponding thimble jump is illustrated in  Figure~\ref{fig:airy-JK-two-panels}.

\begin{figure}[H]
\centering

\tikzset{
  Jflow/.style={
    black,
    line width=1.25pt,
    postaction={decorate},
    decoration={
      markings,
      mark=at position 0.58 with {\arrow{latex}}
    }
  },
  Kflow/.style={
    green!60!black,
    line width=1.35pt,
    postaction={decorate},
    decoration={
      markings,
      mark=at position 0.58 with {\arrow{latex}}
    }
  },
  GammaAi/.style={
    red!70!black,
    line width=1.7pt,
    postaction={decorate},
    decoration={
      markings,
      mark=at position 0.55 with {\arrow{latex}}
    }
  }
}

\colorlet{sectorred}{red!18}
\scalebox{0.8}{%
\begin{minipage}{\textwidth}
\begin{minipage}[t]{0.45\textwidth}
\centering
\[
\theta<0
\]
\resizebox{0.92\linewidth}{!}{
\begin{tikzpicture}[scale=1.2, >=Latex]

  \def\R{5.0}
  \def\th{-18}
\fill[sectorred] (0,0) --
    ({\th/3-30}:\R) arc ({\th/3-30}:{\th/3+30}:\R) -- cycle;
\fill[sectorred] (0,0) --
    ({\th/3+90}:\R) arc ({\th/3+90}:{\th/3+150}:\R) -- cycle;
\fill[sectorred] (0,0) --
    ({\th/3+210}:\R) arc ({\th/3+210}:{\th/3+270}:\R) -- cycle;

\draw[->, thin] (-5.5,0) -- (5.5,0) node[right] {$u$};
\draw[->, thin] (0,-4.9) -- (0,4.9) node[above] {$v$};

\filldraw[black] (-1,0) circle (1.6pt);
\filldraw[black] ( 1,0) circle (1.6pt);
\node[below left=1pt] at (-1,0) {$p_-$};
\node[below right=1pt] at (1,0) {$p_+$};

\draw[Jflow]
   ({117}:\R) .. controls (-1.15,2.85) and (-1,0.95) .. (-1,0);
\node[black] at (-2.5,-4.05) {$\J_-^<$};
\draw[Jflow]
   ({-6}:\R) .. controls (3.95,-0.38) and (2.05,-0.15) .. (1,0);
\node[black] at (5.05,-0.85) {$\J_+^<$};
\draw[Jflow]
  (1,0)   .. controls (0.2,0.1)  and (-0.5,0.15) .. (-0.5,0.5) .. controls (-0.6,1) and (-0.7,3) ..({114}:\R) ;
\draw[Jflow]
    (-1,0) .. controls (-1.05,-1.30) and (-1.25,-2.95) .. ({245}:\R);

\node[green!60!black] at (-5.05,0.95) {$\Kk_-^<$};
\node[green!60!black] at (2.25,3.95) {$\Kk_+^<$};
\draw[Kflow]
 ({294}:\R) .. controls (0.7,-3)  and (0.6,-1)..(0.5,-0.5) .. controls (0.5,-0.15) and  (-0.2,-0.1)..(-1,0);
\draw[Kflow]
   (-1,0) .. controls (-2.05,0.15) and (-3.95,0.38) .. ({174}:\R);
\draw[Kflow]
   ({297}:\R) .. controls (1.15,-2.85)  and (1,-0.95).. (1,0);
\draw[Kflow]
    (1,0) .. controls (0.95,0.95) and (1.15,2.85) .. ({65}:\R);
\draw[GammaAi]
    ({120}:\R) .. controls (-1.65,2.20) and (-0.25,1.15) ..
    (0.85,0.55) ..
    controls (2.25,0.15) and (3.65,-0.10) .. ({-6}:\R);
  \node[red!70!black] at (-2.2,3.25) {$\Gamma$};
\end{tikzpicture}
}
\end{minipage}
\hfill
\begin{minipage}[t]{0.45\textwidth}
\centering
\[
\theta>0
\]
\resizebox{0.92\linewidth}{!}{
\begin{tikzpicture}[scale=1.0, >=Latex]

  \def\R{5.0}
  \def\th{18}

\fill[sectorred] (0,0) --
    ({\th/3-30}:\R) arc ({\th/3-30}:{\th/3+30}:\R) -- cycle;
\fill[sectorred] (0,0) --
    ({\th/3+90}:\R) arc ({\th/3+90}:{\th/3+150}:\R) -- cycle;
\fill[sectorred] (0,0) --
    ({\th/3+210}:\R) arc ({\th/3+210}:{\th/3+270}:\R) -- cycle;

\draw[->, thin] (-5.5,0) -- (5.5,0) node[right] {$u$};
\draw[->, thin] (0,-4.9) -- (0,4.9) node[above] {$v$};

\filldraw[black] (-1,0) circle (1.6pt);
\filldraw[black] ( 1,0) circle (1.6pt);
\node[below left=1pt] at (-1,0) {$p_-$};
\node[below right=1pt] at (1,0) {$p_+$};
\node[black] at (-1.75,4.4) {$\J_-^>$};
 
\node[black] at (-0.65,-2.85) {$\J_+^>$};
\draw[Jflow]
   ({117}:\R) .. controls (-1.15,2.85) and (-1,0.95) .. (-1,0);
\draw[Jflow]
   ({6}:\R) .. controls (3.95,0.38) and (2.05,0.15) .. (1,0);
\draw[Jflow]
  (1,0)   .. controls (0.2,-0.1)  and (-0.5,-0.15) .. (-0.5,-0.5) .. controls (-0.6,-1) and (-0.7,-3) ..({-114}:\R) ;
\draw[Jflow]
    (-1,0) .. controls (-1.05,-1.30) and (-1.25,-2.95) .. ({245}:\R);
\draw[Kflow]
 ({-294}:\R) .. controls (0.7,3)  and (0.6,1)..(0.5,0.5) .. controls (0.5,0.15) and  (-0.2,0.1)..(-1,0);
\draw[Kflow]
   (-1,0) .. controls (-2.05,-0.15) and (-3.95,-0.38) .. ({-174}:\R);
\draw[Kflow]
   ({297}:\R) .. controls (1.15,-2.85)  and (1,-0.95).. (1,0);
\draw[Kflow]
    (1,0) .. controls (0.95,0.95) and (1.15,2.85) .. ({65}:\R);
\node[green!60!black] at (-4.45,-0.85) {$\Kk_-^>$};
\node[green!60!black] at (2.45,-4.05) {$\Kk_+^>$};
\draw[GammaAi]
    ({126}:\R) .. controls (-1.80,2.30) and (-0.30,1.35) ..
    (0.85,0.65) ..
    controls (2.10,0.22) and (3.55,0.15) .. ({6}:\R);
  \node[red!70!black] at (-1.10,1.40) {$\Gamma$};
\end{tikzpicture}
}
\end{minipage}
\end{minipage}
}
\caption{The local orientations are chosen so that the direction of \(\J_p^{>}\) (resp.\ \(\Kk_p^{>}\)) agrees with that of \(\J_p^{<}\) (resp.\ \(\Kk_p^{<}\)) near \(p\), and so that
\[
\langle [\J_p^\theta],[\Kk_q^\theta]\rangle=\delta_{pq}.
\]
This is compatible with the left-handed convention. The red curve \(\Gamma\) indicates a typical choice of contour for the Airy integral.
For every regular phase \(\theta\), \(\Gamma\) admits a Lefschetz-thimble decomposition
\[ [\Gamma] = \sum\limits_p\langle[\Gamma],[\Kk_p^\theta] \rangle [\J^\theta_p].\] }
\label{fig:airy-JK-two-panels}
\end{figure}

\smallskip
\subsection{Resurgence and the Stokes matrix}
We now run the resurgent analysis for the Airy type. For the two generic phases adjacent to \(\theta_*=0\), the four thimble integrals
\begin{equation*}
I_{\J_-^{<}}(\hbar),\quad I_{\J_+^{<}}(\hbar),\quad I_{\J_-^{>}}(\hbar),\quad I_{\J_+^{>}}(\hbar)
\end{equation*}
admit saddle-point expansions for \(\hbar\to0\) in the corresponding adjacent sectors. Since \(\J_+^{<}\) and \(\J_+^{>}\) have the same local germ at \(p_+=1\), and likewise \(\J_-^{<}\) and \(\J_-^{>}\) at \(p_-=-1\), there are only two formal objects:
\begin{equation*}
I_{\J_+^{{\gtrless}}}(\hbar)\sim\widetilde I_+(\hbar),\qquad
I_{\J_-^{{\gtrless}}}(\hbar)\sim \widetilde I_-(\hbar),
\end{equation*}
with
\begin{equation*}
\widetilde I_+(\hbar)=-\sqrt{\pi \hbar}\,e^{2/(3\hbar)}\,\widetilde\phi_+(\hbar),\qquad
\widetilde I_-(\hbar)=-i\sqrt{\pi \hbar}\,e^{-2/(3\hbar)}\,\widetilde\phi_-(\hbar),
\end{equation*}
where
\begin{equation*}
\widetilde\phi_+(\hbar)=\sum_{m\ge0}c_m\hbar^m,\qquad \widetilde\phi_-(\hbar)=\sum_{m\ge0}(-1)^m c_m\hbar^m,\qquad
c_m=\frac{(6m)!}{576^m(2m)!(3m)!}.
\end{equation*}
These formulas are obtained by the standard saddle-point substitution \(z=1-\sqrt{\hbar}\,t\) near \(p_+=1\) and \(z=-1-i\sqrt{\hbar}\,t\) near \(p_-=-1\), followed by expansion of the cubic correction. We now pass to the Borel plane. The hypergeometric Borel transform written below is the standard explicit form of the Airy saddle expansion; see, for example, the saddle-integral computations in \cite{CNP1993,DelabaerePham1999} and the review-style discussion in \cite{Dorigoni2014}. For the reduced formal series
\begin{equation*}\label{equ:ordinaryBorel}
\widetilde\phi(\hbar)=\sum_{m\ge0}a_m \hbar^m,
\end{equation*}
we use the ordinary Borel transform
\footnote{In this subsection we use the ordinary Borel transform of the
reduced series \(\widetilde\phi_\pm\), whereas in Section~\ref{sec:resurgence} the natural
geometric convention is the shifted Borel transform of
\(\sqrt{\hbar}\,\widetilde\phi_\pm\).  This does not change the pointed
alien relation for the full formal objects.  Indeed, multiplication by
\(\sqrt{\hbar}\) corresponds in the Borel plane to a universal fractional
convolution operator; more explicitly, if
\(\widehat\phi=\mathcal B_0\widetilde\phi\), then
\[
\mathcal B_{1/2}(\sqrt{\hbar}\,\widetilde\phi)
=
\partial_\xi\left(
\frac{1}{\Gamma(\frac12)}
\int_0^\xi
(\xi-\eta)^{-1/2}\widehat\phi(\eta)\,\dd\eta
\right).
\]
This universal factor has no singularity at any nonzero action difference
\(\omega\).  Hence the local variation at \(\omega\) passes through it; 
equivalently, alien derivatives are derivations for the product/convolution
algebra and
\[
\Delta_\omega^+(\sqrt{\hbar}\,\widetilde\phi)
=
\sqrt{\hbar}\,\Delta_\omega^+\widetilde\phi
\qquad (\omega\neq0).
\]
After restoring the outside Gaussian and exponential factors, the ordinary
Borel calculation for the reduced series therefore gives the same pointed
alien relation for the full objects \(\widetilde I_\pm\).}
\begin{equation*}
\widehat\phi(\xi):=\sum_{m\ge0}a_m\frac{\xi^m}{m!}.
\end{equation*}
Then
\begin{equation*}
\widehat\phi_+(\xi)
=
{}_2F_1\!\left(\frac16,\frac56;1;\frac{3\xi}{4}\right),
\qquad
\widehat\phi_-(\xi)
=
{}_2F_1\!\left(\frac16,\frac56;1;-\frac{3\xi}{4}\right).
\end{equation*}
Hence \(\widehat\phi_+\) has its first singularity at \(\xi=\frac43\), while \(\widehat\phi_-\) has its first singularity at \(\xi=-\frac43\). These numbers are exactly the differences of the Airy critical values: 
$$
S(p_-)-S(p_+)=\frac43\quad \text{and} \quad S(p_+)-S(p_-)=-\frac43.  
$$
The singularities of these hypergeometric germs are logarithmic. In particular, their local continuations have only integrable singularities in the sense of Section~\ref{sec:resurgence}. Thus 
\[-\sqrt{\pi \hbar} \widetilde{\phi}_+, \ -i\sqrt{\pi \hbar} \widetilde{\phi}_- \in \widetilde{\mathcal{R}}^{int}.\]
More precisely, there exist holomorphic germs \(H_\pm\) such that
\begin{equation*}
\widehat\phi_+(\xi)
=
H_+(\xi)
-\frac{i}{2\pi i}\log\!\left(\xi-\frac43\right)\widehat\phi_-\!\left(\xi-\frac43\right)
\qquad\text{near }\xi=\frac43,
\end{equation*}
and
\begin{equation*}
\widehat\phi_-(\xi)
=
H_-(\xi)
-\frac{i}{2\pi i}\log\!\left(\xi+\frac43\right)\widehat\phi_+\!\left(\xi+\frac43\right)
\qquad\text{near }\xi=-\frac43.
\end{equation*}

 Therefore the alien operators satisfy
\begin{equation*}
\Delta_{4/3}^{+}\widetilde\phi_+ = -\,i\,\widetilde\phi_-,
\qquad
\Delta_{-4/3}^{+}\widetilde\phi_- = -\,i\,\widetilde\phi_+.
\end{equation*}
We thus have
\begin{equation*}
\begin{aligned}
\dot\Delta_{4/3}^{+}\widetilde I_+
&=
e^{-4/(3\hbar)}\Delta_{4/3}^{+}
\Bigl(-\sqrt{\pi \hbar}\,e^{2/(3\hbar)}\,\widetilde\phi_+(\hbar)\Bigr)\\
&=
-e^{-4/(3\hbar)}
\sqrt{\pi \hbar}\,e^{2/(3\hbar)}\,\Delta_{4/3}^{+}\widetilde\phi_+(\hbar)
=
-\widetilde I_-(\hbar).
\\[0.1cm]
\dot\Delta_{-4/3}^{+}\widetilde I_-
&=
e^{4/(3\hbar)}\Delta_{-4/3}^{+}
\Bigl(-i\sqrt{\pi \hbar}\,e^{-2/(3\hbar)}\,\widetilde\phi_-(\hbar)\Bigr)\\
&=
-ie^{4/(3\hbar)}
\sqrt{\pi \hbar}\,e^{-2/(3\hbar)}\,\Delta_{-4/3}^{+}\widetilde\phi_-(\hbar)
=
\widetilde I_+(\hbar).
\end{aligned}
\end{equation*}

On the other hand, \(\widehat{\phi}_-\) (resp. \(\widehat{\phi}_+\)  ) has no singularity on the positive (resp. negative) real axis, so
\begin{equation*}
\Delta_{-4/3}^{+}\widetilde\phi_+=0,\qquad 
\Delta_{4/3}^{+}\widetilde\phi_-=0,
\end{equation*}
and therefore
\begin{equation*}
\dot\Delta_{-4/3}^{+}\widetilde I_+=0, \qquad
\dot\Delta_{4/3}^{+}\widetilde I_-=0.
\end{equation*}

 The Airy critical-value graded vector space is $ \C\,\widetilde I_- \oplus \C\,\widetilde I_+ $. With the ordering \[ \widetilde{\mathbf I}:=(\widetilde I_-,\widetilde I_+),\] we have
\[
\mathfrak S^{+}\widetilde{\mathbf I}
=
\widetilde{\mathbf I}
\begin{pmatrix}
1&-1\\
0&1
\end{pmatrix}.
\]
Therefore, by \eqref{Stokesauto} and Proposition~\ref{prop:thimble-lateral-sum},
\begin{equation*}
\mathbf I^{<}=\mathbf I^{>}
\begin{pmatrix}
1&-1\\
0&1
\end{pmatrix},
\qquad
\mathbf I^{>}=\mathbf I^{<}
\begin{pmatrix}
1&1\\
0&1
\end{pmatrix}.
\end{equation*}

\medskip 
\section{The Bessel model}
We now consider the Bessel-type exponential integral
\[I_{\Gamma}(\hbar)=\int_{\Gamma}e^{-S(z)/\hbar}\frac{\dd z}{z},\qquad S(z)=\frac{1}{2}\left(z-\frac{1}{z}\right),\quad z\in\C^*.\]
It is convenient to formulate this example on the logarithmic
cylinder.  Let
\[
w=\log z,
\qquad z=e^w,
\]
and view \(w\) as a coordinate on \(\C/(2\pi i\Z)\). 
Then the exponential integral becomes
\[
I_{\Gamma}(\hbar)
=
\int_{\Gamma}
e^{-\sinh w/\hbar}\,\dd w.
\]

\begin{rmk}
We call this the Bessel model because, if one keeps a parameter \(x\), then
\[
Y_\Gamma(x)
=
\int_\Gamma
\exp\left(\frac{x}{2}\left(z-\frac1z\right)\right)\frac{\dd z}{z}
\]
is the standard contour integral representation of a Bessel-type solution.
For the unit circle
\[\Gamma=\{|z|=1\}\subset\C^*\]
using the generating function
\[
\exp\left(\frac{x}{2}\left(z-\frac1z\right)\right)
=
\sum_{n\in\Z}J_n(x)z^n,
\]
one obtains
\[
Y_\Gamma(x)=2\pi i\,J_0(x).
\]
Hence \(Y_\Gamma(x)\) satisfies the order-zero Bessel equation
\[
x^2Y_\Gamma''(x)+xY_\Gamma'(x)+x^2Y_\Gamma(x)=0.
\]
In the present note we specialize to the large-parameter value
\(x=-1/\hbar\). 
\end{rmk}

The critical points 
$$
w_+=i\pi/2\quad  \text{and}\quad w_-=-i\pi/2
$$ 
are non-degenerate and 
the corresponding critical values are
\begin{equation*}
S(w_+)=\sinh\!\left(\frac{i\pi}{2}\right)=i,
\qquad
S(w_-)=\sinh\!\left(-\frac{i\pi}{2}\right)=-i.
\end{equation*}

Solving the equaiton
\[
\im(2ie^{-i\theta_*})=0
\]
gives the Stokes phases
\begin{equation*}
\theta_*=\frac{\pi}{2} \quad \pmod \pi.
\end{equation*}

\smallskip
\subsection{Lefschetz thimbles near the Stokes phase}

For fixed regular phase near the Stokes phase, 
\[\theta=\frac{\pi}{2}+\delta,\quad \delta\ne0 \,\text{ small enough},\]
we have the following descriptions. 
\begin{itemize}
    \item The stable and unstable manifolds through $w_+$ and $w_-$ lie in the level sets
\[
W^s(w_+)\cup W^u(w_+) \subset \{G_\theta=-\sin\delta\},
\qquad
W^s(w_-)\cup W^u(w_-) \subset \{G_\theta=+\sin\delta\}.
\]
\item The linearizations of the flow
\[
\dot{w}=-\overline{e^{-i\theta}\cosh w}
\]
at the critical points are given by
\[
\dot{\eta}=i\,e^{i\theta}\overline{\eta}
\qquad\text{at }~w_+,
\qquad
\dot{\eta}=-i\,e^{i\theta}\overline{\eta}
\qquad\text{at }~w_-.
\]
Hence locally the flow directions are as follows.
\[
\begin{aligned}
&\text{at }w_+:
&&\arg\eta=\frac{\delta}{2}\pmod{\pi}\ \text{stable},
&&\quad\arg\eta=\frac{\pi}{2}+\frac{\delta}{2}\pmod{\pi}\ \text{unstable},
\\
&\text{at }w_-:
&&\arg\eta=\frac{\pi}{2}+\frac{\delta}{2}\pmod{\pi}\ \text{stable},
&&\quad\arg\eta=\frac{\delta}{2}\pmod{\pi}\ \text{unstable}.
\end{aligned}
\]
Thus, the stable branches at \(w_+\) are locally almost horizontal, while those at \(w_-\) are locally almost vertical, with the orientations determined by the sign of $\delta$. 
\end{itemize}

Combined with the end asymptotics below, this determines the global thimble picture. In the real coordinates \(w=x+iy\), with \(-\pi\le y\le \pi\), the real part and imaginary part of the twisted function are given by
\begin{equation}\label{eq:bessel-G-positive-delta}
\begin{aligned}
F_\theta(x,y)
=
-\sin\delta\,\sinh x\cos y
+
\cos\delta\,\cosh x\sin y,\\ 
G_\theta(x,y)
=
-\sin\delta\,\cosh x\sin y
-
\cos\delta\,\sinh x\cos y.
\end{aligned}
\end{equation}
The flow equation reads as follows
\begin{equation*}\label{eq:bessel-flow-real-general}
\begin{aligned}
\dot{x}&=-\cos\theta\,\cosh x\,\cos y-\sin\theta\,\sinh x\,\sin y,\\
\dot{y}&=\phantom{-}\cos\theta\,\sinh x\,\sin y-\sin\theta\,\cosh x\,\cos y.
\end{aligned}
\end{equation*}
Expanding \eqref{eq:bessel-G-positive-delta} for \(x\to\pm\infty\), one obtains the following asymptotic positions.

\begin{lem}\label{lem:bessel-end-asymptotics}
The level set \(G_\theta=+\sin\delta\) contains two connected components associated to $w_{\pm}$, with the asymptotic expansions as follows:
\begin{itemize}
    \item at the component including $w_+$, that is $x=0$, $y=\pi/2$: 
    \begin{align*}
y=\frac{\pi}{2}-\delta-2\sin\delta\,e^x+O(e^{2x}),
&\qquad x\to -\infty,\\
y=\frac{\pi}{2}+\delta+2\sin\delta\,e^{-x}+O(e^{-2x}),
&\qquad x\to +\infty,
\end{align*}
   \item at the component including $w_-$, that is $x=0$, $y=-\pi/2$:
   \begin{align*}
y=-\frac{\pi}{2}-\delta+2\sin\delta\,e^x+O(e^{2x}),
&\qquad x\to -\infty,\\
y=-\frac{\pi}{2}+\delta-2\sin\delta\,e^{-x}+O(e^{-2x}),
&\qquad x\to +\infty.
\end{align*}
\end{itemize}
Similarly, for the level set \(G_\theta=-\sin\delta\), the asymptotic expansions are
\begin{itemize}
    \item at the component including $w_+$: 
\begin{align*}
y=\frac{\pi}{2}-\delta+2\sin\delta\,e^x+O(e^{2x}),
&\qquad x\to -\infty,\\
y=\frac{\pi}{2}+\delta-2\sin\delta\,e^{-x}+O(e^{-2x}),
&\qquad x\to +\infty,
\end{align*}
   \item at the component including $w_-$:
\begin{align*}
y=-\frac{\pi}{2}-\delta-2\sin\delta\,e^x+O(e^{2x}),
&\qquad x\to -\infty,\\
y=-\frac{\pi}{2}+\delta+2\sin\delta\,e^{-x}+O(e^{-2x}),
&\qquad x\to +\infty.
\end{align*}    
\end{itemize}
\end{lem}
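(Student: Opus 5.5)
The plan is to rewrite $G_\theta$ in a form that separates the two exponential regimes, and then solve the level-set equation by the implicit function theorem near each asymptotic line. Using $\cosh x=\tfrac12(e^x+e^{-x})$ and $\sinh x=\tfrac12(e^x-e^{-x})$ in \eqref{eq:bessel-G-positive-delta}, the angle-addition formulas give the exact identity
\[
G_\theta(x,y)=-\tfrac12 e^{x}\cos(y-\delta)+\tfrac12 e^{-x}\cos(y+\delta).
\]
Fix $c=\pm\sin\delta$. For $x\to+\infty$ the equation $G_\theta=c$ is equivalent to
\[
\cos(y-\delta)=-2c\,e^{-x}+e^{-2x}\cos(y+\delta).
\]
For $x\to-\infty$ it is equivalent to
\[
\cos(y+\delta)=2c\,e^{x}+e^{2x}\cos(y-\delta).
\]
The right-hand sides tend to $0$ uniformly in $y$. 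Hence on any bounded component the solutions must approach zeros of $\cos(y\mp\delta)$, namely $y=\pm\tfrac\pi2+\delta$ as $x\to+\infty$ and $y=\pm\tfrac\pi2-\delta$ as $x\to-\infty$.

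Near each such zero I write $y=y_0+\varepsilon$. Then $\cos(y_0+\varepsilon\mp\delta)=\mp'\sin\varepsilon$, where the sign depends on whether $y_0$ is $+\pi/2$ or $-\pi/2$. The equation becomes $\sin\varepsilon=\kappa\,c\,e^{\mp x}+O(e^{\mp2x})$ with $\kappa=\pm2$. The implicit function theorem in the small variable $t=e^{\mp x}$ gives a unique smooth branch $\varepsilon(t)=\kappa c\,t+O(t^2)$. Reading off the signs case by case gives the eight expansions in the lemma. As a check: near $y=\tfrac\pi2+\delta$ at $x\to+\infty$ one has $\cos(y-\delta)=-\sin\varepsilon$, so $\varepsilon=2c\,e^{-x}+O(e^{-2x})$. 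With $c=\sin\delta$ this is the stated formula, and the other cases are identical bookkeeping.

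The remaining point, which I expect to be the main obstacle, is to justify that each end lies on the component through the stated critical point, and that there are exactly two components near the critical points. For this I would argue by continuity in $\delta$. At $\delta=0$ we have $G=-\sinh x\cos y$, whose zero set in the strip $-\pi\le y\le\pi$ is $\{y=\pm\pi/2\}\cup\{x=0\}$. For small $\delta\neq0$ and $c=\pm\sin\delta$, the level set $G_\theta=c$ is a perturbation of this.
\begin{itemize}
\item On a compact range $|x|\le X$ away from the crossings $(0,\pm\pi/2)$, $\partial_yG\neq0$ near $y=\pm\pi/2$, so the level set is a graph $y\approx\pm\pi/2$.
\item At the nondegenerate crossings $w_\pm$, the level set through the critical point is the union of the local stable and unstable curves, by Proposition~\ref{prop:local-model}; the component through $w_\pm$ is thus locally a cross.
\item For $|x|\ge X$, the asymptotic graphs constructed above take over.
\end{itemize}
Matching these three regimes on overlaps, using uniqueness from the implicit function theorem, shows that the component through $w_+$ (respectively $w_-$) contains the near-horizontal graphs emanating to $x\to\pm\infty$ near $y=\pi/2$ (respectively $-\pi/2$). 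This gives the stated assignment of asymptotic ends to components.
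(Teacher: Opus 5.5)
Your first two paragraphs are correct and follow the same route as the paper. The paper's proof just inserts the ansatz $y=\pm\frac{\pi}{2}\mp\delta+\varepsilon(x)$ into $G_\theta$ and solves to first order. Your exact identity $G_\theta=-\tfrac12e^{x}\cos(y-\delta)+\tfrac12e^{-x}\cos(y+\delta)$ is right and makes that computation rigorous. It produces the asymptotic lines instead of assuming them, and it yields that for $|x|$ large each level set meets every circle $x=\mathrm{const}$ in exactly two points. The implicit function theorem in $t=e^{\mp x}$ then gives analytic branches with exactly the eight stated expansions (all signs check).

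The third step, however, fails, because the statement it aims at is false. Since $G_\theta(w_+)=-\sin\delta$ and $G_\theta(w_-)=+\sin\delta$, each level $c=\pm\sin\delta$ contains only one critical point. So there is no component of $\{G_\theta=\sin\delta\}$ through $w_+$, and your second bullet (a cross at both crossings) is wrong at one of them. Near $w_+$, with $\eta=w-w_+$, one has $G_\theta-\sin\delta\approx\tfrac12\im(e^{-i\delta}\eta^2)-2\sin\delta$. The level set there is therefore two disjoint hyperbola arcs: for $\delta>0$, roughly $x(y-\frac\pi2)=2\sin\delta$, in the first and third quadrants. This resolved saddle is what decides the connectivity. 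Following the arcs, the end asymptotic to $y=\frac\pi2-\delta$ as $x\to-\infty$ joins the upward branch from $w_-$ through $(-a,0)$. The end asymptotic to $y=\frac\pi2+\delta$ as $x\to+\infty$ joins the downward branch through $(a,\pm\pi)$, which wraps; here $a=\operatorname{arsinh}(\tan\delta)$. So these ends lie on the component through $w_-$: they are the ends of $W^s(w_-)$ recorded in Proposition~\ref{prop:bessel-endpoint-data}, and that single component carries all four ends. The same holds for $c=-\sin\delta$ with $w_+$ in place of $w_-$.

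Hence ``each end lies on the component through the stated critical point'' is false for half of the eight expansions, as is ``exactly two components,'' and no matching argument can prove it. The lemma's phrase ``component including $w_\pm$'' can only be read as ``the ends near $y=\pm\frac\pi2$.'' That is all the paper's proof establishes, and all your first two paragraphs need. Drop the third step, or replace it with the correct connectivity obtained from the resolved saddle.
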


\begin{proof}
This is obtained by inserting the ansatz
\[
y=\frac{\pi}{2}\mp\delta+\varepsilon(x)
\qquad\text{or}\qquad
y=-\frac{\pi}{2}\mp\delta+\varepsilon(x)
\]
into \eqref{eq:bessel-G-positive-delta} and solving to first order in \(\varepsilon\). The sign is then read off from the corresponding linearization.
\end{proof}

Because we are working on the cylinder \(\C/(2\pi i\Z)\), an invariant branch may leave the fundamental strip through \(y=\pi\) and re-enter through \(y=-\pi\).  This is exactly what happens for one of the stable branches of \(W^s(w_-)\). To see that one branch wraps, it is enough to locate where this level set meets the strip. Along the middle line \(y=0\), using \eqref{eq:bessel-G-positive-delta}, we get
\[
G_\theta(x,0)=-\cos\delta\,\sinh x.
\]
Hence
\[
G_\theta(x,0)=\sin\delta
\quad\Longleftrightarrow\quad
\sinh x=-\tan\delta.
\]
This gives the interior intersection
\[
(x,y)=(-a,0),\qquad \text{where }~a:=\operatorname{arsinh}(\tan\delta).
\]

On the other hand, along the top and bottom boundaries \(y=\pm\pi\), we have
\[
G_\theta(x,\pm\pi)=\cos\delta\,\sinh x.
\]
Thus
\[
G_\theta(x,\pm\pi)=\sin\delta
\quad\Longleftrightarrow\quad
\sinh x=\tan\delta,
\]
so the same level set meets the boundary at
\begin{equation*}\label{eq:bessel-boundary-intersection}
(x,y)=(a,\pi),
\qquad
(x,y)=(a,-\pi).
\end{equation*}

In summary, this component of \(G_\theta=\sin\delta\) has one branch crossing the strip directly through the interior point \((-a,0)\), and another branch meeting the strip only through the identified boundary points \((a,\pi)\) and \((a,-\pi)\).  Since the ambient space is the cylinder, the latter branch is a wrapped branch.  This proves that the right branch of \(W^s(w_-)\) wraps once around the cylinder.

The other wrapped/non-wrapped cases (see, Proposition~\ref{prop:bessel-endpoint-data}) are obtained in exactly the same way by checking the corresponding level set against the middle line and the boundary lines.

\begin{prop}[Endpoint data and wrapping information]
\label{prop:bessel-endpoint-data}
At \(\theta=\frac{\pi}{2}\pm\delta\) with \(0<\delta\ll 1\), the stable and unstable manifolds for \(w_\pm\) have the following behavior:
\[
\renewcommand{\arraystretch}{1.25}
\begin{array}{c|ccc|ccc}
& \multicolumn{3}{c|}{\theta=\frac{\pi}{2}-\delta}
& \multicolumn{3}{c}{\theta=\frac{\pi}{2}+\delta} \\
\text{manifold}
& x\to-\infty & x\to+\infty & \text{wrapped}
& x\to-\infty & x\to+\infty & \text{wrapped}
\\ \hline
W^s(w_+)
& y\to \frac{\pi}{2}+\delta
& y\to \frac{\pi}{2}-\delta
& \text{no}
& y\to \frac{\pi}{2}-\delta
& y\to \frac{\pi}{2}+\delta
& \text{no}
\\[1mm]
W^s(w_-)
& y\to \frac{\pi}{2}+\delta
& y\to \frac{\pi}{2}-\delta
& \text{yes}
& y\to \frac{\pi}{2}-\delta
& y\to \frac{\pi}{2}+\delta
& \text{yes}
\\[1mm]
W^u(w_-)
& y\to -\frac{\pi}{2}+\delta
& y\to -\frac{\pi}{2}-\delta
& \text{no}
& y\to -\frac{\pi}{2}-\delta
& y\to -\frac{\pi}{2}+\delta
& \text{no}
\\[1mm]
W^u(w_+)
& y\to -\frac{\pi}{2}+\delta
& y\to -\frac{\pi}{2}-\delta
& \text{yes}
& y\to -\frac{\pi}{2}-\delta
& y\to -\frac{\pi}{2}+\delta
& \text{yes}
\end{array}
\]
The two adjacent generic configurations are displayed in the left and right panels of Figure~\ref{fig:bessel-three-phases}.
\end{prop}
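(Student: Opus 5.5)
The plan is to determine each of the eight invariant curves from three ingredients already available: the conservation of $G_\theta$ (Lemma~\ref{lem:F-decreases-G-conserved}), the local flow directions at $w_\pm$ computed from the linearization, and the end asymptotics of the level sets in Lemma~\ref{lem:bessel-end-asymptotics}.

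First I fix $\theta=\frac{\pi}{2}+\delta$; the case $\theta=\frac\pi2-\delta$ follows by the symmetry $\delta\mapsto-\delta$. Under this symmetry the displayed formulas show that $F_\theta$ changes to $F_\theta(-x,y)$ and $G_\theta$ to $-G_\theta(-x,y)$. Hence it reflects $x\mapsto -x$ and swaps the two ends, which explains why the columns of the table are exchanged while the wrapping data stay the same.

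For a fixed critical point, say $w_+$ with $G_\theta(w_+)=-\sin\delta$, the stable manifold $W^s(w_+)$ consists of two flow lines leaving $w_+$ (in backward time) along the directions $\arg\eta=\delta/2 \pmod\pi$, that is, almost horizontally to the right and to the left. Each lies in the level set $G_\theta=-\sin\delta$. Along it $F_\theta$ increases in backward time, and since $\nabla F_\theta\neq0$ away from $w_\pm$, it must leave every compact set. On the strip, $F_\theta\to\pm\infty$ only as $|x|\to\infty$, because $\cosh x$ and $\sinh x$ dominate there. So each branch goes to $x\to+\infty$ or $x\to-\infty$.

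I then match each branch with one of the asymptotic curves of Lemma~\ref{lem:bessel-end-asymptotics} for the appropriate level, namely $y\to\frac\pi2\pm\delta$ or $y\to-\frac\pi2\pm\delta$. To choose between them I evaluate the sign of $F_\theta$ along each asymptote. For example, at $y=\frac\pi2+\delta$ and $x\to+\infty$,
\[
F_\theta\approx \tfrac12 e^{x}\bigl(-\sin\delta\cos y+\cos\delta\sin y\bigr)=\tfrac12 e^{x}\sin(y-\delta)>0 .
\]
Stable branches must end where $F_\theta\to+\infty$, and unstable branches where $F_\theta\to-\infty$. This sign test fixes the endpoint column for each of the four manifolds.

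The wrapping column comes from tracking where each branch meets the lines $y=0$ and $y=\pm\pi$. This is the same method as the worked computation before the proposition, which gave the points $(-a,0)$ and $(a,\pm\pi)$ with $a=\operatorname{arsinh}(\tan\delta)$. A branch starting near $y=-\frac\pi2$ but ending at $y\to\frac\pi2\pm\delta$ must cross either $y=0$ or $y=\pm\pi$. Crossing through $(a,\pm\pi)$ means it is wrapped, and this happens for the right branch of $W^s(w_-)$. By the symmetry $w\mapsto w+i\pi$, which sends $\sinh$ to $-\sinh$ and so reverses the flow, the analysis of $W^u(w_+)$ is the same as that of $W^s(w_-)$. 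For the non-wrapped manifolds $W^s(w_+)$ and $W^u(w_-)$, both branches stay near the horizontal lines $y=\pm\frac\pi2$, and I check that their level set meets neither the middle line nor the boundary lines.

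The main obstacle is making sure that each local branch actually reaches the asymptotic curve I assign to it, rather than another component of the same level set. To settle this, I would use the fact that for small $\delta$ the level set $G_\theta=\pm\sin\delta$ is a small perturbation of $G_{\pi/2}=0$, namely $\sinh x\cos y=0$. That set is explicit: the lines $y=\pm\frac\pi2$, together with $x=0$. Its components deform continuously for $\delta\neq0$, which identifies each branch with its limit as $\delta\to0$.
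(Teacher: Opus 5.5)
Your plan follows the paper's route: conservation of $G_\theta$, the linearized directions at $w_\pm$, the end asymptotics, and the crossings of the critical level sets with $y=0$ and $y=\pm\pi$. Your additions are correct and useful. The reflection $w\mapsto-\bar w$ is an isometry fixing $w_\pm$ with $F_{\pi/2-\delta}=F_{\pi/2+\delta}\circ(w\mapsto-\bar w)$. The translation $w\mapsto w+i\pi$ negates $S$, so it maps $W^s(w_\mp)$ onto $W^u(w_\pm)$. The sign of $F_\theta$ along the asymptotes correctly sorts the ends into stable and unstable ones.

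Two steps, however, fail as written. First, your non-wrapped check is false. At $\theta=\frac\pi2+\delta$, the level set $\{G_\theta=\sin\delta\}\supset W^u(w_-)$ is exactly the one the paper analyses, and it does meet $y=0$ at $(-a,0)$ and $y=\pm\pi$ at $(a,\pm\pi)$. Likewise $\{G_\theta=-\sin\delta\}\supset W^s(w_+)$ meets these lines at $(a,0)$ and $(-a,\pm\pi)$. What you must show is that these points lie on the companion manifold of the same level. This follows because $F_\theta>F_\theta(p)$ on $W^s(p)\setminus\{p\}$ and $F_\theta<F_\theta(p)$ on $W^u(p)\setminus\{p\}$, while $F_\theta(-a,0)=F_\theta(a,\pm\pi)=\sin\delta\tan\delta>-\cos\delta=F_\theta(w_-)$ and $F_\theta(a,0)=F_\theta(-a,\pm\pi)=-\sin\delta\tan\delta<\cos\delta=F_\theta(w_+)$.

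Second, settling the ``main obstacle'' by continuity from $\delta=0$ does not work. The set $\{\sinh x\cos y=0\}$ is singular exactly at $w_\pm$, where the circle $x=0$ crosses $y=\pm\frac\pi2$. For $\delta\neq0$, the level set through one critical point smooths the crossing at the other in a direction that flips with the sign of $\delta$; this is the Stokes jump itself. For instance, as $\delta\to0$ the upper branch of $W^s(w_-)$ tends to the segment from $w_-$ to $w_+$ followed by the left half of $y=\frac\pi2$ when $\theta=\frac\pi2+\delta$, but by the right half when $\theta=\frac\pi2-\delta$. So continuity cannot decide the matching.

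What is missing is a count of ends. Since $\partial_yG_\theta\approx\pm\frac12e^{|x|}$ near the asymptotic heights, the implicit function theorem shows that for $|x|$ large each level set $\{G_\theta=c\}$ with $c$ bounded consists of exactly four graphs: one stable and one unstable end on each side. Each half-branch at $p$ must run off to an end. Its limit set could only contain critical points with $G_\theta=G_\theta(p)$, i.e.\ $p$ itself, and strict monotonicity of $F_\theta$ forbids that. Hence the two stable half-branches occupy the two stable ends, one on each side, and likewise for the unstable ones. Your sign test alone would not exclude both stable branches of $w_+$ running to the same side. It follows that $\{G_\theta=G_\theta(p)\}=W^s(p)\cup W^u(p)$, since any further component would be a closed orbit. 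This gives the endpoint columns and justifies assigning the crossing points as above.
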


\smallskip
\subsection{Connecting trajectories and the Picard--Lefschetz jump formula}
\label{sec:BesselLef}

At \(\theta_*=\frac{\pi}{2}\), one has 
\[F_{\pi/2}(w_+)>F_{\pi/2}(w_-),\qquad G_{\pi/2}(w_+)=G_{\pi/2}(w_-)=0,\]
so there may exist connecting trajectories in the level set $G_{\pi/2}=0$ from \(w_+\) to \(w_-\). Now the flow equation \eqref{eq:bessel-flow-real-general} becomes
\begin{equation}\label{eq:bessel-flow-pi-over-2}
\dot x=-\sinh x\,\sin y,
\qquad
\dot y=-\cosh x\,\cos y.
\end{equation}
Moreover, 
\[
G_{\pi/2}(x,y)=-\sinh x\,\cos y=0~~\Longrightarrow~~ \cos y=0~\text{ or }~ \sinh x=0.
\]
If \(\cos y=0\), then \(y\) is constant by \eqref{eq:bessel-flow-pi-over-2}, so such a trajectory never connects the two critical points.  Since $\sinh x=0$ implies $x=0$, \eqref{eq:bessel-flow-pi-over-2} reduces to
\[
\dot y=-\cos y.
\]
There are exactly two solutions connecting \(w_+\) to \(w_-\):
\begin{equation*}\label{eq:bessel-traj-right}
\begin{cases}
    y(s)=2\arctan\,\bigl(e^{-(s-s_0)}\bigr)-\frac{\pi}{2}  \quad  &\text{as } \, y(s)\in\left(-\frac{\pi}{2},\frac{\pi}{2}\right) 
    \\[0.1cm]
    y(s)=2\arctan\,\bigl(e^{\,s-s_0}\bigr)+\frac{\pi}{2}
    \quad  &\text{as } \, y(s)\in\left(\frac{\pi}{2},\frac{3\pi}{2}\right) 
\end{cases}.
\end{equation*}

The Stokes configuration with the two blue connecting trajectories is displayed in the middle panel of Figure~\ref{fig:bessel-three-phases}.

\begin{figure}[H]
\centering

\tikzset{
  stab/.style={
    black,
    line width=1.2pt,
    postaction={decorate},
    decoration={
      markings,
      mark=at position 0.78 with {\arrow{latex}}
    }
  },
  unstab/.style={
    green!60!black,
    line width=1.3pt,
    postaction={decorate},
    decoration={
      markings,
      mark=at position 0.22 with {\arrow{latex}}
    }
  },
  broken/.style={
    blue!75!black,
    line width=2.5pt,
    postaction={decorate},
    decoration={
      markings,
      mark=at position 0.45 with {\arrow{latex}}
    }
  }
}
\colorlet{sectorred}{red!18}
\begin{minipage}[t]{0.47\textwidth}
\centering
\[
\theta=\frac{\pi}{2}-\delta
\]
\resizebox{\linewidth}{!}{
\begin{tikzpicture}[scale=1.02, >=Latex]

\def\xL{-5.5}
\def\xR{5.5}
\def\d{0.38}
\def\a{0.50}
\def\b{0.50}
\def\pih{1.5708}
\def\ymin{-3.1416}
\def\ymax{3.1416}

\draw[thin] (\xL,\ymin) rectangle (\xR,\ymax);
\fill[sectorred] (\xL,\d) rectangle (\xL+0.95,\ymax);
\fill[sectorred] (\xL,\ymin) rectangle (\xL+0.95,\ymin+\d);
\fill[sectorred] (\xR-0.95,-\d) rectangle (\xR,\ymax-\d);

\draw[densely dashed, thin] (\xL,\pih) -- (\xR,\pih);
\draw[densely dashed, thin] (\xL,-\pih) -- (\xR,-\pih);
\draw[densely dashed, red!65!black] (\xL,\pih+\d) -- (\xL+1.25,\pih+\d);
\draw[densely dashed, red!65!black] (\xR-1.25,\pih-\d) -- (\xR,\pih-\d);

\draw[<->, red!70!black] (\xL+1.15,\pih) -- (\xL+1.15,\pih+\d);
\node[left, red!70!black] at (\xL+1.10,\pih+0.5*\d) {\(\delta\)};
\draw[<->, red!70!black] (\xL+1.15,\ymin) -- (\xL+1.15,\ymin+\d);
\node[left, red!70!black] at (\xL+1.10,\ymin+0.5*\d) {\(\delta\)};
\draw[<->, red!70!black] (\xR-1.15,\pih-\d) -- (\xR-1.15,\pih);
\node[right, red!70!black] at (\xR-1.10,\pih-0.5*\d) {\(\delta\)};

\filldraw[black] (0,\pih) circle (1.6pt);
\filldraw[black] (0,-\pih) circle (1.6pt);
\node[above right] at (0.14,\pih) {\(w_+\)};
\node[above right] at (0.14,-\pih) {\(w_-\)};

\draw[stab]
    (\xL,\pih+\d-0.22)
    .. controls (-4.35,\pih+\d-0.16) and (-1.65,\pih+0.05) ..
    (0,\pih);
\draw[stab]
    (\xR,\pih-\d+0.22)
    .. controls (4.35,\pih-\d+0.16) and (1.65,\pih-0.05) ..
    (0,\pih);
\draw[stab]
    (\xL,\pih+\d+0.22)
    .. controls (-4.35,\pih+\d+0.22) and (-0.5,\pih+\d+0.22) ..
    (-\b,\ymax-0.03);
\draw[stab]
    (-\b,\ymin+0.03)
    .. controls (-\b+0.2,\ymin+0.2) and (0,-\pih-0.8) ..
    (0,-\pih);
\draw[stab]
    (\xR,\pih-\d-0.22)
    .. controls (1.35,\pih-\d-0.12) and (0.6,\pih-0.3) .. (0.5,\pih-0.5) ..controls (0.3,\pih-0.4) and (0,-1) ..
    (0,-\pih);

\draw[unstab]
    (\b,-\ymax+0.03)
    .. controls (0.5,-\pih-\d-0.22) and (4.35,-\pih-\d-0.22) ..
    (-\xL,-\pih-\d-0.22) ;
\draw[unstab]
    (0,+\pih)
    .. controls (0,\pih+0.8) and (\b-0.2,-\ymin+0.2)  ..
    (\b,-\ymin-0.03);
\draw[unstab]
    (0,\pih)
    .. controls (0,1)  and (-0.3,-\pih+0.4) .. (-0.5,-\pih+0.5) ..controls (-0.6,-\pih+0.3) and (-1.35,-\pih+\d+0.12) ..
    (-\xR,-\pih+\d+0.22); 
\draw[unstab]
    (0,-\pih)
    .. controls (-1.45,-1.48) and (-4.35,-1.24) ..
    (\xL,-\pih+\d-0.12);
\draw[unstab]
    (0,-\pih)
    .. controls (1.45,-1.58) and (4.35,-1.88) ..
    (\xR,-\pih-\d+0.12);
\draw[gray!70,->,thin] (-\b,\ymax-0.18) -- (-\b,\ymax-0.02);
\draw[gray!70,->,thin] (-\b,\ymin+0.18) -- (-\b,\ymin+0.02);
\draw[gray!70,->,thin] (\a,\ymax-0.18) -- (\a,\ymax-0.02);
\draw[gray!70,->,thin] (\a,\ymin+0.18) -- (\a,\ymin+0.02);
\end{tikzpicture}
}
\end{minipage}
\hfill
\begin{minipage}[t]{0.47\textwidth}
\centering
\[
\theta=\frac{\pi}{2}+\delta
\]
\resizebox{\linewidth}{!}{
\begin{tikzpicture}[scale=1.02, >=Latex]

  \def\xL{-5.5}
  \def\xR{5.5}
  \def\d{0.38}
  \def\a{0.5}
  \def\b{0.5}
  \def\pih{1.5708}
  \def\ymin{-3.1416};
  \def\ymax{3.1416};

\draw[thin] (\xL,\ymin) rectangle (\xR,\ymax);
\fill[sectorred] (\xL,-\d) rectangle (\xL+0.95,\ymax-\d);
\fill[sectorred] (\xR-0.95,\d) rectangle (\xR,\ymax);
\fill[sectorred] (\xR-0.95,\ymin) rectangle (\xR,\ymin+\d);

\draw[densely dashed, thin] (\xL,\pih) -- (\xR,\pih);
\draw[densely dashed, thin] (\xL,-\pih) -- (\xR,-\pih);
\draw[densely dashed, red!65!black] (\xL,\pih-\d) -- (\xL+1.25,\pih-\d);
\draw[densely dashed, red!65!black] (\xR-1.25,\pih+\d) -- (\xR,\pih+\d);
\draw[<->, red!70!black] (\xL+1.15,\pih-\d) -- (\xL+1.15,\pih);
\node[left, red!70!black] at (\xL+1.10,\pih-0.5*\d) {\(\delta\)};
\draw[<->, red!70!black] (\xR-1.15,\pih) -- (\xR-1.15,\pih+\d);
\node[right, red!70!black] at (\xR-1.10,\pih+0.5*\d) {\(\delta\)};
\draw[<->, red!70!black] (\xR-1.15,\ymin) -- (\xR-1.15,\ymin+\d);
\node[right, red!70!black] at (\xR-1.10,\ymin+0.5*\d) {\(\delta\)};

\filldraw[black] (0,\pih) circle (1.6pt);
\filldraw[black] (0,-\pih) circle (1.6pt);
\node[above right] at (0.14,\pih) {\(w_+\)};
\node[above right] at (0.14,-\pih) {\(w_-\)};

\draw[stab]
    (\xL,\pih-\d+0.22)
    .. controls (-4.35,\pih-\d+0.18) and (-1.65,\pih-0.05) ..
    (0,\pih);
\draw[stab]
    (\xR,\pih+\d-0.22)
    .. controls (4.35,\pih+\d-0.18) and (1.65,\pih+0.05) ..
    (0,\pih);
\draw[stab]
    (-\xL,\pih+\d+0.22)
    .. controls (4.35,\pih+\d+0.22) and (0.5,\pih+\d+0.22) ..
    (\b,\ymax-0.03);
\draw[stab]
    (\b,\ymin+0.03)
    .. controls (\b-0.2,\ymin-0.2)  and (0,-\pih-0.8) ..
    (0,-\pih);
\draw[stab]
    (-\xR,\pih-\d-0.22)
    .. controls (-1.35,\pih-\d-0.12) and (-0.6,\pih-0.3) .. (-0.5,\pih-0.5) ..controls (-0.3,\pih-0.4) and (0,-1) ..
    (0,-\pih);

\draw[unstab]
    (-\b,-\ymax+0.03)
    .. controls (-0.5,-\pih-\d-0.22) and (-4.35,-\pih-\d-0.22) ..
    (\xL,-\pih-\d-0.22) ;
\draw[unstab]
    (0,+\pih)
    .. controls (0,\pih+0.8) and (-\b+0.2,-\ymin-0.2)  ..
    (-\b,-\ymin-0.03);
\draw[unstab]
    (0,\pih)
    .. controls (0,1)  and (0.3,-\pih+0.4) .. (0.5,-\pih+0.5) ..controls (0.6,-\pih+0.3) and (1.35,-\pih+\d+0.12) ..
    (\xR,-\pih+\d+0.22); 
\draw[unstab]
    (0,-\pih)
    .. controls (-1.45,-1.58) and (-4.35,-1.88) ..
    (\xL,-\pih-\d+0.12);
\draw[unstab]
    (0,-\pih)
    .. controls (1.45,-1.48) and (4.35,-1.24) ..
    (\xR,-\pih+\d-0.12);

\draw[gray!70,->,thin] (\a,\ymax-0.18) -- (\a,\ymax-0.02);
\draw[gray!70,->,thin] (\a,\ymin+0.18) -- (\a,\ymin+0.02);
\draw[gray!70,->,thin] (-\b,\ymax-0.18) -- (-\b,\ymax-0.02);
\draw[gray!70,->,thin] (-\b,\ymin+0.18) -- (-\b,\ymin+0.02);
\end{tikzpicture}
}
\end{minipage}

\vspace{1.0em}

\begin{minipage}[t]{0.47\textwidth}
\centering
\[
\theta_*=\frac{\pi}{2}
\]
\resizebox{\linewidth}{!}{
\begin{tikzpicture}[scale=1.02, >=Latex]

  \def\xL{-5.5}
  \def\xR{5.5}
  \def\pih{1.5708}
  \def\ymin{-3.1416}
  \def\ymax{3.1416}

\draw[thin] (\xL,\ymin) rectangle (\xR,\ymax);
\fill[sectorred] (\xL,0) rectangle (\xL+0.95,\ymax);
\fill[sectorred] (\xR-0.95,0) rectangle (\xR,\ymax);

\draw[densely dashed, thin] (\xL,\pih) -- (\xR,\pih);
\draw[densely dashed, thin] (\xL,-\pih) -- (\xR,-\pih);
\draw[densely dashed, thin] (\xL,0) -- (\xR,0);

\filldraw[black] (0,\pih) circle (1.6pt);
\filldraw[black] (0,-\pih) circle (1.6pt);
\node[above right] at (0.14,\pih) {\(w_+\)};
\node[above right] at (0.14,-\pih) {\(w_-\)};

\draw[stab] (\xL,\pih) -- (0,\pih);
\draw[stab] (\xR,\pih) -- (0,\pih);
\draw[unstab] (0,-\pih) -- (\xL,-\pih);
\draw[unstab] (0,-\pih) -- (\xR,-\pih);

\draw[broken] (0,\pih) -- (0,-\pih);
\draw[broken] (0,\pih) -- (0,\ymax);
\draw[broken] (0,\ymin) -- (0,-\pih);
\end{tikzpicture}
}
\end{minipage}

\begin{minipage}[c]{0.14\textwidth}
\centering
\resizebox{0.95\linewidth}{!}{
\begin{tikzpicture}[>=Latex]
\tikzset{
  flowstable/.style={
    black,
    line width=1.1pt,
    postaction={decorate},
    decoration={
      markings,
      mark=at position 0.55 with {\arrow{latex}}
    }
  },
  flowunstable/.style={
    green!60!black,
    line width=1.3pt,
    postaction={decorate},
    decoration={
      markings,
      mark=at position 0.55 with {\arrow{latex}}
    }
  },
  flowtraj/.style={
    blue!75!black,
    line width=2.5pt,
    postaction={decorate},
    decoration={
      markings,
      mark=at position 0.55 with {\arrow{latex}}
    }
  }
}
  \draw[flowstable] (0,0) -- (1.15,0);
  \node[right,scale=0.78] at (1.28,0) {stable};

  \draw[flowunstable] (0,-0.55) -- (1.15,-0.55);
  \node[right,scale=0.78] at (1.28,-0.55) {unstable};

  \draw[flowtraj] (0,-1.10) -- (1.15,-1.10);
  \node[right,scale=0.78] at (1.28,-1.10) {connecting trajectory};
\end{tikzpicture}
}
\end{minipage}

\caption{
Flow pictures for the Bessel action on the \(w\)-cylinder (here \(0<\delta\ll 1\)).
The pink shaded regions are the stable/decay sectors at the ends.
The black curves are stable manifolds, the green curves are unstable manifolds, and the thick blue curves in the middle panel are the two connecting trajectories at \(\theta_*=\frac{\pi}{2}\).
}
\label{fig:bessel-three-phases}
\end{figure}

\begin{rmk}
At the opposite special phase \(\theta_*=-\frac{\pi}{2}\), the same argument shows that there are exactly two connecting trajectories in the reverse direction, namely from \(w_-\) to \(w_+\).
\end{rmk}

The local orientations of the thimbles are chosen so that the direction of \(\J_p^{>}\) (resp.\ \(\Kk_p^{>}\)) agrees with that of \(\J_p^{<}\) (resp.\ \(\Kk_p^{<}\)) near \(p=w_{\pm}\), and so that
$
\langle [\J_p^\theta],[\Kk_q^\theta]\rangle=\delta_{pq}
$ with $p,q=w_{\pm}$. With these orientation conventions, the thimble and dual-thimble pictures are shown in Figure~\ref{fig:bessel-JK-two-panels}.

\begin{figure}[H]
\centering
\colorlet{GammaRed}{red!70!black}
\tikzset{
  GammaBessel/.style={
    GammaRed,
    line width=1.8pt,
    -{Latex[length=2.5mm]}
  }
}

\tikzset{
  Jflow/.style={
    black,
    line width=1.25pt,
    -{Latex[length=2.2mm]}
  },
  Kflow/.style={
    green!60!black,
    line width=1.35pt,
       -{Latex[length=2.2mm]}
  }
}
\colorlet{sectorred}{red!18}
\begin{minipage}[t]{0.47\textwidth}
\centering
\[
\theta=\frac{\pi}{2}-\delta
\]
\resizebox{\linewidth}{!}{
\begin{tikzpicture}[scale=1.02, >=Latex]

  \def\xL{-5.5}
  \def\xR{5.5}
  \def\d{0.38}
  \def\a{0.5}
  \def\b{0.5}
  \def\pih{1.5708}
  \def\ymin{-3.1416}
  \def\ymax{3.1416}

\draw[thin] (\xL,\ymin) rectangle (\xR,\ymax);

\fill[sectorred] (\xL,\d) rectangle (\xL+0.95,\ymax);
\fill[sectorred] (\xL,\ymin) rectangle (\xL+0.95,\ymin+\d);
\fill[sectorred] (\xR-0.95,-\d) rectangle (\xR,\ymax-\d);

\draw[densely dashed, thin] (\xL,\pih) -- (\xR,\pih);
\draw[densely dashed, thin] (\xL,-\pih) -- (\xR,-\pih);
\draw[densely dashed, red!65!black] (\xL,\pih+\d) -- (\xL+1.25,\pih+\d);
\draw[densely dashed, red!65!black] (\xR-1.25,\pih-\d) -- (\xR,\pih-\d);

\draw[<->, red!70!black] (\xL+1.15,\pih) -- (\xL+1.15,\pih+\d);
\node[left, red!70!black] at (\xL+1.10,\pih+0.5*\d) {\(\delta\)};
\draw[<->, red!70!black] (\xL+1.15,\ymin) -- (\xL+1.15,\ymin+\d);
\node[left, red!70!black] at (\xL+1.10,\ymin+0.5*\d) {\(\delta\)};
\draw[<->, red!70!black] (\xR-1.15,\pih-\d) -- (\xR-1.15,\pih);
\node[right, red!70!black] at (\xR-1.10,\pih-0.5*\d) {\(\delta\)};

\draw[GammaBessel] (0,\ymin) -- (0,\ymax);
\node[GammaRed] at (0.22,-2.15) {\(\widetilde\Gamma\)};
\filldraw[black] (0,\pih) circle (1.6pt);
\filldraw[black] (0,-\pih) circle (1.6pt);
\node[above right] at (0.14,\pih) {\(w_+\)};
\node[above right] at (0.14,-\pih) {\(w_-\)};

\draw[Jflow]
    (\xL,\pih+\d+0.22)
    .. controls (-4.35,\pih+\d+0.22) and (-0.5,\pih+\d+0.22) ..
    (-\b,\ymax-0.03);
\draw[Jflow]
    (-\b,\ymin+0.03)
    .. controls (-\b+0.2,\ymin+0.2) and (0,-\pih-0.8) ..
    (0,-\pih);
\draw[Jflow]
    (0,-\pih)
    .. controls (0,-1) and (0.3,\pih-0.4)  .. (0.5,\pih-0.5) ..controls (0.6,\pih-0.3) and (1.35,\pih-\d-0.12) ..
    (\xR,\pih-\d-0.22);
\draw[Jflow]
    (\xL,\pih+\d-0.22)
    .. controls (-4.35,\pih+\d-0.16) and (-1.65,\pih+0.05) ..
    (0,\pih);
\draw[Jflow]
    (0,\pih)
    .. controls (1.65,\pih-0.05) and (4.35,\pih-\d+0.16) ..
    (\xR,\pih-\d+0.22);

\draw[Kflow]
    (\xL,-\pih+\d-0.12)
    .. controls (-4.35,-1.24) and (-1.45,-1.48) ..
    (0,-\pih);
\draw[Kflow]
    (0,-\pih)
    .. controls (1.45,-1.58) and (4.35,-1.88) ..
    (\xR,-\pih-\d+0.12);
\draw[Kflow]
  (-\xL,-\pih-\d-0.22) 
    .. controls (4.35,-\pih-\d-0.22) and (0.5,-\pih-\d-0.22)  ..
   (\b,-\ymax)  ;
\draw[Kflow]
    (\b,\ymax)
    .. controls (\b-0.2,-\ymin+0.2) and (0,\pih+0.8)  ..
    (0,+\pih);
\draw[Kflow]
    (0,\pih)
    .. controls (0,1)  and (-0.3,-\pih+0.4) .. (-0.5,-\pih+0.5) ..controls (-0.6,-\pih+0.3) and (-1.35,-\pih+\d+0.12) ..
    (-\xR,-\pih+\d+0.22); 
  
\draw[gray!70,->,thin] (-\b,\ymax-0.18) -- (-\b,\ymax-0.02);
\draw[gray!70,->,thin] (-\b,\ymin+0.18) -- (-\b,\ymin+0.02);
\draw[gray!70,->,thin] (\a,\ymax-0.18) -- (\a,\ymax-0.02);
\draw[gray!70,->,thin] (\a,\ymin+0.18) -- (\a,\ymin+0.02);

\node[black] at (3.05,\pih+0.30) {\(\J_+^<\)};
\node[black] at (3.10,0.68) {\(\J_-^<\)};
\node[green!60!black] at (-4.05,-\pih-0.52) {\(\Kk_-^<\)};
\node[green!60!black] at (-4.05,-0.68) {\(\Kk_+^<\)};
\end{tikzpicture}
}
\end{minipage}
\hfill
\begin{minipage}[t]{0.47\textwidth}
\centering
\[
\theta=\frac{\pi}{2}+\delta
\]
\resizebox{\linewidth}{!}{
\begin{tikzpicture}[scale=1.02, >=Latex]

  \def\xL{-5.5}
  \def\xR{5.5}
  \def\d{0.38}
  \def\a{0.5}
  \def\b{0.5}
  \def\pih{1.5708}
  \def\ymin{-3.1416}
  \def\ymax{3.1416}

\draw[thin] (\xL,\ymin) rectangle (\xR,\ymax);

\fill[sectorred] (\xL,-\d) rectangle (\xL+0.95,\ymax-\d);
\fill[sectorred] (\xR-0.95,\d) rectangle (\xR,\ymax);
\fill[sectorred] (\xR-0.95,\ymin) rectangle (\xR,\ymin+\d);

\draw[densely dashed, thin] (\xL,\pih) -- (\xR,\pih);
\draw[densely dashed, thin] (\xL,-\pih) -- (\xR,-\pih);
\draw[densely dashed, red!65!black] (\xL,\pih-\d) -- (\xL+1.25,\pih-\d);
\draw[densely dashed, red!65!black] (\xR-1.25,\pih+\d) -- (\xR,\pih+\d);

\draw[<->, red!70!black] (\xL+1.15,\pih-\d) -- (\xL+1.15,\pih);
\node[left, red!70!black] at (\xL+1.10,\pih-0.5*\d) {\(\delta\)};
\draw[<->, red!70!black] (\xR-1.15,\pih) -- (\xR-1.15,\pih+\d);
\node[right, red!70!black] at (\xR-1.10,\pih+0.5*\d) {\(\delta\)};
\draw[<->, red!70!black] (\xR-1.15,\ymin) -- (\xR-1.15,\ymin+\d);
\node[right, red!70!black] at (\xR-1.10,\ymin+0.5*\d) {\(\delta\)};

\filldraw[black] (0,\pih) circle (1.6pt);
\filldraw[black] (0,-\pih) circle (1.6pt);
\node[above right] at (0.14,\pih) {\(w_+\)};
\node[above right] at (0.14,-\pih) {\(w_-\)};
\draw[GammaBessel] (0,\ymin) -- (0,\ymax);
\node[GammaRed] at (-0.22,-2.15) {\(\widetilde\Gamma\)};

\draw[Jflow]
    (\xL,\pih-\d+0.22)
    .. controls (-4.35,\pih-\d+0.18) and (-1.65,\pih-0.05) ..
    (0,\pih);
\draw[Jflow]
    (0,\pih)
    .. controls (1.65,\pih+0.05) and (4.35,\pih+\d-0.18) ..
    (\xR,\pih+\d-0.22);
\draw[Jflow]
    (-\xL,\pih+\d+0.22)
    .. controls (4.35,\pih+\d+0.22) and (0.5,\pih+\d+0.22) ..
    (\b,\ymax-0.03);
\draw[Jflow]
    (\b,\ymin+0.03)
    .. controls (\b+0.2,\ymin+0.2) and (0,-\pih-0.8) ..
    (0,-\pih);
\draw[Jflow]
    (0,-\pih)
    .. controls (0,-1) and (-0.3,\pih-0.4)  .. (-0.5,\pih-0.5) ..controls (-0.6,\pih-0.3) and (-1.35,\pih-\d-0.12) ..
    (-\xR,\pih-\d-0.22);

\draw[Kflow]
    (0,-\pih)
    .. controls (1.45,-1.48) and   (4.35,-1.24)..
   (\xR,-\pih+\d-0.12) ;
\draw[Kflow]
    (\xL,-\pih-\d+0.12)
    .. controls (-4.35,-1.88) and (-4.35,-1.88) ..
    (0,-\pih);
\draw[Kflow]
  (\xL,-\pih-\d-0.22) 
    .. controls (-4.35,-\pih-\d-0.22) and (-0.5,-\pih-\d-0.22)  ..
   (-\b,-\ymax)  ;
\draw[Kflow]
    (-\b,\ymax)
    .. controls (-\b+0.2,-\ymin+0.2) and (0,\pih+0.8)  ..
    (0,+\pih);
\draw[Kflow]
    (0,\pih)
    .. controls (0,1)  and (0.3,-\pih+0.4) .. (0.5,-\pih+0.5) ..controls (0.6,-\pih+0.3) and (1.35,-\pih+\d+0.12) ..
    (\xR,-\pih+\d+0.22); 

\draw[gray!70,->,thin] (\a,\ymax-0.18) -- (\a,\ymax-0.02);
\draw[gray!70,->,thin] (\a,\ymin+0.18) -- (\a,\ymin+0.02);
\draw[gray!70,->,thin] (-\b,\ymax-0.18) -- (-\b,\ymax-0.02);
\draw[gray!70,->,thin] (-\b,\ymin+0.18) -- (-\b,\ymin+0.02);

\node[black] at (-2.05,\pih+0.50) {\(\J_+^>\)};
\node[black] at (-2.10,-0.78) {\(\J_-^>\)};
\node[green!60!black] at (4.05,-\pih-0.42) {\(\Kk_-^>\)};
\node[green!60!black] at (2.05,0.32) {\(\Kk_+^>\)};
\end{tikzpicture}
}
\end{minipage}

\caption{
The red curve \(\widetilde\Gamma\) indicates a typical choice of contour for the Bessel integral (here \(0<\delta\ll 1\)).
}
\label{fig:bessel-JK-two-panels}
\end{figure}

The two connecting trajectories can be visualized in the cylindrical
schematic picture in Figure~\ref{fig:bessel-cylinder-stokes-schematic}.

\begin{figure}[H]
\centering

\begin{tikzpicture}[>=Latex, scale=0.9]

\tikzset{
  cyl/.style={
    gray!65,
    line width=0.9pt
  },
  cylfill/.style={
    fill=gray!10,
    draw=none
  },
  waist/.style={
    gray!75,
    densely dashed,
    line width=0.9pt
  },
  crit/.style={
    circle,
    fill=black,
    inner sep=1.5pt
  },
  Jcurve/.style={
    black,
    line width=1.25pt,
    postaction={decorate},
    decoration={
      markings,
      mark=at position 0.75 with {\arrow{Latex[length=2.2mm]}}
    }
  },
  JcurveBack/.style={
    black,
    line width=1.25pt,
    dashed,
    postaction={decorate},
    decoration={
      markings,
      mark=at position 0.75 with {\arrow{Latex[length=2.2mm]}}
    }
  },
  traj/.style={
    blue!75!black,
    line width=2.1pt,
    postaction={decorate},
    decoration={
      markings,
      mark=at position 0.82 with {\arrow{Latex[length=2.4mm]}}
    }
  },
  trajback/.style={
    blue!75!black,
    line width=2.1pt,
    dashed,
    postaction={decorate},
    decoration={
      markings,
      mark=at position 0.82 with {\arrow{Latex[length=2.4mm]}}
    }
  }
}
\def\rx{1.15}
\def\ry{0.33}
\def\H{2.25}

\newcommand{\drawCylinderBase}{
\fill[cylfill] (-\rx,-\H) rectangle (\rx,\H);

\draw[cyl,dashed] (-\rx,\H)
    .. controls (-0.55,\H+\ry) and (0.55,\H+\ry) ..
    (\rx,\H);
\draw[cyl,dashed] (-\rx,-\H)
    .. controls (-0.55,-\H+\ry) and (0.55,-\H+\ry) ..
    (\rx,-\H);

\draw[cyl] (-\rx,-\H) -- (-\rx,\H);
\draw[cyl] (\rx,-\H) -- (\rx,\H);

\draw[cyl] (-\rx,\H)
    .. controls (-0.55,\H-\ry) and (0.55,\H-\ry) ..
    (\rx,\H);
\draw[cyl] (-\rx,-\H)
    .. controls (-0.55,-\H-\ry) and (0.55,-\H-\ry) ..
    (\rx,-\H);
\draw[waist] (0,0) ellipse[x radius=\rx, y radius=\ry];

\node[crit] (wp) at (-\rx,0) {};
\node[crit] (wm) at (\rx,0) {};
\node[left] at (-\rx-0.08,0.12) {$w_+$};
\node[right] at (\rx+0.08,0.12) {$w_-$};
}

\begin{scope}[shift={(-4.2,0)}]
  \node at (0,\H+0.85) {$\theta=\frac\pi2-\delta$};
  \drawCylinderBase

\draw[Jcurve]
(\rx,0) .. controls  (0.5,-0.3) and  (-0.5,-0.2)..
    (-0.85,0.1)
    --
    (-0.85,\H-0.1);
\draw[JcurveBack]
    (-0.85,-\H+0.1)
    --
    (-0.85,-0.1)
    .. controls (-0.5,0.2) and (0.5,0.3) ..
    (\rx,0);
\draw[Jcurve]
    (-\rx,-\H)--(-\rx,\H);
  \node[black] at (-1.55,1.35) {$\J_+^{<}$};
  \node[black] at (-0.5,1.35) {$\J_-^{<}$};
\end{scope}

\begin{scope}[shift={(0,0)}]
  \node at (0,\H+0.85) {$\theta_*=\frac{\pi}{2}$};
  \drawCylinderBase

\draw[traj]
    (-\rx,0)
    .. controls (-0.70,-0.42) and (0.70,-0.42) ..
    (\rx,0);
\draw[Jcurve]
    (-\rx,-\H)--(-\rx,\H);
\draw[trajback]
    (-\rx,0)
    .. controls (-0.70,0.42) and (0.70,0.42) ..
    (\rx,0);
\end{scope}

\begin{scope}[shift={(4.2,0)}]
  \node at (0,\H+0.85) {$\theta=\frac\pi2+\delta$};
  \drawCylinderBase
\draw[Jcurve]
(\rx,0) .. controls  (0.5,-0.2) and  (-0.5,-0.3)..
    (-0.85,-0.1)
    --
    (-0.85,-\H-0.1);
\draw[JcurveBack]
    (-0.85,\H+0.1)
    --
    (-0.85,0.1)
    .. controls (-0.5,0.2) and (0.5,0.3) ..
    (\rx,0);
\draw[Jcurve]
    (-\rx,-\H)--(-\rx,\H);
  \node[black] at (-1.55,1.35) {$\J_+^{>}$};
\node[black] at (-0.5,-1.35) {$\J_-^{>}$};
\end{scope}

\end{tikzpicture}

\caption{Schematic cylinder pictures for the Bessel model (here \(0<\delta\ll 1\)).}
\label{fig:bessel-cylinder-stokes-schematic}
\end{figure}

One can read off the jump matrix
\begin{equation}\label{equ:Besseljump}
\bigl([\J_+^>] \ \ [\J_-^>]\bigr)
=
\bigl([\J_+^<] \ \ [\J_-^<]\bigr)
\begin{pmatrix}
1&-2\\
0&1
\end{pmatrix}.
\end{equation}

\smallskip
\subsection{Resurgence and the Stokes matrix}

We now run the resurgent analysis for the Bessel type, in parallel with the Airy case.
Since the Stokes phase is \(\theta_*=\frac{\pi}{2}\), it is convenient to rotate the action by this phase and set
\[
\mathcal A(w):=e^{-i\pi/2}S(w)=-\,i\,\sinh w.
\]
Then
\[
\mathcal A(w_+)=1,\qquad \mathcal A(w_-)=-1,
\qquad
\mathcal A(w_+)-\mathcal A(w_-)=2.
\]

For the two generic phases adjacent to \(\theta_*=\frac{\pi}{2}\), the four thimble integrals
\[
I_{\J_+^{<}}(\hbar),\quad I_{\J_-^{<}}(\hbar),\quad I_{\J_+^{>}}(\hbar),\quad I_{\J_-^{>}}(\hbar)
\]
admit saddle-point expansions for \(\hbar\to0\) in the corresponding adjacent sectors. Since
\(\J_+^{<}\) and \(\J_+^{>}\) have the same local germ at \(w_+\), and likewise
\(\J_-^{<}\) and \(\J_-^{>}\) at \(w_-\), there are only two formal objects:
\[
I_{\J_+^{\gtrless}}(\hbar)\sim \widetilde I_+(\hbar),
\qquad
I_{\J_-^{\gtrless}}(\hbar)\sim \widetilde I_-(\hbar).
\]

With the local orientations fixed as in Figure~\ref{fig:bessel-JK-two-panels}, one obtains
\[
\widetilde I_+(\hbar)=\sqrt{2\pi\hbar}\,e^{-1/\hbar}\,\widetilde\phi_+(\hbar),
\qquad
\widetilde I_-(\hbar)=i\sqrt{2\pi\hbar}\,e^{1/\hbar}\,\widetilde\phi_-(\hbar),
\]
where
\[
\widetilde\phi_+(\hbar)=\sum_{m\ge0}(-1)^m a_m\hbar^m,
\qquad
\widetilde\phi_-(\hbar)=\sum_{m\ge0}a_m\hbar^m,
\]
and
\[
a_m=\frac{((2m-1)!!)^2}{m!\,8^m}
=\frac{\left(\frac12\right)_m^2}{2^m m!}.
\]
These formulas are obtained by the standard saddle-point substitutions
\[
w=w_++\sqrt{\hbar}\,t
\qquad\text{and}\qquad
w=w_-+i\sqrt{\hbar}\,t,
\]
followed by expansion of \(\cosh(\sqrt{\hbar}\,t)\) and \(\cos(\sqrt{\hbar}\,t)\). By using the ordinary Borel transform similar to that for equation \eqref{equ:ordinaryBorel}, we have
\[
\widehat\phi_+(\xi)
=
{}_2F_1\!\left(\frac12,\frac12;1;-\frac{\xi}{2}\right),
\qquad
\widehat\phi_-(\xi)
=
{}_2F_1\!\left(\frac12,\frac12;1;\frac{\xi}{2}\right).
\]
Hence \(\widehat\phi_-\) has its first singularity at \(\xi=2\), while \(\widehat\phi_+\) has its first singularity at \(\xi=-2\). The singularities of these hypergeometric germs are logarithmic. Thus
\[
\sqrt{2\pi\hbar}\,\widetilde\phi_+,\quad
i\sqrt{2\pi\hbar}\,\widetilde\phi_-
\in \widetilde{\mathcal R}^{\mathrm{int}}.
\]
More precisely, there exists a holomorphic germ \(H_-(\xi)\) near \(\xi=2\) such that
\[
\widehat\phi_-(\xi)
=
H_-(\xi)
-\frac{2i}{2\pi i}\log\!\left(\xi-2\right)\widehat\phi_+(\xi-2).
\]
Therefore, with the convention of Section~\ref{sec:general} for the alien operator,
\[
\Delta_{2}^{+}\widetilde\phi_-=-\,2i\,\widetilde\phi_+,
\qquad
\Delta_{2}^{+}\widetilde\phi_+=0.
\]

Similarly, \(\widehat\phi_+\) has its first singularity at \(\xi=-2\). More precisely, there exists a holomorphic germ \(H_+(\xi)\) near \(\xi=-2\) such that
\[
\widehat\phi_+(\xi)
=
H_+(\xi)
-\frac{2i}{2\pi i}\log\!\left(\xi+2\right)\widehat\phi_-(\xi+2).
\]
Therefore
\[
\Delta_{-2}^{+}\widetilde\phi_+ = -\,2i\,\widetilde\phi_-,
\qquad
\Delta_{-2}^{+}\widetilde\phi_- = 0.
\]

We thus have
\[
\begin{split}
    \dot\Delta_{2}^{+}\widetilde I_-
&=
e^{-2/\hbar}\Delta_{2}^{+}
\Bigl(i\sqrt{2\pi\hbar}\,e^{1/\hbar}\widetilde\phi_-(\hbar)\Bigr)
=
2\,\widetilde I_+(\hbar),
\qquad
\dot\Delta_{2}^{+}\widetilde I_+=0.
\\
\dot\Delta_{-2}^{+}\widetilde I_+
&=
e^{2/\hbar}\Delta_{-2}^{+}
\Bigl(\sqrt{2\pi\hbar}\,e^{-1/\hbar}\widetilde\phi_+(\hbar)\Bigr)
=
-2\,\widetilde I_-(\hbar), \qquad
\dot\Delta_{-2}^{+}\widetilde I_-=0.
\end{split}
\]
The Bessel critical-value graded vector space, for the rotated action
\(\mathcal A=-iS\), is $
\mathcal M_{\mathrm{Bes}}
=
\C\,\widetilde I_+\oplus \C\,\widetilde I_-$.
Hence, if we write
\[
\widetilde{\mathbf I}:=(\widetilde I_+,\widetilde I_-),
\]
then the Stokes automorphism along the positive Borel ray is
\[
\mathfrak S^{+}\widetilde{\mathbf I}
=
\widetilde{\mathbf I}
\begin{pmatrix}
1&2\\
0&1
\end{pmatrix}.
\]

By \eqref{Stokesauto} and Proposition~\ref{prop:thimble-lateral-sum}, this gives
\[
\mathbf I^{<}=\mathbf I^{>}
\begin{pmatrix}
1&2\\
0&1
\end{pmatrix},
\qquad
\mathbf I^{>}=\mathbf I^{<}
\begin{pmatrix}
1&-2\\
0&1
\end{pmatrix},
\]
where
\[
\mathbf I^{<}:=\bigl(I_{\J_+^{<}},I_{\J_-^{<}}\bigr),
\qquad
\mathbf I^{>}:=\bigl(I_{\J_+^{>}},I_{\J_-^{>}}\bigr).
\]
Since the integral is linear in the cycle, the same coefficients govern the wall-crossing of the thimbles:
\[
(\J_+^<\ \ \J_-^<)
=
(\J_+^>\ \ \J_-^>)
\begin{pmatrix}
1&2\\
0&1
\end{pmatrix}, \qquad
(\J_+^>\ \ \J_-^>)
=
(\J_+^<\ \ \J_-^<)
\begin{pmatrix}
1&-2\\
0&1
\end{pmatrix}.
\]
This agrees with \eqref{equ:Besseljump}. In particular, the resurgent Stokes coefficient is \(2\), exactly matching the two connecting trajectories found in Section \ref{sec:BesselLef}.

\medskip
\section{The Gamma model}\label{sec:gamma-model}

We now turn to a third example, whose main new feature is that the holomorphic action has infinitely many critical points. Following \cite[Appendix~C]{HarlowMaltzWitten2011}, we consider
\[
S(\phi):=e^\phi-\phi,
\qquad \phi\in\C,
\]
and the associated exponential integral
\[
I_{\mathcal C}(\hbar):=\int_{\mathcal C} e^{-S(\phi)/\hbar}\,\dd\phi.
\]
Here \(\hbar\in\C^\times\) is small and \(\mathcal C\) is an oriented contour in the \(\phi\)-plane.

\begin{rmk}\label{rmk:gamma-model-contour}
For the distinguished contour \(\mathcal C=\R\) and $\hbar>0$, one has
\begin{equation*}\label{eq:gammaisgamma}
I_{\R}(\hbar)
=
\int_{-\infty}^{\infty} e^{-(e^\phi-\phi)/\hbar}\,\dd\phi
=
\hbar^{1/\hbar}\int_{0}^{\infty} t^{\frac{1}{\hbar}-1}e^{-t}\,\dd t
=
\hbar^{1/\hbar}\Gamma(1/\hbar),
\end{equation*}
where we use the change of variables \(t=e^\phi/\hbar\). Thus, up to the elementary prefactor \(\hbar^{1/\hbar}\), the present model is exactly the Gamma function.
\end{rmk}

The critical points of  $S(\phi)$ are
\[\{p_n:=2\pi i n\mid n\in\Z\}, \]
which are all non-degenerate. The corresponding critical values are
\[
S_n:=S(p_n)=1-2\pi i n,
\]
so the critical values form an infinite arithmetic progression in the imaginary direction.

For any two distinct critical points \(p_n\) and \(p_m\), one has
\[
S_n-S_m=-2\pi i(n-m).
\]
It follows that the Stokes phases are
\[
\theta_*=\pm\frac{\pi}{2}.
\]
The case \(\theta_*=-\frac{\pi}{2}\) is obtained from the case
\(\theta_*=\frac{\pi}{2}\) by reversing the relevant ordering of the critical values, so below we analyze \(\theta_*=\frac{\pi}{2}\) in detail.

\smallskip
\subsection{Lefschetz thimbles near the Stokes phase}
In real coordinates $\phi=x+iy$, $S(\phi)$ reads
\[
S(\phi)=e^x\cos y-x+i(e^x\sin y-y),
\]
hence
\begin{equation*}\label{eq:Gamma-FG}
\begin{aligned}
F_\theta(x,y)&=e^x\cos(y-\theta)-x\cos\theta-y\sin\theta,\\
G_\theta(x,y)&=e^x\sin(y-\theta)+x\sin\theta-y\cos\theta.
\end{aligned}    
\end{equation*}
With respect to the standard hermitian metric, the negative gradient flow is
\begin{equation*}\label{eq:gamma-flow-real-general}
\begin{cases}
\dot x=-e^x\cos(y-\theta)+\cos\theta,
\\[0.1cm]
\dot y=\,e^x\sin(y-\theta)+\sin\theta.
\end{cases}
\end{equation*}
Now we study the Lefschetz thimble and its dual thimble attached to the critical point \(p_n=2\pi i n\). 
\begin{itemize}
\item The stable and unstable manifolds through \(p_n\) are the connected components of the level set
\begin{equation*}\label{eq:gamma-critical-level-generic}
G_\theta(x,y)=e^x\sin(y-\theta)+x\sin\theta-y\cos\theta=-\sin\theta-2\pi n\cos\theta
\end{equation*}
selected by the local stable and unstable directions. 
\item Linearizing the flow equation at $p_n$, we obtain
\[
\dot\eta=-e^{i\theta}\overline{\eta}
\qquad
(\phi\sim p_n+\eta),
\]
the local stable and unstable directions are
\begin{equation*}\label{eq:gamma-local-directions}
\text{stable:} \quad \arg\eta=\frac{\theta}{2}\pmod{\pi},
\qquad 
\text{unstable:} \quad  \arg\eta=\frac{\theta}{2}+\frac{\pi}{2}\pmod{\pi}.
\end{equation*}
\end{itemize}

\begin{rmk}
This local picture of $p_m$ is the same as the one of $p_n$, up to vertical translation by \(2\pi (n-m)\). This is due to the translation invairance of the flow equation under $y\to y+2\pi$.    
\end{rmk}

It remains to identify the ends of these branches. We have the following result.

\begin{prop}
\label{prop:gamma-generic-endpoints-W}
Let
\[
\theta_-:=\frac{\pi}{2}-\delta,
\qquad
\theta_+:=\frac{\pi}{2}+\delta,
\qquad
0<\delta\ll1.
\]
For \(p_n=2\pi i n\), the stable and unstable manifolds have the following ends.
\begin{enumerate}
\item For \(\theta=\theta_-\), the asymptotics are given by
\[
W^s_{\theta_-}(p_n):
\quad
\begin{cases}
x\to-\infty,\quad y=x\tan\theta_-+\tan\theta_-+2\pi n+o(1),\\[0.1cm]
x\to+\infty,\quad y=\theta_-+2\pi n+O(xe^{-x}),
\end{cases}
\]
and
\[
W^u_{\theta_-}(p_n):
\quad
\begin{cases}
x\to+\infty,\quad y=\theta_- -\pi+2\pi n+O(xe^{-x}),\\[0.1cm]
x\to+\infty,\quad y=\theta_- +\pi+2\pi n+O(xe^{-x}).
\end{cases}
\]
\item For \(\theta=\theta_+\), the asymptotics are given by
\[
W^s_{\theta_+}(p_n):
\quad
\begin{cases}
x\to+\infty,\quad y=\theta_+ +2\pi n+O(xe^{-x}),\\[0.1cm]
x\to+\infty,\quad y=\theta_+ +2\pi(n-1)+O(xe^{-x}),
\end{cases}
\]
and
\[
W^u_{\theta_+}(p_n):
\quad
\begin{cases}
x\to-\infty,\quad y=x\tan\theta_+ +\tan\theta_+ +2\pi n+o(1),\\[0.1cm]
x\to+\infty,\quad y=\theta_+-\pi+2\pi n+O(xe^{-x}).
\end{cases}
\]
\end{enumerate}
\end{prop}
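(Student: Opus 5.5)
By the translation invariance $y\mapsto y+2\pi$ of the flow (the remark preceding the proposition), it suffices to treat $n=0$; the general case follows by shifting $y$ by $2\pi n$. So fix $p_0=0$. Its stable and unstable manifolds lie in the level set $\{G_\theta=-\sin\theta\}$, with $G_\theta(x,y)=e^x\sin(y-\theta)+x\sin\theta-y\cos\theta$. The proof splits into two parts. First, a purely asymptotic analysis determines the possible ends of any branch of this level set, together with the sign of $F_\theta$ there. Second, a combinatorial step decides which of the possible ends are actually reached by each of the four half-branches $A_0,B_0$ (stable) and $C_0,D_0$ (unstable) from Proposition~\ref{prop:local-model}.

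\textbf{Step 1: asymptotic channels.} Since $F_\theta$ decreases strictly along flow lines and $G_\theta$ is conserved (Lemma~\ref{lem:F-decreases-G-conserved}), and there are no critical points in the regions considered, every half-branch must leave every compact set.

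For $x\to+\infty$ the level equation reads
\[
\sin(y-\theta)=e^{-x}\bigl(-\sin\theta-x\sin\theta+y\cos\theta\bigr).
\]
Provided $y$ stays bounded along the branch (to be justified below), this forces $y=\theta+k\pi+O(xe^{-x})$. In such a channel $F_\theta\approx e^x\cos(y-\theta)$, which tends to $+\infty$ for $k$ even and to $-\infty$ for $k$ odd. Hence stable ends lie in the channels $y\to\theta+2\pi j$ and unstable ends in the channels $y\to\theta+(2j+1)\pi$. Moreover, inside a thin channel around $y=\theta+k\pi$ the map $y\mapsto G_\theta$ is strictly monotone for large $x$, because $\partial_yG_\theta=e^x\cos(y-\theta)-\cos\theta$ is dominated by $\pm e^x$. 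So each channel contains exactly one arc of the level set $G_\theta=-\sin\theta$.

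For $x\to-\infty$ the term $e^x$ is negligible. The level set approaches the line $x\sin\theta-y\cos\theta=-\sin\theta$, that is $y=x\tan\theta+\tan\theta+o(1)$. Along this line
\[
F_\theta\approx -x\cos\theta-y\sin\theta=-\frac{x}{\cos\theta}+\text{const}.
\]
For $\theta_-$ we have $\cos\theta_->0$, so $F\to+\infty$ and this end is a stable end. For $\theta_+$ we have $\cos\theta_+<0$, so $F\to-\infty$ and this end is an unstable end.

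I would also rule out escape with $|y|\to\infty$ at bounded $x$. For bounded $x$ and $|y|\to\infty$ the term $-y\cos\theta$ dominates $G_\theta$, so $G_\theta$ becomes unbounded; at $\theta=\pi/2\pm\delta$ the coefficient $\cos\theta\neq0$, which makes this exclusion work.

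\textbf{Step 2: assigning channels.} There are four half-branches and one left end on the line above. By Step 1 and the sign of $\cos\theta$, exactly one of the two stable half-branches takes the left end for $\theta_-$, and exactly one unstable half-branch takes it for $\theta_+$. The remaining three half-branches go into right channels, and it remains to identify the integers $j$.

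I would do this by continuity in $\delta$ from the Stokes configuration at $\theta=\pi/2$. There the flow is
\[
\dot x=-e^x\sin y,\qquad \dot y=1-e^x\cos y,
\]
and one can follow the explicit picture: the local directions at $p_0$ are $\arg\eta=\pi/4$ or $\pi/4+\pi/2$, and the channels $y=\theta+k\pi$ are close to $\pi/2+k\pi$. The half-branch leaving $p_0$ upward in a given quadrant is trapped in the horizontal strip bounded by the level curves of the neighbouring critical points $p_{\pm1}$. These curves are disjoint from it at a regular phase, because $G_\theta(p_{\pm1})\ne G_\theta(p_0)$. This trapping pins $j$, giving $j=0$ and $j=\pm1$ as listed in the statement. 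Translating back by $2\pi n$ yields the table.

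The main obstacle is this last step, in particular proving that a branch cannot jump channels. I would handle it with a barrier argument. The curves $W^{s,u}_\theta(p_{n\pm1})$ and the horizontal lines $y=\theta+k\pi$, on which $G_\theta$ is monotone in $x$ and hence crossed at most once, together cut the plane into regions. The half-branch from $p_0$ must stay in its region, and each region contains a single admissible channel. Checking the sign of the direction of crossing on these barriers requires a careful case analysis in the sign of $\delta$, and this is where I expect most of the work.
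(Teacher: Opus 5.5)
Your reduction to $n=0$ and your Step 1 are fine, and they coincide with the asymptotic part of the paper's proof: right ends $y=\theta+k\pi+O(xe^{-x})$ with $F_\theta\sim(-1)^ke^x$, and a single left end $y=x\tan\theta+\tan\theta+2\pi n+o(1)$ with $F_\theta\sim -x/\cos\theta$. The gap is Step 2. This is the only place where $\theta_-$ and $\theta_+$ differ, and you leave it as a plan whose stated mechanism does not work.

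First, ``exactly one stable (resp.\ unstable) half-branch takes the left end'' presupposes that the unique left end of $\{G_\theta=G_\theta(p_n)\}$ lies on the component through $p_n$. It also does not say which half-branch reaches it, and that is the content of the proposition.

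Second, the trapping claim fails for exactly that half-branch. At $\theta_-$ the downward stable arm of $p_n$ is not stopped by $p_{n-1}$; it passes to the left of $p_{n-1},p_{n-2},\dots$. At $\theta_+$ the upward unstable arm passes to the left of $p_{n+1},p_{n+2},\dots$. The left sector cut out by a neighbour's cross contains no channel at all, contrary to ``each region contains a single admissible channel''.

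Third, continuity from $\theta=\pi/2$ cannot settle this. At the Stokes phase those arms end at $p_{n\pm1}$, and the side on which they pass after perturbation is precisely the sign-of-$\delta$ information you need.

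Fourth, using $W_\theta(p_{n\pm1})=W_\theta(p_n)\pm 2\pi i$ as barriers is circular unless you first establish, independently, how each cross sits in its own strip.

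Your ingredients do close the gap once two facts are added. On $L_k=\{y=\theta+k\pi\}$ one has $G_\theta=x\sin\theta-(\theta+k\pi)\cos\theta$ and $\dot y=\sin\theta>0$. Hence $\{G_\theta=G_\theta(p_n)\}$ meets $L_k$ exactly once, transversally and upward, at $x^{(n)}_k=-1+(\theta+k\pi-2\pi n)\cot\theta$.

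Now work in the strip between $L_{2n-1}$ and $L_{2n}$. There $y$ is bounded, so your Step 1 applies, and there is no end with $x\to-\infty$, since there the level set is a single graph that leaves every strip. So, besides $p_n$, the level set has exactly four ends in this strip: the two crossings, channel $2n$ approached from below, and channel $2n-1$ approached from above. The four arms take these bijectively.

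Match sources with sinks: the flow enters through the lower crossing and from channel $2n$, and leaves through the upper crossing and into channel $2n-1$. This gives, for both $\theta_\pm$ and without circularity:
\begin{itemize}
\item one stable arm to $y\to\theta+2\pi n$;
\item one unstable arm to $y\to\theta-\pi+2\pi n$;
\item the other stable arm leaves downward, and the other unstable arm leaves upward.
\end{itemize}
Odd strips contain no channel ends, so these last two arms pass straight through the adjacent odd strip.

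The upward arm then meets $L_{2n+1}$ at $x^{(n)}_{2n+1}$, while the downward stable arm of $p_{n+1}$ meets it at $x^{(n)}_{2n+1}-2\pi\cot\theta$.
\begin{itemize}
\item If $\cot\theta>0$ ($\theta_-$), the arm enters the sector of $p_{n+1}$'s cross whose only exit is the channel $y\to\theta+\pi+2\pi n$.
\item If $\cot\theta<0$ ($\theta_+$), it enters the left sector and must cross $L_{2n+2}$. Inductively, since $x^{(n)}_k\to-\infty$, it becomes the left end.
\end{itemize}
The downward arm is symmetric, giving $y\to\theta_++2\pi(n-1)$ at $\theta_+$ and the left end at $\theta_-$.

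For comparison, the paper's barrier is the rectangle $R_{n,\theta}=\{x>0,\ 2\pi n<y<2\pi n+\theta\}$, whose top side is your $L_{2n}$. It traps only the north-east stable branch, and the paper reads the other assignments off the adjacent strips. The argument above is a fuller version of the same device.
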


\begin{proof} 
\noindent\textbf{The case $x\rightarrow+\infty$}:

Fix \(n\in\mathbb Z\).  Along the stable and unstable manifolds of
\(p_n=2\pi i n\), the function \(G_\theta\) is constant and equal to
\(G_\theta(p_n)\).  Hence the relevant level equation is
\begin{equation}\label{eq:gamma-level-right-proof}
H_{\theta,n}(x,y):=
G_\theta(x,y)-G_\theta(p_n)
=
e^x\sin(y-\theta)+(x+1)\sin\theta-(y-2\pi n)\cos\theta
=0 .
\end{equation}

By its initial direction, the zero levels set enters the region
\[
  R_{n,\theta}
  :=
  \{\,x>0,\ 2\pi n<y<2\pi n+\theta\,\}.
\]
\begin{itemize}
    \item On the lower horizontal boundary \(y=2\pi n\), for our $\theta$, we have
\[
  H_{\theta,n}(x,2\pi n)
  =
  \sin\theta\,(1+x-e^x)<0,\quad x>0.
\]
    \item On the upper horizontal boundary \(y=2\pi n+\theta\), we have
\[
  H_{\theta,n}(x,2\pi n+\theta)
  =
  (x+1)\sin\theta-\theta\cos\theta .
\]
If \(\theta=\theta_+\), then \(\cos\theta<0\), so the right-hand side is
strictly positive. If \(\theta=\theta_-\), then \(\cos\theta>0\), but
\[
  \sin\theta-\theta\cos\theta>0,\quad 0<\theta<\pi/2.
\] 
Hence
\[
  H_{\theta,n}(x,2\pi n+\theta)>0,
  \qquad x\geq 0 .
\]
\item On the  vertical side \(x=0\), it is easy to check 
\[
  H_{\theta,n}(0,y)>0,
  \qquad 2\pi n<y<2\pi n+\theta.
\]
\end{itemize}
By the above analysis, the right-up branch
cannot leave \(R_{n,\theta}\), otherwise it would have to cross to one of the above boundary pieces. In particular \(y=O(1)\)
along that right end. Hence
\eqref{eq:gamma-level-right-proof} gives
\[
e^x\sin(y-\theta)
=
-(x+1)\sin\theta+(y-2\pi n)\cos\theta
=
O(x).
\]
Thus
\[
\sin(y-\theta)=O(xe^{-x}).
\]
Consequently, on any such right end, \(y-\theta\) must approach a zero of
\(\sin\), and hence there exists \(k\in\mathbb Z\) such that
\[
y=\theta+k\pi+\varepsilon(x),
\qquad
\varepsilon(x)\to 0
\qquad
(x\to+\infty).
\]

Substituting this form into \eqref{eq:gamma-level-right-proof}, we obtain
\[
e^x\sin(k\pi+\varepsilon)
+(x+1)\sin\theta
-(\theta+k\pi+\varepsilon-2\pi n)\cos\theta
=0.
\]
Since
\[
\sin(k\pi+\varepsilon)=(-1)^k\sin\varepsilon
=
(-1)^k\left(\varepsilon+O(\varepsilon^3)\right),
\]
the equation becomes
\[
\bigl((-1)^k e^x-\cos\theta\bigr)\varepsilon
+
(x+1)\sin\theta
-
(\theta+k\pi-2\pi n)\cos\theta
+
O(e^x\varepsilon^3)
=0.
\]
The preliminary estimate \(\sin(y-\theta)=O(xe^{-x})\) gives
\[
\varepsilon(x)=O(xe^{-x}),
\]
and hence
\[
e^x\varepsilon^3=O(x^3e^{-2x}).
\]

Solving the previous equation for \(\varepsilon\), and using
\[
\frac{1}{(-1)^k e^x-\cos\theta}
=
(-1)^k e^{-x}+O(e^{-2x}),
\]
we obtain
\begin{equation*}\label{eq:gamma-right-epsilon-proof}
\varepsilon(x)
=
(-1)^k e^{-x}
\left(
-(x+1)\sin\theta
+
(\theta+k\pi-2\pi n)\cos\theta
\right)
+
O(x^2e^{-2x}).
\end{equation*}
Therefore every right end of \(H_{\theta,n}=0\) in the \(k\)-th horizontal
strip has the asymptotic expansion
\begin{equation}\label{eq:gamma-right-end-proof}
y
=
\theta+k\pi
+
(-1)^k e^{-x}
\left(
-(x+1)\sin\theta
+
(\theta+k\pi-2\pi n)\cos\theta
\right)
+
O(x^2e^{-2x}).
\end{equation}
In particular,
\[
y=\theta+k\pi+O(xe^{-x}),
\qquad x\to+\infty.
\]

It remains to identify whether such a right end is stable or unstable.  Recall that we have
\[
F_\theta(x,y)
=
e^x\cos(y-\theta)-x\cos\theta-y\sin\theta.
\]
Since along \eqref{eq:gamma-right-end-proof}, \(y-\theta=k\pi+\varepsilon(x)\) and \(\varepsilon(x)=O(xe^{-x})\),
\[
\cos(y-\theta)
=
\cos(k\pi+\varepsilon)
=
(-1)^k+O(\varepsilon^2).
\]
Therefore
\[
F_\theta(x,y)
=
(-1)^k e^x+O(x).
\]
Hence, if \(k\) is even, then
\[
F_\theta(x,y)\to+\infty,
\]
so this is a stable, rapid-decay end.  If \(k\) is odd, then
\[
F_\theta(x,y)\to-\infty,
\]
so this is an unstable end.

Thus the stable right ends are precisely the even horizontal asymptotic lines
\[
y=\theta+2\pi \ell+O(xe^{-x}),
\qquad \ell\in\mathbb Z,
\]
whereas the unstable right ends are precisely the odd horizontal asymptotic
lines
\[
y=\theta+(2\ell+1)\pi+O(xe^{-x}),
\qquad \ell\in\mathbb Z.
\]
Specializing these alternatives to the two adjacent strips containing the
branches issuing from \(p_n\) gives exactly the \(x\to+\infty\) endpoint
statements in this proposition.

\medskip
\noindent\textbf{The case $x\rightarrow-\infty$:}

Fix the critical point \(p_n=2\pi i n\).  As before, the stable and unstable
manifolds through \(p_n\) are contained in the level set \eqref{eq:gamma-level-right-proof}.

We now consider the left end, namely an end on which \(x\to-\infty\).  Since
\(\theta=\theta_\pm=\frac{\pi}{2}\pm\delta\) with \(0<\delta\ll1\), we have
\(\cos\theta\neq0\).  Solving \eqref{eq:gamma-level-right-proof} for \(y\), we get
\begin{equation*}\label{eq:gamma-left-solved-proof}
y
=
x\tan\theta
+
\tan\theta
+
2\pi n
+
\frac{e^x}{\cos\theta}\sin(y-\theta).
\end{equation*}
The last term is \(O(e^x)\), because \(\sin(y-\theta)\) is bounded and
\(\theta\) is fixed away from \(\frac{\pi}{2}\) by the small nonzero parameter
\(\delta\).  Therefore
\begin{equation}\label{eq:gamma-left-asymptotic-proof}
y
=
x\tan\theta
+
\tan\theta
+
2\pi n
+
O(e^x),
\qquad x\to-\infty.
\end{equation}
In particular,
\[
y
=
x\tan\theta
+
\tan\theta
+
2\pi n
+
o(1),
\qquad x\to-\infty.
\]
This proves the claimed left-end asymptotic line.

For completeness, let us also record the corresponding asymptotic behavior of
\(F_\theta\), since this determines whether the left end is stable or unstable.
Along the left end, \eqref{eq:gamma-left-asymptotic-proof} gives
\[
y
=
x\tan\theta
+
\tan\theta
+
2\pi n
+
O(e^x).
\]
Substituting this into
\[
F_\theta(x,y)
=
e^x\cos(y-\theta)
-
x\cos\theta
-
y\sin\theta,
\]
we obtain
\[
\begin{aligned}
F_\theta(x,y)
&=
e^x\cos(y-\theta)
-
x\cos\theta
-
\left(
x\tan\theta+\tan\theta+2\pi n+O(e^x)
\right)\sin\theta  \\
&=
-
x\left(
\cos\theta+\frac{\sin^2\theta}{\cos\theta}
\right)
+O(1)=
-\frac{x}{\cos\theta}+O(1).
\end{aligned}
\]
Hence, as \(x\to-\infty\),
\[
F_\theta(x,y)\to+\infty
\quad\text{if}\quad
\cos\theta>0,
\]
whereas
\[
F_\theta(x,y)\to-\infty
\quad\text{if}\quad
\cos\theta<0.
\]
Since \(\cos\theta_->0\) and \(\cos\theta_+<0\), the left end is a stable end
for \(\theta=\theta_-\), and an unstable end for \(\theta=\theta_+\).

Therefore the unique left asymptotic line of the level set through \(p_n\) is
\[
x\to-\infty,
\qquad
y
=
x\tan\theta
+
\tan\theta
+
2\pi n
+
o(1),
\]
and its stable/unstable type is determined by the sign of \(\cos\theta\), as
claimed in this proposition.
\end{proof}

The two adjacent generic Gamma configurations are shown in the first and third panels of Figure~\ref{fig:gamma-thimbles-three-phases}.

\smallskip
\subsection{Connecting trajectories and the Picard--Lefschetz jump formula}

At the Stokes phase \(\theta_*=\frac{\pi}{2}\), we prove the following.

\begin{prop}[Connecting trajectories at the Stokes phase]\label{prop:gamma-neighboring-trajectories}
At the Stokes phase \(\theta_*=\frac{\pi}{2}\), the direct connecting trajectories are exactly the neighboring ones
\[
p_n\longrightarrow p_{n+1},
\qquad n\in\Z.
\]
Moreover, for each \(n\in\Z\), there is exactly one such trajectory, up to translation in the flow parameter \(s\).
\end{prop}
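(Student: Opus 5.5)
The plan is to work directly at $\theta_*=\frac{\pi}{2}$ and describe the whole level set containing all the critical points. With $\theta=\frac\pi2$, the formulas for $F_\theta,G_\theta$ and the flow in real coordinates become
\[
F_{\pi/2}=e^x\sin y-y,\qquad G_{\pi/2}=-e^x\cos y+x,\qquad
\dot x=-e^x\sin y,\qquad \dot y=1-e^x\cos y .
\]
At every $p_n$ one has $G_{\pi/2}(p_n)=-1$ and $F_{\pi/2}(p_n)=-2\pi n$. By Lemma~\ref{lem:F-decreases-G-conserved}, any connecting trajectory lies in
\[
L:=\{G_{\pi/2}=-1\}=\{\cos y=(x+1)e^{-x}\}.
\]
Since $F$ decreases along the flow, a trajectory $p_n\to p_m$ also forces $n<m$.

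The first step is to decompose $L$ explicitly. The function $g(x)=(x+1)e^{-x}$ satisfies $g\le 1$, with equality only at $x=0$. It is increasing on $(-\infty,0)$ and decreasing to $0$ on $(0,\infty)$. Let $x_0<0$ be the unique point with $g(x_0)=-1$. Then $L$ is empty for $x<x_0$, and $L\cap\{x=0\}$ is exactly the critical set $\{p_n\}$. Away from these points, $L$ is the union of the graphs
\[
y=2\pi k\pm\arccos g(x),
\]
with $x\in(x_0,0)$ or $x\in(0,\infty)$. Hence $L\setminus\Crit(S)$ splits into two kinds of smooth arcs.
\begin{itemize}
\item \emph{Left arcs.} For $x\in(x_0,0)$, the graph $y=2\pi n+\arccos g(x)$ rises from $p_n$ to the point $(x_0,(2n+1)\pi)$. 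There it joins the graph $y=2\pi(n+1)-\arccos g(x)$, which descends to $p_{n+1}$. Together these form one arc $\gamma_n$ joining $p_n$ to $p_{n+1}$. At $x_0$ the gradient $\dd G\neq0$, so $\gamma_n$ is smooth there.
\item \emph{Right arcs.} For $x\in(0,\infty)$, the arcs $y=2\pi n\pm\arccos g(x)$ leave $p_n$ and go to $x=+\infty$, asymptotic to $y=2\pi n\pm\frac\pi2$. This is consistent with Proposition~\ref{prop:gamma-generic-endpoints-W} in the limit $\delta\to0$.
\end{itemize}
Each arc contains no zero of the vector field, since the only zeros are the $p_n$. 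Therefore each arc is a single flow line, up to $s$-translation.

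The second step is existence, with the correct orientation. On $\gamma_n$ we have $e^x\cos y=x+1$, so
\[
\dot y=1-(x+1)=-x>0 .
\]
Thus $y$ increases strictly along $\gamma_n$, and the flow runs from $p_n$ (backward limit) to $p_{n+1}$ (forward limit). This gives one trajectory $p_n\to p_{n+1}$ for each $n$.

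The third step is uniqueness and the absence of non-neighbouring connections. Every trajectory issuing from or entering a critical point lies in $L$, and hence on one of the arcs above. A right arc has only one critical endpoint; its other end escapes to $x=+\infty$, where $F\to\pm\infty$. So a right arc cannot connect two critical points. The left arcs connect only neighbours $p_n,p_{n+1}$, and between each such pair there is exactly one of them. This proves the statement.

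The main obstacle I expect is the rigorous global description of $L$: one must check that no other components or branches exist, and that the arcs through $x_0$ glue smoothly. I would handle this through the monotonicity of $g$ and the fact that $\dd G\neq 0$ on $L$ away from the $p_n$.
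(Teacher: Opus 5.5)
Your proof is correct. It shares the paper's skeleton: restrict to the level set $x+1=e^x\cos y$, exhibit a compact arc from $p_n$ to $p_{n+1}$, and show that the remaining branches escape. The execution, however, is genuinely different and in some respects more complete.

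The paper writes the compact arc as a graph over $y$, namely $x=\chi(y)=-1-W_0(-e^{-1}\cos y)$ on $0\le y\le 2\pi$. It asserts that $\chi(y(s))+iy(s)$ solves the flow and transports it by $\phi\mapsto\phi+2\pi i n$. It then excludes non-neighbouring connections by a local count at the source: of the two unstable half-branches at $p_n$, the left one is the compact arc, and the other is \emph{asserted} to escape to $x\to+\infty$.

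You instead describe all of $L$ at once as the graphs $y=2\pi k\pm\arccos g(x)$ with $g(x)=(x+1)e^{-x}$. The monotonicity of $g$ then classifies every component of $L\setminus\Crit(S)$: the compact arcs $\gamma_n$ through $(x_0,(2n+1)\pi)$ and the noncompact arcs over $x>0$. Uniqueness and exclusion follow globally, because each component is a single orbit and only the $\gamma_n$ have two critical endpoints.

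Your route avoids the Lambert function and proves the global behaviour of the noncompact branches, which the paper only states. Your identity $\dot y=-x>0$ on $\gamma_n$ also makes explicit the orientation check that the paper calls straightforward. The paper's formula, in exchange, gives a single-valued closed form for the whole connecting arc, which is convenient for writing the trajectory and drawing the figure.

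In a write-up, make explicit two standard steps:
\begin{itemize}
\item The flow is tangent to $L$ off $\Crit(S)$, since $\dd G\ne0$ there and $G$ is conserved, so each component of $L\setminus\Crit(S)$ is one orbit.
\item The bounded orbit on $\gamma_n$ exists for all $s$, and since $y$ is monotone along it, its limits are $p_n$ and $p_{n+1}$.
\end{itemize}
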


\begin{proof}
At \(\theta_*=\frac{\pi}{2}\), the flow equation is
\begin{equation}\label{eq:gamma-flow-pi2}
    \dot x=-e^x\sin y,
\qquad
\dot y=1-e^x\cos y.
\end{equation}

Now the level set condition
\[
G_{\pi/2}(x,y)=x-e^x\cos y=G_{\pi/2}(p_n)=-1\quad \forall~n\in\Z
\]
means
\begin{equation}\label{eq:gamma-critical-level-proof}
x+1=e^x\cos y.
\end{equation}

We first analyze the strip
\[
0\le y\le 2\pi.
\]
We claim that the compact component of \eqref{eq:gamma-critical-level-proof} joining
\[
p_0=(0,0)
\qquad\text{and}\qquad
p_1=(0,2\pi)
\]
is given by\footnote{Here \(W_0\) denotes the principal branch of the Lambert \(W\)-function, namely the branch obtained from the unique holomorphic inverse of \(u\mapsto ue^u\) near \(u=0\), normalized by \(W_0(0)=0\). Thus \(W_0(z)e^{W_0(z)}=z\). On the real interval \((-e^{-1},\infty)\), \(W_0\) is real-valued and extends continuously to \(-e^{-1}\) with \(W_0(-e^{-1})=-1\).}
\begin{equation}\label{eq:gamma-chi-branch}
x=\chi(y):=-1-W_0\bigl(-e^{-1}\cos y\bigr),
\qquad 0\le y\le 2\pi.
\end{equation}
One can compute
\[
\chi(0)=\chi(2\pi)=0, \qquad \text{and} \qquad 
\chi(y)<0 \ \ \text{as} \ \ 0<y<2\pi.
\]
Thus \eqref{eq:gamma-chi-branch} is a compact arc in the left half-plane joining \(p_0\) to \(p_1\). It is straightforward to show that \(\phi(s)=\chi(y(s))+iy(s)\) is a solution of the flow equation \eqref{eq:gamma-flow-pi2}. Thus the compact component \eqref{eq:gamma-chi-branch} gives one direct trajectory
\[
p_0\longrightarrow p_1.
\]

The flow equation \eqref{eq:gamma-flow-pi2} is invariant under vertical translations
\[
\phi\longmapsto \phi+2\pi i n,
\qquad n\in\Z,
\]
because \(e^{\phi+2\pi i n}=e^\phi\). Hence translating the above solution gives, for every \(n\in\Z\), a direct trajectory
\[
\phi_n(s):=2\pi i n+\phi(s)
\]
from $p_n$ to $p_{n+1}$. Thus every neighboring pair \(p_n,p_{n+1}\) is connected by exactly one direct trajectory, up to translation in the flow parameter \(s\).
It remains to exclude direct trajectories between non-neighboring critical points.  By the local normal form discussed in Section~\ref{sec:general}, the level set \(G_{\pi/2}=-1\) has only two unstable half-branches and two stable half-branches at each critical point \(p_n\).  The unstable half-branch leaving \(p_n\) toward the left is precisely the compact trajectory constructed above, namely the one ending at \(p_{n+1}\); the other unstable half-branch is noncompact and escapes to \(x\to+\infty\).  Hence there is no remaining local branch along which a direct trajectory from \(p_n\) to a non-neighboring critical point could start.
\end{proof}
The Gamma configurations of the connecting trajectories are shown in the second panels of Figure~\ref{fig:gamma-thimbles-three-phases}.

\begin{figure}[H]
\centering
\tikzset{
  stab/.style={
    black,
    line width=1.15pt,
       -{Latex[length=2.2mm]}
},
  unstab/.style={
    green!60!black,
    line width=1.25pt,
    -{Latex[length=2.2mm]}},
  broken/.style={
    blue!75!black,
    line width=2.6pt,
    -{Latex[length=2.8mm]}
}
}

\colorlet{sectorred}{red!18}
\begin{minipage}[t]{0.32\textwidth}
\centering
\[
\theta=\frac{\pi}{2}-\delta
\]
\resizebox{0.88\linewidth}{!}{
\begin{tikzpicture}[x=0.92cm,y=0.52cm,>=Latex]

  \def\xL{-5.8}
  \def\xR{4.7}
  \def\T{6.283}
  \def\th{1.20}
  \def\hp{1.571}

\clip (\xL,-10.2) rectangle (\xR,18.7);
\fill[sectorred] (-8.6,-10.9)rectangle (0,-7.9);
\draw[->, thin] (\xL,0) -- (\xR,0) node[right] {$x$};
\draw[->, thin] (0,-10.2) -- (0,18.7) node[above] {$y$};

\foreach \k in {-1,0,1,2} {
    \pgfmathsetmacro{\yc}{\th+\k*\T}
    \fill[sectorred] (\xR-0.85,\yc-\hp) rectangle (\xR,\yc+\hp);
  }

\filldraw[black] (0,-6.283) circle (1.7pt);
\filldraw[black] (0,0) circle (1.7pt);
\filldraw[black] (0,6.283) circle (1.7pt);
\filldraw[black] (0,12.566) circle (1.7pt);

\node[below right=1pt] at (0,-6.283) {$p_{-1}$};
\node[below right=1pt] at (0,0) {$p_0$};
\node[below right=1pt] at (0,6.283) {$p_1$};
\node[below right=1pt] at (0,12.566) {$p_2$};

\draw[stab]
    (4.45,-5.05) .. controls (2.6,-5.00) and (1.1,-5.70) .. (0,-6.283);
\draw[stab]
    (4.45,1.20) .. controls (2.5,1.15) and (1.1,0.32) .. (0,0);
\draw[stab]
    (-5.9,-11-6.283) .. controls (-3,-3.5-6.283) and (-0.8,-0.3-6.283) .. (-1.1,1.8-6.283).. controls (-1,1.9-6.283) and (-2.4,5.55-6.283) .. (0,6.283-6.283);
\draw[stab]
    (4.45,7.48) .. controls (2.5,7.42) and (1.1,6.60) .. (0,6.283);
\draw[stab]
    (4.45,13.76) .. controls (2.5,13.70) and (1.1,12.88) .. (0,12.566);
\draw[stab]
    (-5.9,-11) .. controls (-3,-3.5) and (-0.8,-0.3) .. (-1.1,1.8).. controls (-1,1.9) and (-2.4,5.55) .. (0,6.283);
\draw[stab]
    (-5.9,-11+6.283) .. controls (-3,-3.5+6.283) and (-0.8,-0.3+6.283) .. (-1.1,1.8+6.283).. controls (-1,1.9+6.283) and (-2.4,5.55+6.283) .. (0,6.283+6.283);
\draw[stab]
    (-5.9,-11-6.283) .. controls (-3,-3.5-6.283) and (-0.8,-0.3-6.283) .. (-1.1,1.8-6.283).. controls (-1,1.9-6.283) and (-2.4,5.55-6.283) .. (0,6.283-6.283);
\draw[stab]
    (-5.9,-11-6.283-6.283) .. controls (-3,-3.5-6.283-6.283) and (-0.8,-0.3-6.283-6.283) .. (-1.1,1.8-6.283-6.283).. controls (-1,1.9-6.283-6.283) and (-2.4,5.55-6.283-6.283) .. (0,6.283-6.283-6.283);
\draw[stab]
    (-5.9,-11+6.283+6.283) .. controls (-3,-3.5+6.283+6.283) and (-0.8,-0.3+6.283+6.283) .. (-1.1,1.8+6.283+6.283).. controls (-1,1.9+6.283+6.283) and (-2.4,5.55+6.283+6.283) .. (0,6.283+6.283+6.283);

\draw[unstab]
    (0,0) .. controls (-1,0.5) and (-1.5,7) .. (-0.2,5.5) ..  controls (0.5,5) and (3,3.3) .. (4.45,3.5);
\draw[unstab]
    (0,0) .. controls (0.9,-0.70) and (2.3,-1.55) .. (4.45,-1.74);
\draw[unstab]
    (0,-12.566) .. controls (-1,0.5-12.566) and (-1.5,7-12.566) .. (-0.2,5.5-12.566) ..  controls (0.5,5-12.566) and (3,3.3-12.566) .. (4.45,3.5-12.566);
\draw[unstab]
    (0,0-12.566) .. controls (0.9,-0.70-12.566) and (2.3,-1.55-12.566) .. (4.45,-1.74-12.566);

\draw[unstab]
  (0,6.283) .. controls (-1,6.783) and (-1.5,13.283) .. (-0.2,11.783) .. controls (0.5,11.283) and (3,9.583) .. (4.45,9.783);
\draw[unstab]
  (0,6.283) .. controls (0.9,5.583) and (2.3,4.733) .. (4.45,4.543);
\draw[unstab]
  (0,-6.283) .. controls (-1,-5.783) and (-1.5,0.717) .. (-0.2,-0.783) .. controls (0.5,-1.283) and (3,-2.983) .. (4.45,-2.783);
\draw[unstab]
  (0,-6.283) .. controls (0.9,-6.983) and (2.3,-7.833) .. (4.45,-8.023);
\draw[unstab]
  (0,12.566) .. controls (-1,13.066) and (-1.5,19.566) .. (-0.2,18.066) .. controls (0.5,17.566) and (3,15.866) .. (4.45,16.066);
\draw[unstab]
  (0,12.566) .. controls (0.9,11.866) and (2.3,11.016) .. (4.45,10.826);

\end{tikzpicture}
}
\end{minipage}
\hfill
\begin{minipage}[t]{0.32\textwidth}
\centering
\[
\theta=\frac{\pi}{2}
\]
\resizebox{0.88\linewidth}{!}{
\begin{tikzpicture}[x=1.0cm,y=0.56cm,>=Latex]

  \def\xL{-5.6}
  \def\xR{4.7}
  \def\T{6.283}
  \def\th{1.571}
  \def\hp{1.571}

  \clip (\xL,-10.2) rectangle (\xR,18.7);

\draw[->, thin] (\xL,0) -- (\xR,0) node[right] {$x$};
\draw[->, thin] (0,-10.2) -- (0,18.7) node[above] {$y$};

\foreach \k in {-1,0,1,2} {
    \pgfmathsetmacro{\yc}{\th+\k*\T}
    \fill[sectorred] (\xR-0.85,\yc-\hp) rectangle (\xR,\yc+\hp);
  }

\filldraw[black] (0,-6.283) circle (1.7pt);
\filldraw[black] (0,0.000) circle (1.7pt);
\filldraw[black] (0,6.283) circle (1.7pt);
\filldraw[black] (0,12.566) circle (1.7pt);

  \node[below right=1pt] at (0,-6.283) {$p_{-1}$};
  \node[below right=1pt] at (0,0.000) {$p_0$};
  \node[below right=1pt] at (0,6.283) {$p_1$};
  \node[below right=1pt] at (0,12.566) {$p_2$};

\draw[stab] plot[smooth] coordinates {
    (4.450,-4.776) (4.000,-4.804) (3.200,-4.884) (2.500,-5.004)
    (1.900,-5.161) (1.400,-5.346) (1.000,-5.539) (0.700,-5.717)
    (0.450,-5.892) (0.250,-6.052) (0.100,-6.186) (0.000,-6.283)
  };
\draw[stab] plot[smooth] coordinates {
    (4.450,1.507) (4.000,1.479) (3.200,1.399) (2.500,1.279)
    (1.900,1.122) (1.400,0.937) (1.000,0.744) (0.700,0.566)
    (0.450,0.391) (0.250,0.231) (0.100,0.097) (0.000,0.000)
  };
\draw[stab] plot[smooth] coordinates {
    (4.450,7.790) (4.000,7.762) (3.200,7.682) (2.500,7.563)
    (1.900,7.405) (1.400,7.221) (1.000,7.027) (0.700,6.849)
    (0.450,6.674) (0.250,6.514) (0.100,6.380) (0.000,6.283)
  };
\draw[stab] plot[smooth] coordinates {
    (4.450,14.073) (4.000,14.045) (3.200,13.965) (2.500,13.846)
    (1.900,13.689) (1.400,13.504) (1.000,13.310) (0.700,13.132)
    (0.450,12.957) (0.250,12.797) (0.100,12.663) (0.000,12.566)
  };

\draw[unstab] plot[smooth] coordinates {
    (0.000,-6.283) (0.100,-6.380) (0.250,-6.514) (0.450,-6.674)
    (0.700,-6.849) (1.000,-7.027) (1.400,-7.221) (1.900,-7.405)
    (2.500,-7.563) (3.200,-7.682) (4.000,-7.762) (4.450,-7.790)
  };
\draw[unstab] plot[smooth] coordinates {
    (0.000,0.000) (0.100,-0.097) (0.250,-0.231) (0.450,-0.391)
    (0.700,-0.566) (1.000,-0.744) (1.400,-0.937) (1.900,-1.122)
    (2.500,-1.279) (3.200,-1.399) (4.000,-1.479) (4.450,-1.507)
  };
\draw[unstab] plot[smooth] coordinates {
    (0.000,6.283) (0.100,6.186) (0.250,6.052) (0.450,5.892)
    (0.700,5.717) (1.000,5.539) (1.400,5.346) (1.900,5.161)
    (2.500,5.004) (3.200,4.884) (4.000,4.804) (4.450,4.776)
  };
\draw[unstab] plot[smooth] coordinates {
    (0.000,12.566) (0.100,12.470) (0.250,12.336) (0.450,12.175)
    (0.700,12.001) (1.000,11.822) (1.400,11.629) (1.900,11.444)
    (2.500,11.287) (3.200,11.168) (4.000,11.087) (4.450,11.059)
  };

\draw[broken] plot[smooth] coordinates {
    (-0.000,-6.283) (-0.143,-6.133) (-0.313,-5.933) (-0.499,-5.683)
    (-0.687,-5.383) (-0.844,-5.083) (-0.954,-4.833) (-1.000,-4.713)
    (-1.077,-4.483) (-1.181,-4.083) (-1.246,-3.683) (-1.276,-3.283)
    (-1.278,-3.142) (-1.276,-2.983) (-1.244,-2.583) (-1.177,-2.183)
    (-1.072,-1.783) (-0.947,-1.433) (-0.787,-1.083) (-0.587,-0.733)
    (-0.378,-0.433) (-0.173,-0.183) (-0.000,0.000)
  };
\draw[broken] plot[smooth] coordinates {
    (-0.000,0.000) (-0.143,0.150) (-0.313,0.350) (-0.499,0.600)
    (-0.687,0.900) (-0.844,1.200) (-0.954,1.450) (-1.000,1.570)
    (-1.077,1.800) (-1.181,2.200) (-1.246,2.600) (-1.276,3.000)
    (-1.278,3.142) (-1.276,3.300) (-1.244,3.700) (-1.177,4.100)
    (-1.072,4.500) (-0.947,4.850) (-0.787,5.200) (-0.587,5.550)
    (-0.378,5.850) (-0.173,6.100) (-0.000,6.283)
  };
\draw[broken] plot[smooth] coordinates {
    (-0.000,6.283) (-0.143,6.433) (-0.313,6.633) (-0.499,6.883)
    (-0.687,7.183) (-0.844,7.483) (-0.954,7.733) (-1.000,7.853)
    (-1.077,8.083) (-1.181,8.483) (-1.246,8.883) (-1.276,9.283)
    (-1.278,9.425) (-1.276,9.583) (-1.244,9.983) (-1.177,10.383)
    (-1.072,10.783) (-0.947,11.133) (-0.787,11.483) (-0.587,11.833)
    (-0.378,12.133) (-0.173,12.383) (-0.000,12.566)
  };

\end{tikzpicture}
}
\end{minipage}
\hfill
\begin{minipage}[t]{0.32\textwidth}
\centering
\[
\theta=\frac{\pi}{2}+\delta
\]
\resizebox{\linewidth}{!}{
\begin{tikzpicture}[x=0.92cm,y=0.47cm,>=Latex]

\fill[sectorred] (-6,-10)rectangle (0,-7.9);

  \def\xL{-5.8}
  \def\xR{4.7}
  \def\T{6.283}
  \def\th{1.921}
  \def\hp{1.571}

  \clip (\xL,-10.2) rectangle (\xR,18.7);

\draw[->, thin] (\xL,0) -- (\xR,0) node[right] {$x$};
  \draw[->, thin] (0,-10.2) -- (0,18.7) node[above] {$y$};
\foreach \k in {-1,0,1,2} {
    \pgfmathsetmacro{\yc}{\th+\k*\T}
    \fill[sectorred] (\xR-0.85,\yc-\hp) rectangle (\xR,\yc+\hp);
  }

\filldraw[black] (0,-6.283) circle (1.7pt);
\filldraw[black] (0, 0) circle (1.7pt);
\filldraw[black] (0, 6.283) circle (1.7pt);
\filldraw[black] (0,12.566) circle (1.7pt);

\node[below right=1pt] at (0,-6.283) {$p_{-1}$};
\node[below right=1pt] at (0,0) {$p_0$};
\node[below right=1pt] at (0,6.283) {$p_1$};
\node[below right=1pt] at (0,12.566) {$p_2$};

\draw[stab]
  (4.45,-3.5+18.848) .. controls (3,-3.3+18.848) and (0.5,-5+18.848) .. (-0.2,-5.5+18.848) .. controls (-1.5,-7+18.848) and (-1,-0.5+18.848) .. (0,0+18.848);
\draw[stab]
  (4.45,1.74+18.848) .. controls (2.3,1.55+18.848) and (0.9,0.70+18.848) .. (0,0+18.848);
\draw[stab]
  (4.45,-3.5) .. controls (3,-3.3) and (0.5,-5) .. (-0.2,-5.5) .. controls (-1.5,-7) and (-1,-0.5) .. (0,0);
\draw[stab]
  (4.45,1.74) .. controls (2.3,1.55) and (0.9,0.70) .. (0,0);
\draw[stab]
  (4.45,-22.349) .. controls (3,-22.149) and (0.5,-23.849) .. (-0.2,-24.349) .. controls (-1.5,-25.849) and (-1,-19.349) .. (0,-18.849);
\draw[stab]
  (4.45,-17.109) .. controls (2.3,-17.299) and (0.9,-18.149) .. (0,-18.849);
\draw[stab]
  (4.45,-16.066) .. controls (3,-15.866) and (0.5,-17.566) .. (-0.2,-18.066) .. controls (-1.5,-19.566) and (-1,-13.066) .. (0,-12.566);
\draw[stab]
  (4.45,-10.826) .. controls (2.3,-11.016) and (0.9,-11.866) .. (0,-12.566);
\draw[stab]
  (4.45,9.066) .. controls (3,9.266) and (0.5,7.566) .. (-0.2,7.066) .. controls (-1.5,5.566) and (-1,12.066) .. (0,12.566);
\draw[stab]
  (4.45,14.306) .. controls (2.3,14.116) and (0.9,13.266) .. (0,12.566);
\draw[stab]
  (4.45,-9.783) .. controls (3,-9.583) and (0.5,-11.283) .. (-0.2,-11.783) .. controls (-1.5,-13.283) and (-1,-6.783) .. (0,-6.283);
\draw[stab]
  (4.45,-4.543) .. controls (2.3,-4.733) and (0.9,-5.583) .. (0,-6.283);
\draw[stab]
  (4.45,2.783) .. controls (3,2.983) and (0.5,1.283) .. (-0.2,0.783) .. controls (-1.5,-0.717) and (-1,5.783) .. (0,6.283);
\draw[stab]
  (4.45,8.023) .. controls (2.3,7.833) and (0.9,6.983) .. (0,6.283);

\draw[unstab]
  (0,0) .. controls (1.1,-0.32) and (2.5,-1.15) .. (4.45,-1.20);
\draw[unstab]
  (0,0) .. controls (-2.4,0.733) and (-1,4.383) .. (-1.1,4.483)
  .. controls (-0.8,6.583) and (-3,9.783) .. (-5.9,17.283);
\draw[unstab]
  (0,6.283) .. controls (1.1,5.963) and (2.5,5.133) .. (4.45,5.083);
\draw[unstab]
  (0,6.283) .. controls (-2.4,7.016) and (-1,10.666) .. (-1.1,10.766)
  .. controls (-0.8,12.866) and (-3,16.066) .. (-5.9,23.566);
\draw[unstab]
  (0,12.566) .. controls (1.1,12.246) and (2.5,11.416) .. (4.45,11.366);
\draw[unstab]
  (0,12.566) .. controls (-2.4,13.299) and (-1,16.949) .. (-1.1,17.049)
  .. controls (-0.8,19.149) and (-3,22.349) .. (-5.9,29.849);
\draw[unstab]
  (0,18.849) .. controls (1.1,18.529) and (2.5,17.699) .. (4.45,17.649);
\draw[unstab]
  (0,18.849) .. controls (-2.4,19.582) and (-1,23.232) .. (-1.1,23.332)
  .. controls (-0.8,25.432) and (-3,28.632) .. (-5.9,36.132);
\draw[unstab]
  (0,-6.283) .. controls (1.1,-6.603) and (2.5,-7.433) .. (4.45,-7.483);
\draw[unstab]
  (0,-6.283) .. controls (-2.4,-5.550) and (-1,-1.900) .. (-1.1,-1.800)
  .. controls (-0.8,0.300) and (-3,3.500) .. (-5.9,11.000);
\draw[unstab]
  (0,-12.566) .. controls (1.1,-12.886) and (2.5,-13.716) .. (4.45,-13.766);
\draw[unstab]
  (0,-12.566) .. controls (-2.4,-11.833) and (-1,-8.183) .. (-1.1,-8.083)
  .. controls (-0.8,-5.983) and (-3,-2.783) .. (-5.9,4.717);
\end{tikzpicture}
}
\end{minipage}

\vspace{0.6em}

\begin{center}
\begin{tikzpicture}[>=Latex]
\draw[stab] (0,0) -- (1.3,0);
\node[right] at (1.45,0) {stable};
\draw[unstab] (4.2,0) -- (5.5,0);
\node[right] at (5.65,0) {unstable};
\draw[broken] (8.6,0) -- (9.9,0);
\node[right] at (10.05,0) {broken line};
\end{tikzpicture}
\end{center}

\caption{
Stable/unstable manifolds and connecting trajectories for the Gamma model
\(
S(\phi)=e^\phi-\phi
\)
at
\(
\theta=\frac{\pi}{2}-\delta,\ \frac{\pi}{2},\ \frac{\pi}{2}+\delta
\).
The pink regions indicate rapid-decay directions for admissible contours. }
\label{fig:gamma-thimbles-three-phases}
\end{figure}

Let
\[
\J_n^{<},\quad \J_n^{>},
\quad
\Kk_n^{<},\quad \Kk_n^{>},
\quad n\in\Z,
\]
denote the Lefschetz thimbles and dual thimbles attached to
\[
p_n=2\pi i n
\]
for the two one-sided phases
\[
\theta=\frac{\pi}{2}-\delta
\quad\text{and}\quad
\theta=\frac{\pi}{2}+\delta,
\qquad 0<\delta\ll1.
\]

\begin{figure}[!htbp]
\centering
\tikzset{
  thimble/.style={
    black,
    line width=1.25pt,
    postaction={decorate},
    decoration={
      markings,
      mark=at position 0.60 with {\arrow{latex}}
    }
  },
  dual/.style={
    green!60!black,
    line width=1.25pt,
    postaction={decorate},
    decoration={
      markings,
      mark=at position 0.60 with {\arrow{latex}}
    }
  },
  gammaC/.style={
    red!75!black,
    line width=1.9pt,
    postaction={decorate},
    decoration={
      markings,
      mark=at position 0.56 with {\arrow{latex}}
    }
  }
}

\colorlet{sectorred}{red!18}
\begin{minipage}[t]{0.45\textwidth}
\centering
\[
\theta=\frac{\pi}{2}-\delta
\]
\resizebox{\linewidth}{!}{
\begin{tikzpicture}[x=0.92cm,y=0.44cm,>=Latex]

  \def\xL{-5.8}
  \def\xR{4.7}
  \def\T{6.283}
  \def\th{1.20}
  \def\hp{1.571}

  \clip (\xL,-10.2) rectangle (\xR,18.7);

\draw[->, thin] (\xL,0) -- (\xR,0) node[right] {$x$};
\draw[->, thin] (0,-10.2) -- (0,18.7) node[above] {$y$};

\foreach \k in {-1,0,1,2} {
    \pgfmathsetmacro{\yc}{\th+\k*\T}
    \fill[sectorred] (\xR-0.85,\yc-\hp) rectangle (\xR,\yc+\hp);
  }
\fill[sectorred] (-8.6,-10.9)rectangle (0,-7.9);
\filldraw[black] (0,-6.283) circle (1.7pt);
\filldraw[black] (0,0) circle (1.7pt);
\filldraw[black] (0,6.283) circle (1.7pt);
\filldraw[black] (0,12.566) circle (1.7pt);

\node[below right=1pt] at (0,-6.283) {$p_{-1}$};
\node[below right=1pt] at (0,0) {$p_0$};
\node[below right=1pt] at (0,6.283) {$p_1$};
\node[below right=1pt] at (0,12.566) {$p_2$};
\node[red!75!black] at (3,2) {$\mathcal{C}$};
\node[] at (-2.9,-2.7) {$\J_1^<$};
\node[] at (-2.9,3.4) {$\J_2^<$};
\node[] at (-2.9,9.7) {$\J_3^<$};
\node[] at (-2.9,-8.4) {$\J_{0}^<$};
\node[green!60!black] at (-0.5,2.5) {$\Kk_{0}^<$};
\node[green!60!black] at (-0.5,9) {$\Kk_{1}^<$};
\node[green!60!black] at (-0.5,16) {$\Kk_{2}^<$};
\node[green!60!black] at (-0.5,-3) {$\Kk_{-1}^<$};

\draw[thimble]
  (0,-6.283) .. controls (1.1,-5.70) and (2.6,-5.00) .. (4.45,-5.05);
\draw[thimble]
  (0,0) .. controls (1.1,0.32) and (2.5,1.15) .. (4.45,1.20);
\draw[thimble]
  (0,6.283) .. controls (1.1,6.60) and (2.5,7.42) .. (4.45,7.48);
\draw[thimble]
  (0,12.566) .. controls (1.1,12.88) and (2.5,13.70) .. (4.45,13.76);
\draw[thimble]
    (-5.9,-11) .. controls (-3,-3.5) and (-0.8,-0.3) .. (-1.1,1.8).. controls (-1,1.9) and (-2.4,5.55) .. (0,6.283);
\draw[thimble]
    (-5.9,-11+6.283) .. controls (-3,-3.5+6.283) and (-0.8,-0.3+6.283) .. (-1.1,1.8+6.283).. controls (-1,1.9+6.283) and (-2.4,5.55+6.283) .. (0,6.283+6.283);
\draw[thimble]
    (-5.9,-11-6.283) .. controls (-3,-3.5-6.283) and (-0.8,-0.3-6.283) .. (-1.1,1.8-6.283).. controls (-1,1.9-6.283) and (-2.4,5.55-6.283) .. (0,6.283-6.283);
\draw[thimble]
    (-5.9,-11-6.283-6.283) .. controls (-3,-3.5-6.283-6.283) and (-0.8,-0.3-6.283-6.283) .. (-1.1,1.8-6.283-6.283).. controls (-1,1.9-6.283-6.283) and (-2.4,5.55-6.283-6.283) .. (0,6.283-6.283-6.283);
    \draw[thimble]
    (-5.9,-11+6.283+6.283) .. controls (-3,-3.5+6.283+6.283) and (-0.8,-0.3+6.283+6.283) .. (-1.1,1.8+6.283+6.283).. controls (-1,1.9+6.283+6.283) and (-2.4,5.55+6.283+6.283) .. (0,6.283+6.283+6.283);

\draw[dual]
    (0,0) .. controls (0.9,-0.70) and (2.3,-1.55) .. (4.45,-1.74);
\draw[dual]
    (0,0-12.566) .. controls (0.9,-0.70-12.566) and (2.3,-1.55-12.566) .. (4.45,-1.74-12.566);
\draw[dual]
  (0,6.283) .. controls (0.9,5.583) and (2.3,4.733) .. (4.45,4.543);
\draw[dual]
  (0,-6.283) .. controls (0.9,-6.983) and (2.3,-7.833) .. (4.45,-8.023);
\draw[dual]
  (4.45,3.5) .. controls (3,3.3) and (0.5,5) .. (-0.2,5.5)
  .. controls (-1.5,7) and (-1,0.5) .. (0,0);
\draw[dual]
  (4.45,-9.066) .. controls (3,-9.266) and (0.5,-7.566) .. (-0.2,-7.066)
  .. controls (-1.5,-5.566) and (-1,-12.066) .. (0,-12.566);
\draw[dual]
  (4.45,9.783) .. controls (3,9.583) and (0.5,11.283) .. (-0.2,11.783)
  .. controls (-1.5,13.283) and (-1,6.783) .. (0,6.283);
\draw[dual]
  (4.45,-2.783) .. controls (3,-2.983) and (0.5,-1.283) .. (-0.2,-0.783)
  .. controls (-1.5,0.717) and (-1,-5.783) .. (0,-6.283);
\draw[dual]
  (4.45,16.066) .. controls (3,15.866) and (0.5,17.566) .. (-0.2,18.066)
  .. controls (-1.5,19.566) and (-1,13.066) .. (0,12.566);
\draw[dual]
  (0,12.566) .. controls (0.9,11.866) and (2.3,11.016) .. (4.45,10.826);
 
 \draw[gammaC]
    (-5.3,-11.0) .. controls (-3.9,-8.2) and (-1.4,-1.25) .. (0,0)
                 .. controls (1.1,0.32) and (2.5,1.15) .. (4.45,1.20);
\end{tikzpicture}
}
\end{minipage}
\hfill
\begin{minipage}[t]{0.45\textwidth}
\centering
\[
\theta=\frac{\pi}{2}+\delta
\]
\resizebox{\linewidth}{!}{
\begin{tikzpicture}[x=0.92cm,y=0.42cm,>=Latex]

  \def\xL{-5.8}
  \def\xR{4.7}
  \def\T{6.283}
  \def\th{1.921}
  \def\hp{1.571}

\clip (\xL,-12.4) rectangle (\xR,18.7);
\fill[sectorred] (-6,-12)rectangle (0,-10);
\draw[->, thin] (\xL,0) -- (\xR,0) node[right] {$x$};
\draw[->, thin] (0,-12.4) -- (0,18.7) node[above] {$y$};

\foreach \k in {-2,-1,0,1,2} {
    \pgfmathsetmacro{\yc}{\th+\k*\T}
    \fill[sectorred] (\xR-0.85,\yc-\hp) rectangle (\xR,\yc+\hp);
  }

\filldraw[black] (0,-6.283) circle (1.7pt);
\filldraw[black] (0,0) circle (1.7pt);
\filldraw[black] (0,6.283) circle (1.7pt);
\filldraw[black] (0,12.566) circle (1.7pt);

\node[below right=1pt] at (0,-6.283) {$p_{-1}$};
\node[below right=1pt] at (0,0) {$p_0$};
\node[below right=1pt] at (0,6.283) {$p_1$};
\node[below right=1pt] at (0,12.566) {$p_2$};

\draw[gammaC]
    (-5.3,-11.0) .. controls (-3.9,-8.2) and (-1.4,-1.25) .. (0,0)
                 .. controls (1.1,0.32) and (2.5,1.15) .. (4.45,1.20);
\draw[thimble]
  (4.45,-3.5+18.848) .. controls (3,-3.3+18.848) and (0.5,-5+18.848) .. (-0.2,-5.5+18.848) .. controls (-1.5,-7+18.848) and (-1,-0.5+18.848) .. (0,0+18.848);
\draw[thimble]
  (4.45,-3.5) .. controls (3,-3.3) and (0.5,-5) .. (-0.2,-5.5) .. controls (-1.5,-7) and (-1,-0.5) .. (0,0);
\draw[thimble]
  (4.45,-22.349) .. controls (3,-22.149) and (0.5,-23.849) .. (-0.2,-24.349) .. controls (-1.5,-25.849) and (-1,-19.349) .. (0,-18.849);
\draw[thimble]
  (4.45,-16.066) .. controls (3,-15.866) and (0.5,-17.566) .. (-0.2,-18.066) .. controls (-1.5,-19.566) and (-1,-13.066) .. (0,-12.566);
\draw[thimble]
  (4.45,9.066) .. controls (3,9.266) and (0.5,7.566) .. (-0.2,7.066) .. controls (-1.5,5.566) and (-1,12.066) .. (0,12.566);
\draw[thimble]
  (4.45,-9.783) .. controls (3,-9.583) and (0.5,-11.283) .. (-0.2,-11.783) .. controls (-1.5,-13.283) and (-1,-6.783) .. (0,-6.283);
\draw[thimble]
  (4.45,2.783) .. controls (3,2.983) and (0.5,1.283) .. (-0.2,0.783) .. controls (-1.5,-0.717) and (-1,5.783) .. (0,6.283);
\draw[thimble]
  (0,18.848) .. controls (0.9,19.548) and (2.3,20.398) .. (4.45,20.588);
\draw[thimble]
  (0,0) .. controls (0.9,0.70) and (2.3,1.55) .. (4.45,1.74);
\draw[thimble]
  (0,-18.849) .. controls (0.9,-18.149) and (2.3,-17.299) .. (4.45,-17.109);
\draw[thimble]
  (0,-12.566) .. controls (0.9,-11.866) and (2.3,-11.016) .. (4.45,-10.826);
\draw[thimble]
  (0,12.566) .. controls (0.9,13.266) and (2.3,14.116) .. (4.45,14.306);
\draw[thimble]
  (0,-6.283) .. controls (0.9,-5.583) and (2.3,-4.733) .. (4.45,-4.543);
\draw[thimble]
  (0,6.283) .. controls (0.9,6.983) and (2.3,7.833) .. (4.45,8.023);

\draw[dual]
  (0,0) .. controls (1.1,-0.32) and (2.5,-1.15) .. (4.45,-1.20);
\draw[dual]
  (0,6.283) .. controls (1.1,5.963) and (2.5,5.133) .. (4.45,5.083);
\draw[dual]
  (0,12.566) .. controls (1.1,12.246) and (2.5,11.416) .. (4.45,11.366);
\draw[dual]
  (0,18.849) .. controls (1.1,18.529) and (2.5,17.699) .. (4.45,17.649);
\draw[dual]
  (0,-6.283) .. controls (1.1,-6.603) and (2.5,-7.433) .. (4.45,-7.483);
\draw[dual]
  (0,-12.566) .. controls (1.1,-12.886) and (2.5,-13.716) .. (4.45,-13.766);
\draw[dual]
  (-5.9,17.283) .. controls (-3,9.783) and (-0.8,6.583) .. (-1.1,4.483)
  .. controls (-1,4.383) and (-2.4,0.733) .. (0,0);
\draw[dual]
  (-5.9,23.566) .. controls (-3,16.066) and (-0.8,12.866) .. (-1.1,10.766)
  .. controls (-1,10.666) and (-2.4,7.016) .. (0,6.283);
\draw[dual]
  (-5.9,29.849) .. controls (-3,22.349) and (-0.8,19.149) .. (-1.1,17.049)
  .. controls (-1,16.949) and (-2.4,13.299) .. (0,12.566);
\draw[dual]
  (-5.9,36.132) .. controls (-3,28.632) and (-0.8,25.432) .. (-1.1,23.332)
  .. controls (-1,23.232) and (-2.4,19.582) .. (0,18.849);
\draw[dual]
  (-5.9,11.000) .. controls (-3,3.500) and (-0.8,0.300) .. (-1.1,-1.800)
  .. controls (-1,-1.900) and (-2.4,-5.550) .. (0,-6.283);
\draw[dual]
  (-5.9,4.717) .. controls (-3,-2.783) and (-0.8,-5.983) .. (-1.1,-8.083)
  .. controls (-1,-8.183) and (-2.4,-11.833) .. (0,-12.566);

\node[red!75!black] at (-4.9,-8.2) {$\mathcal{C}$};
\node[] at (-0.5,10) {$\J_{2}^>$};
\node[] at (-0.5,4) {$\J_{1}^>$};
\node[] at (-0.5,-2.5) {$\J_{0}^>$};
\node[] at (-0.5,-9) {$\J_{-1}^>$};

\node[green!60!black] at (-1.8,18) {$\Kk_{2}^>$};
\node[green!60!black] at (-4.2,18) {$\Kk_{1}^>$};
\node[green!60!black] at (-5.5,14.5) {$\Kk_{0}^>$};
\node[green!60!black] at (-5.5,8.5) {$\Kk_{-1}^>$};
\node[green!60!black] at (-5.5,2.2) {$\Kk_{-2}^>$};
\end{tikzpicture}
}
\end{minipage}

\caption{
Thimble pictures for the Gamma model at
\(
\theta=\frac{\pi}{2}\mp\delta
\).
Black curves are thimbles \(\J_n^\theta\), green curves are dual thimbles \(\Kk_n^\theta\), and red is the contour \(\mathcal{C}\) defining the Gamma function. For \(\theta=\frac{\pi}{2}-\delta\), one has \([\mathcal{C}]=[\J_0^<]\). For \(\theta=\frac{\pi}{2}+\delta\), \([\mathcal{C}]\) becomes \(\sum_{n\le0}[\J_n^>]\).
}
\label{fig:gamma-thimbles-with-Gamma}
\end{figure}

See Figure \ref{fig:gamma-thimbles-with-Gamma}.  The figure gives the corresponding change of thimble basis:
\begin{equation}\label{eq:gamma-infinite-thimble-relation}
[\J_n^{<}]
=
\sum_{m\le n}[\J_m^{>}],
\qquad n\in\Z.
\end{equation}
The inverse relation is the finite-difference formula
\begin{equation}\label{eq:gamma-finite-difference-relation}
[\J_n^{>}]
=
[\J_n^{<}]-[\J_{n-1}^{<}].
\end{equation}
Thus the same wall-crossing can be expressed either as the infinite triangular expansion
\eqref{eq:gamma-infinite-thimble-relation} or as the nearest-neighbor relation
\eqref{eq:gamma-finite-difference-relation}.

Following the general rule explained in Section~\ref{sec:general}, direct connecting
trajectories should be read from the elementary, or nearest-neighbor, part
of the Stokes matrix.  In the Gamma model, the finite-difference relation
\eqref{eq:gamma-finite-difference-relation} isolates precisely this
nearest-neighbor part: with the present ordering, it gives one elementary
entry coupling \(\J_{n-1}\) to \(\J_n\).  Hence the corresponding direct
Picard--Lefschetz trajectory is
\[
p_{n-1}\longrightarrow p_n,
\]
with signed coefficient \(1\).  This agrees with
Proposition~\ref{prop:gamma-neighboring-trajectories}.

The full triangular relation \eqref{eq:gamma-infinite-thimble-relation}
contains more information, but its non-nearest-neighbor entries should be
interpreted as broken-line contributions.  For example, on a finite block
\((\J_m,\J_{m+1},\ldots,\J_n)\), the full Stokes matrix has the form
\[
\begin{pmatrix}
1 & 1 & 1 & \cdots & 1\\
0 & 1 & 1 & \cdots & 1\\
0 & 0 & 1 & \cdots & 1\\
\vdots & \vdots & \vdots & \ddots & \vdots\\
0 & 0 & 0 & \cdots & 1
\end{pmatrix}.
\]
The nearest-neighbor entries of this matrix are the elementary coefficients, and hence
encode the direct trajectories
\[
p_r\longrightarrow p_{r+1}.
\]
The entries farther above the diagonal are obtained by multiplying these
elementary nearest-neighbor matrices.  Thus the entry relating \(\J_m\) to
\(\J_n\), with \(m<n\), records the broken chain
\[
p_m\longrightarrow p_{m+1}\longrightarrow\cdots\longrightarrow p_n,
\]
not a new direct trajectory from \(p_m\) to \(p_n\).  This is the
broken-line interpretation used in \cite[Appendix~C]{HarlowMaltzWitten2011}.

\smallskip
\subsection{Resurgence and the Stokes matrix}

We now compute the resurgent data for the Gamma model.  Since the Stokes phase under consideration is $\frac{\pi}{2}$, we rotate the action by this phase and set
\[
\mathcal A(\phi):=e^{-i\pi/2}S(\phi)=-iS(\phi).
\]
We study the integrals
\[
I_n^{<}(\hbar):=\int_{\J_n^{<}} e^{-\mathcal A(\phi)/\hbar}\,\dd\phi,
\qquad
I_n^{>}(\hbar):=\int_{\J_n^{>}} e^{-\mathcal A(\phi)/\hbar}\,\dd\phi,
\]
where now \(\hbar>0\) is real and small.  The critical points are still
\[
p_n=2\pi i n,
\qquad n\in\Z,
\]
and their rotated critical values are
\[
A_n:=\mathcal A(p_n)=-iS(p_n)=-i-2\pi n.
\]

Now we show that, as \(\hbar\to0^+\),
\begin{equation*}\label{eq:gamma-formal-saddle-expansion}
I_n^{\lessgtr}(\hbar)
\sim
\widetilde I_n(\hbar)
:=
e^{-A_n/\hbar}\,
e^{i\pi/4}\sqrt{2\pi\hbar}\,
\widetilde\phi_\Gamma(\hbar),
\qquad
\widetilde\phi_\Gamma(\hbar)
=
\widetilde\phi_{\mathrm{St}}(i\hbar).
\end{equation*}
Here
\begin{equation*}\label{eq:gamma-stirling-series-closed}
\widetilde\phi_{\mathrm{St}}(\hbar)
=
\exp\left(
\sum_{k\ge1}
\frac{B_{2k}}{2k(2k-1)}\hbar^{2k-1}
\right).
\end{equation*}

Indeed, for \(n=0\), the reduced asymptotic series is fixed by the classical Gamma-function asymptotics. The classical Stirling expansion, written directly in the small parameter \(\hbar\), is
\[
\Gamma(1/\hbar)
\sim
\sqrt{2\pi}\,
\hbar^{\frac12-\frac1\hbar}
e^{-1/\hbar}
\widetilde\phi_{\mathrm{St}}(\hbar).
\]
Therefore by \eqref{eq:gammaisgamma},
\[
I_{\R}(\hbar)
\sim
e^{-1/\hbar}\sqrt{2\pi\hbar}\,
\widetilde\phi_{\mathrm{St}}(\hbar).
\]
We have
\begin{equation*}
I_0^{\lessgtr}(\hbar)
\sim
\widetilde I_0(\hbar)
=
e^{-1/i\hbar}\sqrt{2\pi i\hbar}\,
\widetilde\phi_{\mathrm{St}}(i\hbar)
=
e^{-A_0/\hbar}\,
e^{i\pi/4}\sqrt{2\pi\hbar}\,
\widetilde\phi_\Gamma(\hbar).
\end{equation*}
For general \(n\), the saddle-point method and the vertical translation symmetry show that the series (except the exponential part) does not change from one saddle to another.  Indeed, the translation \(\phi\mapsto \phi+2\pi i n\) sends the saddle \(p_0\) to \(p_n\), and changes \(\mathcal A\) only by the constant \(-2\pi n\).  Hence it only changes the exponential factor \(e^{-A_n/\hbar}\), while the Gaussian phase and the reduced series remain the same:
\[
I_n^{\lessgtr}(\hbar)
\sim
\widetilde I_n(\hbar)
=
e^{-A_n/\hbar}\,
e^{i\pi/4}\sqrt{2\pi\hbar}\,
\widetilde\phi_\Gamma(\hbar).
\]
By the analysis in 
\cite[Theorem~1 and the paragraph between (3.7) and (3.8)]{Sauzin2021Gamma}, we have
\[
e^{i\pi/4}\sqrt{2\pi\hbar}\,
\widetilde\phi_\Gamma(\hbar)
\in
\widetilde{\mathcal R}^{\mathrm{int}}.
\]

Accordingly, the Gamma formal objects form a completed \(2\pi\mathbb Z\)-graded vector space
\[
\mathcal M_\Gamma
:=
\widehat{\bigoplus}_{n\in\mathbb Z}\C\,\widetilde I_n,
\qquad
\deg \widetilde I_n=A_n
\quad
(\text{equivalently, up to a common shift, } \deg \widetilde I_n=-2\pi n).
\]
Then we use the known alien calculus for the Stirling series in the book \cite{MitschiSauzin2016} (Formula (6.99)):
\[
\Delta^+_{2\pi i m}\widetilde\phi_{\mathrm{St}}(\hbar)
=
\begin{cases}
\widetilde\phi_{\mathrm{St}}(\hbar), & m=-1,1,2,3,\ldots,\\
0, & \text{otherwise}.
\end{cases}
\]
Since
\[
\widetilde\phi_\Gamma(\hbar)=\widetilde\phi_{\mathrm{St}}(i\hbar),
\]
the Borel singularities are rotated from \(2\pi i m\) to \(2\pi m\). Hence
\begin{equation}\label{eq:gamma-alien-reduced}
\Delta^+_{2\pi m}\widetilde\phi_\Gamma(\hbar)
=
\begin{cases}
\widetilde\phi_\Gamma(\hbar), & m=-1,1,2,3,\ldots,\\
0, & \text{otherwise}.
\end{cases}
\end{equation}

Now recall that
\[
\widetilde I_n(\hbar)
=
e^{-A_n/\hbar}
e^{i\pi/4}\sqrt{2\pi\hbar}\,
\widetilde\phi_\Gamma(\hbar),
\qquad
A_n=-i-2\pi n.
\]
Using \(A_{n-m}=A_n+2\pi m\), \eqref{eq:gamma-alien-reduced} gives
\begin{equation}\label{eq:gamma-dot-alien-full}
\dot\Delta^+_{2\pi m}\widetilde I_n(\hbar)
=
e^{-2\pi m/\hbar}\Delta^+_{2\pi m}\widetilde I_n(\hbar)
=
\begin{cases}
\widetilde I_{n-m}(\hbar), & m=-1,1,2,3,\ldots,\\
0, & \text{otherwise}.
\end{cases}
\end{equation}

The pointed alien operator \(\dot\Delta_{2\pi k}^+\) is
homogeneous of degree \(2\pi k\). For the wall-crossing at \(\theta=\frac{\pi}{2}\), the relevant Stokes ray in the rotated Borel plane is the positive real ray.  Therefore only the singularities \(2\pi m\) with \(m\ge1\) contribute. We obtain
\begin{equation*}\label{eq:gamma-stokes-auto-positive}
\mathfrak S^+\widetilde I_n
=
\widetilde I_n+\sum_{m\ge1}\widetilde I_{n-m}
=
\sum_{\ell\le n}\widetilde I_\ell.
\end{equation*}
Equivalently, if \(T\widetilde I_n=\widetilde I_{n-1}\), then
\[
\mathfrak S^+=1+T+T^2+\cdots .
\]
\begin{rmk}
Taking the logarithm of the Stokes automorphism gives
\[
\log\mathfrak S^+
=
-\log(1-T)
=
\sum_{k\ge1}\frac{T^k}{k}.\]
By \eqref{eq:dotDelta-log-Stokes},
$\log\mathfrak S^+=\sum_{k\ge1}\dot\Delta_{2\pi k}$, we have $\dot\Delta_{2\pi k}=\frac1kT^k$, and
\[
\dot\Delta_{2\pi k}\widetilde I_n
=
\frac1k\widetilde I_{n-k},
\qquad
\Delta_{2\pi k}\widetilde I_n
=
\frac1k\widetilde I_n.
\]
This is the computation of alien derivation $\Delta_{2\pi k}$ in Gamma model.
\end{rmk}

The inverse Stokes automorphism is
\begin{equation}\label{eq:gamma-stokes-inverse}
\mathfrak S^-=(\mathfrak S^+)^{-1}=1-T.
\end{equation}
Equivalently,
\begin{equation*}\label{eq:gamma-stokes-minus-formal}
\mathfrak S^-\widetilde I_n
=
\widetilde I_n-\widetilde I_{n-1}.
\end{equation*}
Thus the two Stokes automorphisms act on the formal objects by
\[
\mathfrak S^+\widetilde I_n
=
\sum_{m\le n}\widetilde I_m,
\qquad
\mathfrak S^-\widetilde I_n
=
\widetilde I_n-\widetilde I_{n-1}.
\]

By the general relation between lateral Borel sums and Stokes automorphisms,
\[
\mathcal S^{<}=\mathcal S^{>}\circ\mathfrak S^+,
\qquad
\mathcal S^{>}=\mathcal S^{<}\circ\mathfrak S^-,
\]
we obtain
\begin{equation*}\label{eq:gamma-integral-stokes-plus}
I_n^{<}
=
\sum_{m\le n}I_m^{>},
\end{equation*}
and conversely
\begin{equation*}\label{eq:gamma-integral-stokes-minus}
I_n^{>}
=
I_n^{<}-I_{n-1}^{<}.
\end{equation*}

Equivalently, if the formal series are ordered by decreasing index,
\[
\cdots,\widetilde I_2,\widetilde I_1,\widetilde I_0,\widetilde I_{-1},\cdots,
\]
then \(\mathfrak S^+\) is represented by the upper triangular matrix
\[
R_+
=
\begin{pmatrix}
\ddots &        &        &        &        \\
       & 1      & 1      & 1      & \cdots \\
       & 0      & 1      & 1      & \cdots \\
       & 0      & 0      & 1      & \cdots \\
       &        &        &        & \ddots
\end{pmatrix},
\]
while \(\mathfrak S^-\) is represented by the inverse upper bidiagonal matrix
\[
R_-
=
\begin{pmatrix}
\ddots &        &        &        &        \\
       & 1      & -1     & 0      & \cdots \\
       & 0      & 1      & -1     & \cdots \\
       & 0      & 0      & 1      & \cdots \\
       &        &        &        & \ddots
\end{pmatrix}.
\]
By Proposition~\ref{prop:thimble-lateral-sum}, the matrix relations obtained
from the resurgent Stokes automorphisms translate directly into the
wall-crossing relations for the Lefschetz thimbles.  Thus
\begin{equation*}\label{eq:gamma-thimble-stokes-plus}
[\J_n^{<}]
=
\sum_{m\le n}[\J_m^{>}],
\qquad \text{and} \qquad 
[\J_n^{>}]
=
[\J_n^{<}]-[\J_{n-1}^{<}].
\end{equation*}
These are precisely the two Picard--Lefschetz wall-crossing formulas
obtained above in \eqref{eq:gamma-infinite-thimble-relation} and
\eqref{eq:gamma-finite-difference-relation}.  Hence the resurgent Stokes
automorphisms recover the same jump formula.

\smallskip
\subsection{Hopf-algebraic interpretation}
\label{subsec:gamma-Hopf-interpretation}

We finally interpret the Gamma computation in the Hopf-algebraic language of
Subsection~\ref{subsec:alien-Hopf-plus} and explain Table \ref{table:gamma-hopf-dictionary}.  In the normalization used above, the
relevant Stokes ray is
\[
d=\mathbb R_{>0},
\]
and the corresponding semigroup is
\[
\Lambda_d=2\pi\mathbb N^\ast.
\]

By \eqref{eq:gamma-dot-alien-full}, the pointed alien operators act on the
formal objects in Gamma model by
\begin{equation}\label{eq:gamma-dotDelta-plus-shift-Hopf}
\dot\Delta_{2\pi k}^{+}\widetilde I_n
=
\widetilde I_{n-k},
\qquad k\ge1.
\end{equation}
Therefore the product in the Hopf algebra, which is operator composition,
becomes the composition of these shifts.  For \(a,b\in\mathbb N^\ast\),
\[
\dot\Delta_{2\pi a}^{+}\dot\Delta_{2\pi b}^{+}\widetilde I_n
=
\dot\Delta_{2\pi a}^{+}\widetilde I_{n-b}
=
\widetilde I_{n-a-b}
=
\dot\Delta_{2\pi(a+b)}^{+}\widetilde I_n.
\]
In particular,
\[
\dot\Delta_{4\pi}^{+}\widetilde I_n
=
(\dot\Delta_{2\pi}^{+})^2\widetilde I_n
=
\widetilde I_{n-2}.
\]
Thus the Stokes-matrix entry which shifts the sector by two steps is the
product of two nearest-neighbor shifts. On the Picard--Lefschetz
side, this is interpreted as the broken chain
\[
p_{n-2}\longrightarrow p_{n-1}\longrightarrow p_n,
\]
rather than as a new primitive trajectory from \(p_{n-2}\) to \(p_n\).

The coproduct records a different operation.  For example,
\[
\Delta_{\mathcal H}(\dot\Delta_{4\pi}^{+})
=
\dot\Delta_{4\pi}^{+}\otimes1
+
1\otimes\dot\Delta_{4\pi}^{+}
+
\dot\Delta_{2\pi}^{+}\otimes\dot\Delta_{2\pi}^{+}.
\]
This should be interpreted through product thimbles.  If two Gamma thimble
integrals are multiplied, the corresponding cycle is the product thimble
\(\J_a\times \J_b\).  The total action difference \(4\pi\) may then be split
between the two factors as
\[
(4\pi,0),\qquad (0,4\pi),\qquad (2\pi,2\pi).
\]
The three terms correspond respectively to a jump in the first factor, a
jump in the second factor, and simultaneous nearest-neighbor jumps in both
factors.  Thus the coproduct encodes the product rule for Stokes
automorphisms on product thimbles, rather than another broken chain in the
same one-dimensional Gamma system.

Finally, the antipode gives the inverse Stokes transformation already computed
in \eqref{eq:gamma-stokes-inverse}:
\[
\mathsf S_{\mathcal H}(\mathfrak S^+)
=
(\mathfrak S^+)^{-1}
=
\mathfrak S^-.
\]

\end{document}